\newcommand{\globalcolor}[1]{%
  \color{#1}\global\let\default@color\current@color
}
\newtheorem{theorem}{Theorem}
\numberwithin{theorem}{section}
\newtheorem{lemma}[theorem]{Lemma}
\newtheorem{prop}[theorem]{Proposition}
\newtheorem{cor}[theorem]{Corollary}
\newtheorem{question}[theorem]{Question}
\newtheorem*{theorem*}{Theorem}
\newtheorem*{lemma*}{Lemma}
\newtheorem*{prop*}{Proposition}
\newtheorem*{cor*}{Corollary}
\newtheorem*{obs*}{Observation}
\newtheorem*{conjecture*}{Conjecture}
\newtheorem*{question*}{Question}
\newtheorem*{problem*}{Problem}
\theoremstyle{definition}
\newtheorem{definition}[theorem]{Definition}
\newtheorem{claim}[theorem]{Claim}
\newtheorem{remark}[theorem]{Remark}
\newtheorem{fact}[theorem]{Fact}
\newtheorem{example}[theorem]{Example}
\newcommand{\tinyspace}{\mspace{1mu}}
\newcommand{\tr}{\operatorname{Tr}}
\newcommand{\im}{\operatorname{Im}}
\renewcommand{\ker}{\operatorname{Ker}}
\newcommand{\ip}[2]{\langle #1 , #2\rangle}
\newcommand{\ceil}[1]{\lceil #1 \rceil}
\newcommand{\floor}[1]{\lfloor #1 \rfloor}
\newcommand{\norm}[1]{\lVert\tinyspace #1 \tinyspace\rVert}
\newcommand{\I}{\mathds{1}}
\newcommand{\setft}[1]{\mathrm{#1}}
\newcommand{\Density}{\setft{D}}
\newcommand{\Pos}{\setft{Pos}}
\newcommand{\Unitary}{\setft{U}}
\newcommand{\Herm}{\setft{Herm}}
\newcommand{\Sep}{\setft{Sep}}
\newcommand{\Bisep}{\setft{Bisep}}
\newcommand{\density}[1]{\setft{D}\left(#1\right)}
\newcommand{\complex}{\mathbb{C}}
\renewcommand{\natural}{\mathbb{N}}
\newcommand{\integer}{\mathbb{Z}}
\newcommand\X{\mathcal{X}}
\newcommand\Y{\mathcal{Y}}
\newcommand\A{\mathcal{A}}
\newcommand\B{\mathcal{B}}
\newcommand\U{\mathcal{U}}
\newcommand\C{\mathcal{C}}
\newcommand\D{\mathcal{D}}
\renewcommand\P{\setft{P}}
\renewcommand\H{\mathcal{H}}
\newcommand\J{\mathcal{J}}
\newcommand{\End}{\setft{End}}
\newcommand{\Hom}{\setft{Hom}}
\DeclareMathOperator{\spn}{span}
\newcommand{\w}{{\wedge}}
\newcommand{\lm}{\operatorname{lm}}
\newcommand{\HS}{\setft{HS}}
\newcommand{\ootimes}{ \otimes \cdots \otimes }
\newcommand{\textbigotimes}{{\textstyle \bigotimes}}
\newcommand{\textbigoplus}{{\textstyle \bigoplus}}
\newcommand{\frakS}{\mathfrak{S}}
\newcommand{\bfalpha}{\boldsymbol{\alpha}}
\newcommand{\bfbeta}{\boldsymbol{\beta}}
\newcommand{\bfgamma}{\boldsymbol{\gamma}}
\renewcommand{\emptyset}{\varnothing}
\newcommand{\eps}{\varepsilon}
\newcommand{\under}[1]{\underline{#1}}
\def\ba#1\ea{\begin{align}#1\end{align}}
\newcommand{\varppsi}{\psi\psi^*}
\newcommand{\varpphi}{\phi\phi^*}
\newcommand{\varpp}[2]{(#1_{#2}^{} #1_{#2}^*)}
\newcommand{\varp}[2]{#1_{#2}^{} #1_{#2}^*}
\newcommand{\bull}{\bullet}
\newcommand{\reg}{\setft{reg}}
\newcommand{\hf}{\setft{HF}}
\newcommand{\hs}{\setft{HS}}
\newcommand{\conv}{\setft{conv}}
\newcommand{\MPS}{\setft{MPS}}
\newcommand{\uc}{\under{c}}
\newcommand{\ud}{\under{d}}
\newcommand{\comment}[1]{}
\newcommand{\qip}[1]{}
\begin{document}

\emergencystretch 3em
\title{\bf
$\X$-arability of mixed quantum states
}

\author[$*\dagger$]{Harm Derksen}
 \author[$\ddagger$]{Nathaniel Johnston \thanks{emails: ha.derksen@northeastern.edu, njohnston@mta.ca, benjamin.lovitz@gmail.com}}
 \author[$* \dagger$]{Benjamin Lovitz}
  \affil[$\dagger$]{Department of Mathematics, Northeastern University, Boston, Massachusetts, USA}
  \affil[$\ddagger$]{Department of Mathematics and Computer Science, Mount Allison University,  \authorcr Sackville, New Brunswick, Canada}

\date{\today}

\maketitle
\begin{abstract}
The problem of determining when entanglement is present in a quantum system is one of the most active areas of research in quantum physics. Depending on the setting at hand, different notions of entanglement (or lack thereof) become relevant. Examples include separability (of bosons, fermions, and distinguishable particles), Schmidt number, biseparability, entanglement depth, and bond dimension. In this work, we propose and study a unified notion of separability, which we call~\textit{$\X$-arability}, that captures a wide range of applications including these. For a subset (more specifically, an algebraic variety) of pure states $\X$, we say that a mixed quantum state is \textit{$\X$-arable} if it lies in the convex hull of $\X$.
We develop unified tools and provable guarantees for $\X$-arability, which already give new results for the standard separability problem. Our results include:
\begin{itemize}
\item An \textit{$\X$-tensions} hierarchy of semidefinite programs for $\X$-arability (generalizing the symmetric extensions hierarchy for separability), and a new \textit{de Finetti theorem} for fermionic separability.
\item A hierarchy of eigencomputations for optimizing a Hermitian operator over $\X$, with applications to \textit{$\X$-tanglement witnesses} and polynomial optimization.
\item A hierarchy of linear systems for the \textit{$\X$-tangled subspace} problem, with improved polynomial time guarantees even for the standard entangled subspace problem, in both the generic and worst case settings.
\end{itemize}
\end{abstract}
\newpage
\tableofcontents
\newpage
\section{Introduction}\label{sec:intro}

Let $\H=\H_1 \ootimes \H_m$ be a (finite dimensional, complex) Hilbert space over $m$ subsystems. Recall that a density operator (or \textit{mixed state}) $\rho \in \density{\H}$ is~\textit{separable} if it can be written as a probabilistic mixture of pure product states $\rho=\sum_{i} p_i^{} \; \varp{\psi}{i,1} \otimes \dots \otimes \varp{\psi}{i,m}$. Conversely, $\rho$ is \textit{entangled} if it is not separable. Quantum entanglement is one of the central features of modern physics, and the problem of determining when entanglement is present in a quantum system is one of its most active research areas \cite{GT09,HHH09}.

More generally, it is natural to ask if a state $\rho$ can be prepared as a probabilistic mixture of pure states lying in some other set $\X$, in which case we say that $\rho$ is \textit{$\X$-arable}. Conversely, we say that $\rho$ is \textit{$\X$-tangled} if it is not $\X$-arable. We work in the general setting when $\X$ is a \textit{(projective) variety}: the common zero locus of a set of homogeneous polynomials $f_1,\dots, f_p$
\ba\label{eq:x}
\X=\{\varppsi : f_1(\psi)=\cdots = f_p(\psi)=0 \quad \text{and}\quad \norm{\psi}=1\}.
\ea
We say that $f_1,\dots,f_p$ \textit{cut out} $\X$.
\begin{example}\label{ex:varieties} Examples of varieties include: (Pure) product states, bosonic product states, fermionic product states, biseparable states, $\ell$-separable states, $t$-producible states, states of bounded {Schmidt rank}, {matrix product states}, and {tree tensor network states}. 
\end{example}

$\X$-arability thus captures a wealth of applications that are central to quantum entanglement theory, including separability (of both distinguishable and indistinguishable particles)~\cite{HHH09,grabowski2011entanglement}, biseparability~\cite{seevinck2008partial}, $\ell$-separability~\cite{guhne2005multipartite,HHH09}, entanglement depth~\cite{lucke2014detecting}, and Schmidt number~\cite{sanpera2001schmidt}. This motivates the following:



\begin{question}
Can we develop unified tools and provable guarantees for the $\X$-arability problem?
\end{question}

Three pervasive tools for the separability problem are the \textit{symmetric extensions hierarchy}: A hierarchy of semidefinite programs for deciding if a state is entangled or separable~\cite{DPS04}; \textit{entanglement witnesses}: quantum observables that detect entanglement~\cite{GT09}; and \textit{entangled subspaces}: subspaces avoiding the set of pure product states~\cite{Par04,Bha06}. In answer to our question, we generalize these tools to the $\X$-arability setting, with provable guarantees.


%
%
%
%
%

\subsection{$\X$-tensions hierarchy for $\X$-arability}



We obtain the following semidefinite programming hierarchy for deciding if a state is $\X$-arable (see Section~\ref{sec:xarable}):


\begin{theorem}[$\X$-tensions hierarchy]\label{thm:intro_xtension}
Let $\rho \in \Density(\H)$ be a state. Then $\rho$ is $\X$-arable if and only if for all $k$ there exists $\sigma_k \in \Density(\H^{\otimes k})$ for which $\tr_{k-1}(\sigma_k)=\rho$ and $\im(\sigma_k) \subseteq \X^k$, where
\ba\label{eq:im}
\X^k:=\spn\{\psi^{\otimes k} : \varppsi \in \X\}.
\ea
\end{theorem}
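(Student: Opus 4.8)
The plan is to prove the two implications separately: the forward direction is essentially immediate, and the reverse direction is a quantum de Finetti argument that must be made to carry the extra constraint that the extensions be supported on $\X^k$. For the forward direction, given an $\X$-arable decomposition $\rho=\sum_i p_i\,\varp{\psi}{i}$ with each $\varp{\psi}{i}\in\X$ (so $\norm{\psi_i}=1$), I would take $\sigma_k:=\sum_i p_i\,(\psi_i^{\otimes k})(\psi_i^{\otimes k})^*$: this is a density operator on $\H^{\otimes k}$ whose image is spanned by the vectors $\psi_i^{\otimes k}\in\X^k$, and since $\tr(\psi_i\psi_i^*)=1$ we get $\tr_{k-1}(\sigma_k)=\sum_i p_i\,\psi_i\psi_i^*=\rho$.

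For the reverse direction I would first record two elementary facts about the subspaces $\X^k$: (i) $\X^k\subseteq\X^\ell\otimes\H^{\otimes(k-\ell)}$ for $\ell\le k$, immediate from $\psi^{\otimes k}=\psi^{\otimes\ell}\otimes\psi^{\otimes(k-\ell)}$, and (ii) $\X^k\subseteq\mathrm{Sym}^k(\H)$. Fact (i) shows that if $\im(\sigma_k)\subseteq\X^k$ then its reduced state $\tr_{k-\ell}(\sigma_k)$ has image inside $\X^\ell$, and fact (ii) shows that any state with image in $\X^k$ is permutation invariant, hence has all of its single-body marginals equal. Thus from the hypothesized family I would extract, for each $K$, the reduced states $\sigma_k^{(K)}:=\tr_{K-k}(\sigma_K)$ for $1\le k\le K$, which satisfy $\im(\sigma_k^{(K)})\subseteq\X^k$, $\tr_{k-1}(\sigma_k^{(K)})=\rho$, and the consistency relations $\tr_1(\sigma_{k+1}^{(K)})=\sigma_k^{(K)}$. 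Since each $\Density(\H^{\otimes k})$ is compact, a diagonal argument yields a subsequence of $K$'s along which $\sigma_k^{(K)}\to\tau_k$ for every $k$, and the limiting family $(\tau_k)_k$ is a consistent exchangeable family with $\tau_1=\rho$ and, since having image in a fixed subspace is a closed condition, $\im(\tau_k)\subseteq\X^k$.

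Next I would invoke the exact (infinite) quantum de Finetti theorem to write $\tau_k=\int\sigma^{\otimes k}\,d\mu(\sigma)$ for a Borel probability measure $\mu$ on $\Density(\H)$. Writing $\Pi_k$ for the orthogonal projector onto $\X^k$, the identity $\tr\big((\I-\Pi_k)\tau_k\big)=0$ together with nonnegativity of the integrand forces $\im(\sigma^{\otimes k})=(\im\sigma)^{\otimes k}\subseteq\X^k$ for $\mu$-almost every $\sigma$; intersecting over $k$, for $\mu$-a.e.\ $\sigma$ this holds for all $k$ at once. The remaining ingredient is an algebraic lemma: a unit vector $\psi$ with $\psi^{\otimes k}\in\X^k$ for all $k$ satisfies $\varppsi\in\X$. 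I would prove this by identifying the annihilator of $\X^k$ inside $\mathrm{Sym}^k(\H)^*$ with the degree-$k$ graded piece of the homogeneous ideal $I(\widehat\X)$ of the affine cone $\widehat\X$ over $\X$, so that $\psi^{\otimes k}\in\X^k$ for all $k$ says precisely that every homogeneous element of $I(\widehat\X)$ vanishes at $\psi$, whence $\psi\in\widehat\X$ and $\varppsi\in\X$. Applying the lemma to every unit vector of $\im\sigma$ shows that each $\sigma$ in the support of $\mu$ is itself $\X$-arable, so $\rho=\tau_1=\int\sigma\,d\mu(\sigma)$ is an average of $\X$-arable states; since the $\X$-arable states form the convex hull of the compact set $\X$ in a finite-dimensional space, hence a compact convex set, this yields that $\rho$ is $\X$-arable.

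I expect the hard part to be entirely the reverse direction, and within it two points to require care: first, transporting the ``image in $\X^k$'' condition through partial traces and through the limit (handled by (i), (ii) and closedness of the support condition); and second, the algebraic lemma, which is where the variety hypothesis is genuinely used — via the fact that $\widehat\X$ is cut out as a set by its full homogeneous ideal (finitely generated, by Noetherianity), so that vanishing at $\psi$ of every homogeneous element of $I(\widehat\X)$ certifies $\psi\in\widehat\X$. Working with the \emph{exact} de Finetti theorem keeps the measure-theoretic bookkeeping clean; alternatively one could apply a finite (approximate) de Finetti bound to each $\sigma_K$ and avoid the diagonal argument, at the cost of having to control the supports of the approximating measures near $\X$.
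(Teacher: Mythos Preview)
Your proof is correct, but the paper's argument is noticeably shorter because it exploits two simplifications you did not use.

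First, rather than building an infinite consistent family $(\tau_k)_k$ via a diagonal argument and invoking the exact de Finetti theorem over $\Density(\H)$, the paper fixes a single degree $d$ (an upper bound on the generating degree of $I(\X)$), takes $\tau_k:=\tr_{k-d}(\sigma_k)$, extracts a convergent subsequence $\tau_{k_j}\to\tau$ by compactness of $\Density(\H^{\otimes d})$ alone, and observes that $\tau$ inherits bosonic extensions to arbitrarily many copies. Second, because the extensions are supported on the \emph{symmetric} subspace, the paper applies the \emph{bosonic} de Finetti theorem to conclude directly that $\tau=\sum_i p_i(\psi_i\psi_i^*)^{\otimes d}$ is a mixture of \emph{pure} tensor powers; then $\im(\tau)\subseteq I(\X)_d^\perp$ forces each $\psi_i^{\otimes d}\in I(\X)_d^\perp$, i.e.\ $\psi_i\psi_i^*\in\X$, and $\rho=\tr_{d-1}(\tau)$ is $\X$-arable. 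This avoids both your diagonal argument and your algebraic lemma entirely: there is no need to reason about $(\im\sigma)^{\otimes k}$ for mixed $\sigma$, and no need to check the support condition at every $k$. What your route buys is that it never appeals to finite generation of $I(\X)$ and keeps the de Finetti step in its most general (mixed-state) form, at the cost of the extra bookkeeping you flagged; the paper's route buys brevity by front-loading the algebraic input (generation in degree $\le d$) and using the sharper bosonic de Finetti statement.
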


We call $\sigma_k$ an \textit{$\X$-tension} of $\rho$.  When $\X= \X_{\Sep}$ is the set of pure product states, this specializes to the symmetric extensions hierarchy of~\cite{DPS04} with all $m$ subsystems extended simultaneously. More generally, one can replace $\X^k$ with $I_k^{\perp}$, where $I$ is the ideal generated by any $f_1,\dots, f_p$ cutting out $\X$ (see Section~\ref{sec:intro_null} below). While it is easy to describe $\X^k$ (or $I_k^{\perp}$) abstractly as in~\eqref{eq:im}, it can be difficult to write down an explicit basis. We give explicit descriptions of $\X^k$ (or $I_k^{\perp}$) for all of the varieties in Example~\ref{ex:varieties}.

In the case of {fermionic separability} we prove a \textit{quantitative} version of this result, which we call a \textit{de Finetti theorem for fermionic separability} (see Section~\ref{sec:qdf}). The key ingredient is the observation that $\X^k$ forms an irreducible representation of the local unitary group in this case, allowing us to apply the results of~\cite{koenig2009most}. Related results are proven in~\cite{crismale2012finetti,kraus2013ground,krumnow2017fermionic}, although these use a mode-partitioned notion of fermionic separability that is closer to the distinguishable-particle setting. By contrast, our result applies to the particle-level notion of fermionic separability introduced in~\cite{grabowski2011entanglement} (see Example~\ref{ex:fermion}). These and other notions of entanglement in fermionic systems have found widespread applications~\cite{benatti2020entanglement}.

We illustrate the example of Schmidt number. Let $\H=\H_1 \otimes \H_2$ and let $\X_r \subseteq \P(\H)$ be the set of pure states of Schmidt rank at most $r$. The least $r$ for which $\rho$ is $\X_r$-arable is called the \textit{Schmidt number} of $\rho$. In the following, let $\Pi_{i,r+1}^{\w}$ be the projection onto the antisymmetric subspace of $\H_i^{\otimes r+1}$, and let $S^k(\H)\subseteq \H^{\otimes k}$ be the symmetric subspace.
\begin{cor}
$\rho\in \Density(\H)$ is $\X_r$-arable if and only if for all $k \geq r+1$ there exists $\sigma_k \in \Density(\H^{\otimes k})$ for which $\tr_{k-1}(\sigma_k)=\rho$ and $\im(\sigma_k) \subseteq \X_r^k$, where
\ba
\X_r^k = \ker(\Pi_{1,r+1}^{\w} \otimes \Pi_{2,r+1}^{\w} \otimes \I_{\H}^{\otimes k-r-1}) \cap S^k(\H).
\ea
\end{cor}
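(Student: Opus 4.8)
The plan is to apply Theorem~\ref{thm:intro_xtension} in the variant noted immediately after it --- with $\X_r^k$ replaced by $I_k^{\perp}$ for $I$ an ideal generated by explicit equations cutting out $\X_r$ --- and then to identify $I_k^{\perp}$ with the stated kernel when $k\ge r+1$. Fix product bases of $\H_1$ and $\H_2$, so a pure state $\varppsi$ has a coefficient matrix $C(\psi)$, and $\varppsi\in\X_r$ iff $\rank C(\psi)\le r$, i.e. iff all $(r{+}1)\times(r{+}1)$ minors of $C(\psi)$ vanish. Let $J$ be the ideal these minors generate (they are homogeneous of degree $r+1$), and for each $k$ identify a degree-$k$ form $g$ on $\H$ with the vector $v_g\in S^k(\H)$ determined by $g(\psi)=\bigip{v_g}{\psi^{\otimes k}}$; write $J_k\subseteq S^k(\H)$ for the image of the degree-$k$ part of $J$ and $J_k^{\perp}$ for its orthogonal complement inside $S^k(\H)$. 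The cited variant of Theorem~\ref{thm:intro_xtension} then says: $\rho$ is $\X_r$-arable iff for all $k$ there is $\sigma_k\in\Density(\H^{\otimes k})$ with $\tr_{k-1}(\sigma_k)=\rho$ and $\im(\sigma_k)\subseteq J_k^{\perp}$. Hence the corollary follows once we show: (a) $J_k^{\perp}=S^k(\H)$ for $k\le r$, and $J_k^{\perp}=\ker\bigl(\Pi_{1,r+1}^{\w}\otimes\Pi_{2,r+1}^{\w}\otimes\I_{\H}^{\otimes k-r-1}\bigr)\cap S^k(\H)$ for $k\ge r+1$; and (b) imposing the condition only for $k\ge r+1$ is equivalent to imposing it for all $k$.

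For (a), the range $k\le r$ is immediate since $J$ is generated in degree $r+1$, so $J_k=0$. For $k\ge r+1$, let $\Pi_{S^k(\H)}$ be the orthogonal projection of $\H^{\otimes k}$ onto $S^k(\H)$ and put $P:=\Pi_{1,r+1}^{\w}\otimes\Pi_{2,r+1}^{\w}\otimes\I_{\H}^{\otimes k-r-1}$, regarded as an operator on $\H^{\otimes k}$ via the reordering $\H^{\otimes k}\cong\H_1^{\otimes r+1}\otimes\H_2^{\otimes r+1}\otimes\H^{\otimes k-r-1}$. Since $P$ is an orthogonal projection, the orthogonal complement of $\ker(P)\cap S^k(\H)$ inside $S^k(\H)$ is $\Pi_{S^k(\H)}(\im P)$, so (a) reduces to the identity $J_k=\Pi_{S^k(\H)}(\im P)$. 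For $k=r+1$ I would compute $\bigip{e_I^{\w}\otimes f_L^{\w}}{\psi^{\otimes r+1}}$ for the elementary antisymmetric tensors $e_I^{\w}\in\bigwedge^{r+1}\H_1$, $f_L^{\w}\in\bigwedge^{r+1}\H_2$ attached to $(r{+}1)$-element index sets $I,L$: this is a nonzero multiple of the minor of $C(\psi)$ on rows $I$ and columns $L$, so under $g\mapsto v_g$ the minors correspond to a spanning set of $\Pi_{S^{r+1}(\H)}\bigl(\bigwedge^{r+1}\H_1\otimes\bigwedge^{r+1}\H_2\bigr)=\Pi_{S^{r+1}(\H)}(\im P)$, which settles $k=r+1$. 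For general $k$, since $J$ is generated in degree $r+1$ and multiplication of forms corresponds under $g\mapsto v_g$ to the symmetrized tensor product $(u,v)\mapsto\Pi_{S^{\deg u+\deg v}(\H)}(u\otimes v)$, we get $J_k=\Pi_{S^k(\H)}\bigl(S^{k-r-1}(\H)\otimes\Pi_{S^{r+1}(\H)}(\bigwedge^{r+1}\H_1\otimes\bigwedge^{r+1}\H_2)\bigr)$; and because $\Pi_{S^k(\H)}$ absorbs a symmetrizer applied to any block of tensor factors and is unchanged by reordering the factors, the right-hand side equals $\Pi_{S^k(\H)}(\im P)$, proving (a). For (b), one direction is trivial; conversely, given $\sigma_{k+1}$ witnessing the condition at level $k+1\ge r+1$, set $\sigma_k:=\tr_1(\sigma_{k+1})$, so that $\tr_{k-1}(\sigma_k)=\rho$ (using that $\sigma_{k+1}$ is supported on the permutation-invariant space $S^{k+1}(\H)$). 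Since $\im(\sigma_{k+1})\subseteq J_{k+1}^{\perp}\subseteq S^{k+1}(\H)$, the projection $P$ may be applied to any $r+1$ of the factors carrying $\sigma_k$, and a short computation (intertwining $P$ with the partial trace on a disjoint factor) shows $\im(\sigma_k)\subseteq\ker(P)\cap S^k(\H)=J_k^{\perp}$; for $k<r+1$ this just records that a partial trace of a permutation-invariant state is permutation-invariant. Iterating down from $k=r+1$ recovers all levels, which together with Theorem~\ref{thm:intro_xtension} gives the claim.

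The step I expect to be the main obstacle is the bookkeeping inside (a): pinning down the exact correspondence between the $(r{+}1)$-minors and $\Pi_{S^{r+1}(\H)}(\bigwedge^{r+1}\H_1\otimes\bigwedge^{r+1}\H_2)$ under $g\mapsto v_g$, and checking that multiplying forms matches symmetrized tensoring --- routine, but requiring care with normalizations (only the resulting spans matter, so the stray constants are harmless). An alternative route, which proves the literal equation $\X_r^k=\spn\{\psi^{\otimes k}:\varppsi\in\X_r\}=\ker(P)\cap S^k(\H)$, is to invoke the classical theorem that the $(r{+}1)$-minors generate the full (prime) ideal of the determinantal variety of rank-$\le r$ matrices, together with the identity $\X^k=I(\X)_k^{\perp}$ for $I(\X)$ the full homogeneous ideal; this imports that classical fact and makes the range $k\le r$ less transparent, so I prefer the argument via the explicitly generated ideal $J$ above.
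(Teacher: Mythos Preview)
Your proposal is correct and follows essentially the same route as the paper: the corollary is derived from the $\X$-tensions hierarchy (Theorem~\ref{thm:intro_xtension}/Theorem~\ref{thm:xtension}) together with the identification of $I_k^{\perp}$ for the determinantal ideal recorded in Example~\ref{ex:schmidt}, which the paper attributes to standard results in \cite{landsberg2012tensors}. You simply spell out that identification (your part (a)) and the harmless restriction to $k\ge r+1$ (your part (b)) in more detail than the paper does; working with the ideal $J$ generated by the minors rather than the full vanishing ideal $I(\X_r)$ is a cosmetic difference since, as you note, these coincide.
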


We note that the hierarchy in Theorem~\ref{thm:intro_xtension} (and in this corollary) can be strengthened at each level by additionally imposing that $\sigma_k$ is has positive partial transpose (PPT) with respect to some bipartitions of $\H^{\otimes k}$ (see Remark~\ref{rmk:ppt}). Similar statements apply to the other hierarchies below.

\subsection{Hermitian optimization over $\X$}

Using Theorem~\ref{thm:intro_xtension}, we can obtain hierarchies for constrained Hermitian optimization problems of the form

\ba\label{eq:intro_opt}
H_{\X}^{\downarrow}:=\min_{\psi \psi^* \in \X} \ip{\psi}{H \psi} \quad\quad \text{or}\quad\quad H_{\X}^{\uparrow}:=\max_{\psi \psi^* \in \X} \ip{\psi}{H \psi},
\ea
where $H\in \Herm(\H)$ is any Hermitian operator. Even in the special case when $\X=\X_{\Sep}$, this problem has found many applications including computing the geometric measure of entanglement, determining the performance of QMA(2) protocols, and determining the ground-state energy of mean-field Hamiltonians. Moreover, Ref. \cite{HM10} contains a list of 21 equivalent or closely related problems in quantum information and theoretical computer science. As further motivation, $H$ forms a separating hyperplane for the set of $\X$-arable states (which we call an \textit{$\X$-tanglement witness}) if and only if $H_{\X}^{\downarrow} \geq 0$ and $H$ has at least one negative eigenvalue.

We use Theorem~\ref{thm:intro_xtension} to obtain a hierarchy of eigencomputations for the optimization problem~\eqref{eq:intro_opt}. In the following, let $\lambda_{\min}(\cdot)$ be the minimum eigenvalue.
\begin{cor}\label{cor:intro_eigen}
Let $\Pi_{\X^k}$ be the orthogonal projection onto $\X^k$, and let $\nu_k=\lambda_{\min}(\Pi_{\X^k} (H \otimes \I^{\otimes k-1}) \Pi_{\X^k})$. Then $\nu_1 \leq \nu_2 \leq \dots$, and $\lim_k \nu_k = H_{\X}^{\downarrow}$.
\end{cor}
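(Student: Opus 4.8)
The plan is to prove the two-sided bound $\nu_1\le\nu_2\le\cdots\le H_{\X}^{\downarrow}$ --- so that $\nu_\infty:=\lim_k\nu_k$ exists with $\nu_\infty\le H_{\X}^{\downarrow}$ --- and then to establish the reverse inequality $\nu_\infty\ge H_{\X}^{\downarrow}$ by a compactness argument that produces a limiting family of partial traces satisfying the hypotheses of Theorem~\ref{thm:intro_xtension}. Throughout I use the reformulation $\nu_k=\min\{\ip{v}{(H\otimes\I^{\otimes k-1})v} : v\in\X^k,\ \norm{v}=1\}$ (the compression of $H\otimes\I^{\otimes k-1}$ to $\X^k$; the minimum is attained since $\X^k$ is finite dimensional), together with the two elementary inclusions $\X^k\subseteq S^k(\H)$ and $\X^k\subseteq\X^\ell\otimes\H^{\otimes k-\ell}$ for $\ell\le k$, both immediate from $\psi^{\otimes k}=\psi^{\otimes\ell}\otimes\psi^{\otimes k-\ell}$ upon taking spans.

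For monotonicity I would fix a unit vector $v\in\X^{k+1}$ attaining $\nu_{k+1}$ and, using $\X^{k+1}\subseteq\X^k\otimes\H$ (first $k$ copies tensored with the last copy), expand $v=\sum_j w_j\otimes e_j$ along an orthonormal basis $\{e_j\}$ of the last copy, so each $w_j\in\X^k$ and $\sum_j\norm{w_j}^2=1$. Under the splitting $\H^{\otimes k+1}=\H^{\otimes k}\otimes\H$ we have $H\otimes\I^{\otimes k}=(H\otimes\I^{\otimes k-1})\otimes\I$, hence $\nu_{k+1}=\sum_j\ip{w_j}{(H\otimes\I^{\otimes k-1})w_j}\ge\nu_k\sum_j\norm{w_j}^2=\nu_k$. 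The bound $\nu_k\le H_{\X}^{\downarrow}$ is even easier: for $\psi\psi^*\in\X$ the unit vector $\psi^{\otimes k}\in\X^k$ satisfies $\ip{\psi^{\otimes k}}{(H\otimes\I^{\otimes k-1})\psi^{\otimes k}}=\ip{\psi}{H\psi}$, and minimizing over $\psi$ gives $\nu_k\le H_{\X}^{\downarrow}$.

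For the reverse inequality I would, for each $k$, pick a unit $v_k\in\X^k$ with $\ip{v_k}{(H\otimes\I^{\otimes k-1})v_k}=\nu_k$, and for $1\le\ell\le k$ let $\tau_{k,\ell}\in\Density(\H^{\otimes\ell})$ be the reduced state of $v_kv_k^*$ on the first $\ell$ copies. Since $v_kv_k^*$ is supported on $\X^k\subseteq S^k(\H)$, each $\tau_{k,\ell}$ is permutation invariant, so $\tr_{\ell-1}(\tau_{k,\ell})$ is well defined and equals $\rho_k:=\tau_{k,1}$, with $\tr(H\rho_k)=\nu_k$; moreover $\im(\tau_{k,\ell})\subseteq\X^\ell$ by the inclusion $\X^k\subseteq\X^\ell\otimes\H^{\otimes k-\ell}$, and tracing out one of the $\ell+1$ copies maps $\tau_{k,\ell+1}$ to $\tau_{k,\ell}$. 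A diagonal argument over the compact sets $\Density(\H^{\otimes\ell})$ then yields a subsequence of indices along which $\tau_{k,\ell}$ converges for every $\ell$; call the limits $\tau_{\infty,\ell}$ and set $\rho_\infty:=\tau_{\infty,1}$. Passing to the limit preserves every relation above --- $\im(\tau_{\infty,\ell})\subseteq\X^\ell$ (as $\X^\ell$ is closed), $\tr_{\ell-1}(\tau_{\infty,\ell})=\rho_\infty$ for all $\ell$, and $\tr(H\rho_\infty)=\nu_\infty$ --- so $(\tau_{\infty,\ell})_{\ell\ge1}$ is an $\X$-tension family for $\rho_\infty$. By Theorem~\ref{thm:intro_xtension}, $\rho_\infty$ is $\X$-arable, say $\rho_\infty=\sum_i p_i\,\psi_i\psi_i^*$ with each $\psi_i\psi_i^*\in\X$; then $\nu_\infty=\tr(H\rho_\infty)=\sum_i p_i\ip{\psi_i}{H\psi_i}\ge H_{\X}^{\downarrow}$, which together with $\nu_\infty\le H_{\X}^{\downarrow}$ completes the proof.

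The genuinely substantive step is this last one --- deducing $\nu_\infty\ge H_{\X}^{\downarrow}$ from Theorem~\ref{thm:intro_xtension}; everything else is routine. Within it, the points requiring care are (a) that a partial trace of a state supported on $\X^k$ is supported on $\X^\ell$, and that this property, together with the marginal and consistency conditions, survives the limit; and (b) arranging the diagonal extraction so that the limiting family $(\tau_{\infty,\ell})_\ell$ is simultaneously consistent under partial trace and has one-body marginal $\rho_\infty$ with $\tr(H\rho_\infty)=\nu_\infty$. I do not anticipate obstacles beyond this bookkeeping.
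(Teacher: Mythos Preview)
Your proposal is correct and follows essentially the same approach as the paper: the paper reformulates $\nu_k$ as an SDP over $(k,\X)$-tendable states, deduces monotonicity from the fact that the partial trace of a $(k,\X)$-tension is a $(k-1,\X)$-tension, and obtains the limit by extracting a convergent subsequence of optimal one-body marginals and invoking Theorem~\ref{thm:intro_xtension} (equivalently Theorem~\ref{thm:xtension}). Your version is a bit more explicit---you spell out the basis expansion for monotonicity and run a diagonal argument to extract the full tower of limiting extensions rather than just the one-body marginal---but the substance is identical.
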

In particular, this gives a hierarchy of eigencomputations for determining if $H$ is an $\X$-tanglement witness. We make this hierarchy explicit for all of the varieties mentioned in Example~\ref{ex:varieties}.

Corollary~\ref{cor:intro_eigen} can also be proven as a consequence of~\cite[Theorem 1]{catlin1999isometric} if one is familiar with vector bundles. In Section~\ref{sec:poly} we apply Corollary~\ref{cor:intro_eigen} to constrained Hermitian {polynomial} optimization. In particular, we prove that the Hermitian sum of squares hierarchy of~\cite{d2009polynomial}, which is based on semidefinite programming, is equivalent to a hierarchy of \textit{eigencomputations} in some settings, which can lead to computational savings.

%
%


\subsection{$\X$-tangled subspaces}\label{sec:intro_null}

\textit{Entangled subspaces} are subspaces of $\H$ that exhibit some preset notion of entanglement~\cite{GW07}. We define an \textit{$\X$-tangled subspace} to be a subspace $\U \subseteq \H$ that does not intersect $\X$. This captures many well-studied notions of entangled subspaces, including completely entangled subspaces, genuinely entangled subspaces, subspaces of high {entanglement depth},  $r$-entangled subspaces, and their bosonic/fermionic variants~\cite{Par04,CMW08,LJ21,demianowicz2024completely}. An immediate application of $\X$-tangled subspaces is that any mixed quantum state supported on an $\X$-tangled subspace is $\X$-tangled. $\X$-tangled subspaces can also be used to construct $\X$-tanglement witnesses~\cite{Hor97,BDMSST99}. Numerous other applications have appeared in recent years, including quantum error correction \cite{GW07,HG20} and quantum cryptography~\cite{SS19}.

More generally, different~\textit{measures} of entanglement of subspaces have found applications in entanglement theory and quantum communication theory, most notably Hastings' disproof of additivity of the Holevo capacity~\cite{hayden2006aspects,hastings2009superadditivity,aubrun2011hastings,demianowicz2019entanglement,zhu2024quantifying}. We introduce and study the \textit{geometric measure of $\X$-tanglement} (GM$\X$) of $\U$, given by
\ba
E_{\X}(\U):=1-(\Pi_{\U})_{\X}^{\uparrow}\;,
\ea
where $\Pi_{\U}$ denotes the orthogonal projection onto $\U$ and $(\Pi_{\U})_{\X}^{\uparrow}$ is defined in~\eqref{eq:intro_opt}. Note that ($E_{\X}(\U)>0 \iff \U$ is $\X$-tangled). This generalizes the well-studied geometric measure of entanglement for pure states~\cite{wei2003geometric} as well as notions for subspaces studied in~\cite{demianowicz2019entanglement,zhu2024quantifying}. 
An immediate consequence of Corollary~\ref{cor:intro_eigen} is a hierarchy of eigencomputations for computing $E_{\X}(\U)$.
\begin{cor}\label{cor:intro_subspace}
For each positive integer $k$, let $\nu_k$ be the maximum eigenvalue of
\ba
\Pi_{\X^k} (\Pi_{\U} \otimes \I_{\H}^{\otimes k-1}) \Pi_{\X^k}.
\ea
Then $\nu_1 \geq \nu_2 \geq \dots$ and $E_\X(\U)= 1-\lim_{k\rightarrow \infty} \nu_k$.
\end{cor}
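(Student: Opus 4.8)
The plan is to deduce Corollary~\ref{cor:intro_subspace} directly from Corollary~\ref{cor:intro_eigen} by specializing the Hermitian operator. Recall that $E_\X(\U) = 1 - (\Pi_\U)_\X^\uparrow$, where $(\Pi_\U)_\X^\uparrow = \max_{\psi\psi^*\in\X}\ip{\psi}{\Pi_\U\psi}$. Observe that for a Hermitian operator $H$, maximizing $\ip{\psi}{H\psi}$ over $\psi\psi^*\in\X$ is the same as minimizing $\ip{\psi}{(-H)\psi}$, so $H_\X^\uparrow = -(-H)_\X^\downarrow$. Applying Corollary~\ref{cor:intro_eigen} with $H$ replaced by $-\Pi_\U$, we get that $\mu_k := \lambda_{\min}\bigl(\Pi_{\X^k}(-\Pi_\U \otimes \I^{\otimes k-1})\Pi_{\X^k}\bigr)$ satisfies $\mu_1 \le \mu_2 \le \cdots$ and $\lim_k \mu_k = (-\Pi_\U)_\X^\downarrow = -(\Pi_\U)_\X^\uparrow$. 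Now use the elementary fact that for any Hermitian operator $A$ acting on a subspace (here, $A = \Pi_{\X^k}(\Pi_\U \otimes \I^{\otimes k-1})\Pi_{\X^k}$ viewed as an operator on $\X^k$), we have $\lambda_{\min}(-A) = -\lambda_{\max}(A)$. Hence, defining $\nu_k := \lambda_{\max}\bigl(\Pi_{\X^k}(\Pi_\U\otimes\I^{\otimes k-1})\Pi_{\X^k}\bigr)$, we obtain $\mu_k = -\nu_k$, so $\nu_1 \ge \nu_2 \ge \cdots$ (the inequalities flip) and $\lim_k \nu_k = -\lim_k \mu_k = (\Pi_\U)_\X^\uparrow$.

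Finally, subtract from $1$: since $E_\X(\U) = 1 - (\Pi_\U)_\X^\uparrow = 1 - \lim_k \nu_k$, and the sequence $1 - \nu_k$ is nondecreasing and converges to $E_\X(\U)$, the claimed statement follows. (One may also note the probabilistic interpretation: $\nu_k = \max\{\tr[(\Pi_\U\otimes\I^{\otimes k-1})\sigma] : \sigma$ a density operator with $\im(\sigma)\subseteq\X^k\}$, which is another route to monotonicity, but the reduction above is cleanest.)

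I do not expect any genuine obstacle here: Corollary~\ref{cor:intro_subspace} is a formal corollary of Corollary~\ref{cor:intro_eigen} together with the sign-flip bookkeeping and the normalization $(\Pi_\U)_\X^\uparrow = 1 - E_\X(\U)$. The only point requiring a word of care is that all of the $\Pi_\U$-dependent quantities are bounded (between $0$ and $1$), so convergence of $\nu_k$ is automatic and there is no issue passing to the limit; and that the min-eigenvalue in Corollary~\ref{cor:intro_eigen} is taken over the compressed operator on $\X^k$, which is exactly what appears after the sign flip. If one instead wanted a self-contained proof not routing through Corollary~\ref{cor:intro_eigen}, the substantive step would be the $k\to\infty$ convergence, which is where Theorem~\ref{thm:intro_xtension} (the $\X$-tensions hierarchy) does the real work; but since Corollary~\ref{cor:intro_eigen} already packages that, nothing new is needed.
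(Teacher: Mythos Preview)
Your proposal is correct and matches the paper's approach: the paper states Corollary~\ref{cor:intro_subspace} as an immediate consequence of Corollary~\ref{cor:intro_eigen} (and in the body, derives the equivalent Corollary~\ref{cor:subspace} directly from Theorem~\ref{thm:robust_hier}, which is already phrased for $\lambda_{\max}$ and $H_\X^\uparrow$), by specializing $H=\Pi_{\U}$ and using $E_\X(\U)=1-(\Pi_\U)_\X^\uparrow$. Your sign-flip bookkeeping is exactly the bridge between the $\lambda_{\min}$ formulation of Corollary~\ref{cor:intro_eigen} and the $\lambda_{\max}$ statement here, and your remark about compressing to $\X^k$ is the right caveat (and harmless in this case since $\Pi_\U\otimes\I^{\otimes k-1}$ is positive semidefinite).
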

We make this hierarchy explicit for all of the examples mentioned in Example~\ref{ex:varieties}. Let us focus on the special case of this hierarchy that simply checks if $\U$ is $\X$-tangled or not (see Section~\ref{sec:equiv}):
\begin{cor}[Nullstellensatz hierarchy]
$\U$ is $\X$-tangled if and only if $\X^k \cap \U^{\otimes k}=\{0\}$ for some $k$.
\end{cor}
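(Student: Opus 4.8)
The plan is to translate the statement into commutative algebra and invoke the projective Nullstellensatz. One direction is immediate: if $\U$ is \emph{not} $\X$-tangled, choose a unit vector $\psi\in\U$ with $\psi\psi^*\in\X$; then by \eqref{eq:im} the nonzero vector $\psi^{\otimes k}$ lies in both $\X^k$ and $\U^{\otimes k}$, so $\X^k\cap\U^{\otimes k}\neq\{0\}$ for every $k$. Contrapositively, $\X^k\cap\U^{\otimes k}=\{0\}$ for some $k$ forces $\U$ to be $\X$-tangled, which gives the ``if'' direction.

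For the ``only if'' direction, I would first push the intersection into the symmetric subspace. Since $\X^k\subseteq S^k(\H)$ by definition, and since $S^k(\H)\cap\U^{\otimes k}=S^k(\U)$ (apply the symmetrizer $\frac{1}{k!}\sum_{\pi}W_\pi$ over all permutations $\pi$ of the $k$ tensor factors; it preserves $\U^{\otimes k}$ and acts as the identity on $S^k(\H)$), we have $\X^k\cap\U^{\otimes k}=\X^k\cap S^k(\U)$. Next, use the apolarity pairing between $S^k(\H)$ and the space $R_k:=S^k(\H^*)$ of degree-$k$ homogeneous polynomials on $\H$: under it, $f\in R_k$ annihilates $v^{\otimes k}$ exactly when $f(v)=0$. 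Hence the annihilator of $\X^k$ is the degree-$k$ component $I(\X)_k$ of the homogeneous vanishing ideal $I(\X)\subseteq R:=\bigoplus_k R_k$ of $\X$, and the annihilator of $S^k(\U)$ is $I(\U)_k$, where $I(\U)$ is the (linearly generated) homogeneous ideal of $\proj(\U)$. Writing $J:=I(\X)+I(\U)$ and taking annihilators in $S^k(\H)\cong R_k^*$,
\[
\X^k\cap\U^{\otimes k}\;=\;\X^k\cap S^k(\U)\;=\;I(\X)_k^{\perp}\cap I(\U)_k^{\perp}\;=\;\bigl(I(\X)_k+I(\U)_k\bigr)^{\perp}\;=\;(J_k)^{\perp},
\]
so that $\X^k\cap\U^{\otimes k}=\{0\}$ if and only if $J_k=R_k$.

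It then remains to see that $\U$ is $\X$-tangled if and only if $J_k=R_k$ for some $k$. The zero locus $V(J)=V(I(\X))\cap V(I(\U))$ is the affine cone over $\X\cap\U$, so $\U$ is $\X$-tangled exactly when $V(J)=\{0\}$. By the projective Nullstellensatz over $\complex$, this is equivalent to $\sqrt{J}$ containing the irrelevant ideal $\mathfrak{m}=\bigoplus_{k\geq1}R_k$, hence --- as $\mathfrak{m}$ is finitely generated --- to $\mathfrak{m}^N\subseteq J$ for some $N$, i.e.\ $J_k=R_k$ for all $k\geq N$. Conversely, since $J$ is an ideal, $J_k=R_k$ for a single $k$ already gives $J_\ell=R_\ell$ for all $\ell\geq k$ (multiply by $R_1$). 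Chaining these equivalences with the displayed one finishes the proof.

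The content here is a bookkeeping point rather than a genuine obstacle: one must take the annihilator of $\X^k$ to be the degree-$k$ part of the \emph{full} vanishing ideal $I(\X)$ (not merely of the ideal generated by a particular defining system $f_1,\dots,f_p$, which may fail to be radical), and one should use the bilinear apolarity pairing so that the identity $(A+B)^{\perp}=A^{\perp}\cap B^{\perp}$ applies verbatim; working instead with the Hermitian inner product on $\H^{\otimes k}$ only introduces a harmless complex conjugation that does not affect whether $\X^k\cap\U^{\otimes k}$ is trivial.
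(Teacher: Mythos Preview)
Your proof is correct and follows essentially the same route as the paper's treatment in Section~\ref{sec:null} (Theorem~\ref{thm:null_sub}) combined with the equivalence in Section~\ref{sec:equiv}: reduce $\X^k\cap\U^{\otimes k}$ to $I(\X)_k^\perp\cap I(\U)_k^\perp=(I(\X)_k+I(\U)_k)^\perp$ via the bilinear pairing and $S^k(\H)\cap\U^{\otimes k}=S^k(\U)$, then invoke the projective (weak) Nullstellensatz. The paper also notes that the corollary can be obtained as a special case of the eigencomputation hierarchy (Corollary~\ref{cor:intro_subspace}/Corollary~\ref{cor:subspace}), which ultimately rests on the $\X$-tensions/de Finetti machinery, but your direct Nullstellensatz argument matches the paper's alternative derivation and is self-contained.
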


We call this the \textit{Nullstellensatz hierarchy} because it can also be derived from Hilbert's Nullstellensatz. For special cases of varieties $\X$, this specializes to the hierarchies studied in~\cite{JLV22a}. We generalize and improve upon the polynomial time guarantees given in~\cite{JLV22a} for this hierarchy, both in the generic and worst case settings.



\subsubsection{Generic degree bounds for the Nullstellensatz hierarchy}
While the $\X$-tangled subspace problem is NP Hard in the worst case, we prove polynomial time guarantees when $\U$ is generically chosen (or ``typical"). The following theorem shows that the Nullstellensatz hierarchy certifies $\X$-tanglement of generically chosen subspaces in polynomial time, up to an arbitrarily small multiplicative loss in dimension.




\begin{theorem}\label{thm:poly}
Let $\X$ be any of the varieties listed in Example~\ref{ex:varieties}, let $0 < \eps < 1$ be arbitrary, and let $N=\dim(\H)$ be sufficiently large. Then a generically chosen subspace $\U \subseteq \H$ of dimension $\dim(\U)=(1-\eps) N$ is $\X$-tangled, and this is certified by the Nullstellensatz hierarchy in polynomial time.
\end{theorem}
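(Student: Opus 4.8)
The plan is to reduce the statement to two ingredients: (i) a dimension count showing that a generic subspace $\U$ of dimension $(1-\eps)N$ avoids $\X$, and (ii) a quantitative version of the Nullstellensatz hierarchy giving an explicit level $k=k(\eps)$ at which $\X^k \cap \U^{\otimes k}=\{0\}$, together with a bound on the dimension of $\X^k$ (which controls the size of the linear system one must solve, hence the running time). For (i), note that $\X$ is a projective variety of some dimension $d=\dim\X$ inside $\proj(\H)=\proj^{N-1}$, and for all of the varieties in Example~\ref{ex:varieties} one has $d = o(N)$ — in fact $d$ grows at most polynomially in the local dimensions while $N$ grows exponentially (for product states $d\approx \sum_i \dim\H_i$, for Schmidt rank $\le r$ one gets $d\approx r(\dim\H_1+\dim\H_2)$, etc.). A generic linear subspace of projective dimension $\dim\U-1$ misses $\X$ as soon as $(\dim\U-1) + d < N-1$, i.e. $\dim\U \le N-1-d$, which is implied by $\dim\U=(1-\eps)N$ once $N$ is large enough that $\eps N > d+1$. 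So (i) is essentially immediate from the low dimension of the relevant varieties.

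For (ii), the key is to control the level $k$ needed and the dimension $\dim\X^k$. I would argue as follows. Fix a generic $\U$ of dimension $(1-\eps)N$; by genericity I can assume $\U$ is cut out by generic linear forms, equivalently $\U = \ker(L)$ for a generic $(\eps N)\times N$ matrix $L$. The space $\X^k$ is spanned by the vectors $\psi^{\otimes k}$ with $\varppsi\in\X$; restricting the Veronese-type embedding $\psi\mapsto\psi^{\otimes k}$ (or rather the associated map on $\X$) and intersecting with $\U^{\otimes k}$ corresponds to asking whether the variety $\X$ admits a point lying in $\U$ — but at the $k$-th level the relevant object is the image of $\X$ under the degree-$k$ map, and a generic codimension-$\eps N$ linear section of $\proj(\H^{\otimes k})$ pulled back to this image is empty once $k$ is large enough relative to $1/\eps$. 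Concretely, I expect a bound of the shape $k = O(1/\eps)$ (independent of $N$), mirroring the guarantees of~\cite{JLV22a}: the point is that increasing $k$ by one multiplies the ``effective codimension'' available to kill $\X$ by a factor depending on the local structure, so after $O(1/\eps)$ steps the deficiency $(1-\eps)^k$-type quantity drops below the threshold forced by $\dim\X^k$. With $k$ constant, $\dim\X^k \le \dim\H^{\otimes k} = N^k = \mathrm{poly}(N)$, so the linear-algebra problem ``is $\X^k\cap\U^{\otimes k}=\{0\}$?'' — which by the Nullstellensatz hierarchy corollary certifies $\X$-tanglement — is solvable in time polynomial in $N$. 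One must also verify that an explicit spanning set (or defining equations) for $\X^k$ is available in $\mathrm{poly}(N)$ time; this is exactly what the paper promises via the explicit descriptions of $\X^k$ for each variety in Example~\ref{ex:varieties}, so I would invoke those case by case.

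The proof then assembles: by (i), $\U$ is $\X$-tangled with probability one over the choice of $\U$; by (ii), for the constant level $k=k(\eps)$ we have $\X^{k}\cap\U^{\otimes k}=\{0\}$, which the Nullstellensatz hierarchy detects, and the detection is a rank computation on a matrix of size $\mathrm{poly}(N)$, hence polynomial time.

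\textbf{Main obstacle.} The crux is step (ii): proving that a \emph{constant} level $k=O(1/\eps)$ suffices for \emph{generic} $\U$, uniformly over all the varieties in Example~\ref{ex:varieties}. This requires a careful generic-dimension argument — one must show that intersecting $\X^k$ with $\U^{\otimes k}$ behaves like a generic linear section of the appropriate image variety, which in turn needs that this image is not contained in any ``small'' coordinate-type subspace and that its dimension grows slowly in $k$ while the codimension of $\U^{\otimes k}$ grows fast. Handling this uniformly (rather than variety-by-variety) and pinning down the exact dependence of $k$ on $\eps$, together with confirming the polynomial-time constructibility of a basis of $\X^k$ in each case, is where the real work lies; the dimension count in (i) and the final assembly are routine by comparison.
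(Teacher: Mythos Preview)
Your overall plan is right in spirit: the paper also reduces Theorem~\ref{thm:poly} to (i) the dimension count (which is Fact~\ref{fact:dim}) and (ii) showing that a \emph{constant} level $k=k(\eps)$ of the hierarchy certifies $\X$-tanglement generically, after which the $N^{O(k)}$ linear system is polynomial-size. You have also correctly located the only nontrivial step.

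The gap is that your argument for (ii) is not a proof. You write that ``a generic codimension-$\eps N$ linear section of $\proj(\H^{\otimes k})$ pulled back to this image is empty once $k$ is large enough,'' and appeal to a ``deficiency $(1-\eps)^k$-type quantity.'' But $S^k(\U)$ (equivalently $\U^{\otimes k}\cap S^k(\H)$) is \emph{not} a generic linear subspace of $S^k(\H)$: it is a very special subspace, completely determined by a linear subspace of $\H$, so a naive codimension count in $S^k(\H)$ does not apply. In fact the condition one actually needs is not ``$\dim \X^k + \dim S^k(\U)\le \dim S^k(\H)$,'' but the stronger inequality
\[
\dim I_k^{\perp}\;<\;\binom{N-s+k}{k},
\]
with $s=\dim\U$; this is Theorem~\ref{thm:generic_intro} (proved as Theorem~\ref{thm:generic}). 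The paper establishes it by a commutative-algebra argument you do not have: pass to generic coordinates so that the initial ideal $\lm(I)$ is Borel-fixed (Galligo's theorem), observe that the dimension hypothesis forces $x_s^k\in\lm(I)$, and then exhibit a single explicit $\U$ (namely $V(x_{s+1},\dots,x_N)$) for which $I_k+I(\U)_k=R_k$; genericity is then automatic since the condition is Zariski-open.

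Once this criterion is in hand, the remaining work is exactly the variety-by-variety verification you allude to: for each $\X$ in Example~\ref{ex:varieties} one bounds $\dim I_k^{\perp}$ (using the explicit descriptions in Section~\ref{sec:Ik} and Schur-dimension estimates) and checks that $\dim I_k^{\perp}<\binom{\eps N+k}{k}$ holds for a constant $k$ depending only on $\eps$ (and possibly on fixed parameters such as $r$ or $m$). This is carried out in Section~\ref{sec:examples}. Your proposal gestures at this last step but does not supply the Galligo/Borel-fixed argument that makes the constant-$k$ claim true in the first place.
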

This improves the genericity guarantees of~\cite{JLV22a}, which hold only for particular choices of $\eps$. For example, if $\X=\X_1$ is the set of pure bipartite product states, then the cited work only applies when $\eps\geq 3/4$.  Our improvement comes as a result of using more sophisticated algebraic techniques. We note that the work~\cite{JLV23a} proposes a related algorithm to \textit{recover} elements of $\U$ contained in $\X$ (see Remark~\ref{rmk:related}).

We prove Theorem~\ref{thm:poly} by showing that a generically chosen subspace $\U$ can be certified $\X$-tangled at a constant level (or \textit{degree}) $k$ of the Nullstellensatz hierarchy. Theorem~\ref{thm:poly} is derived from the following general-purpose bound proven in Section~\ref{sec:null}:

\begin{theorem}[Generic degree bound]\label{thm:generic_intro}
Let $N=\dim(\H)$, and let $s$ and $k$ be positive integers. If
\ba
\dim(\X^k) < \binom{N-s+k}{k},
\ea
then a generically chosen $s$-dimensional subspace $\U \subseteq \H$ is $\X$-tangled, and satisfies $\X^k \cap \U^{\otimes k} = \{0\}$.
\end{theorem}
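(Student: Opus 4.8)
The plan is to translate the statement into commutative algebra via apolarity and then run Green's hyperplane restriction theorem inductively.

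First I would fix coordinates: write $S=\complex[x_1,\dots,x_N]\cong\mathrm{Sym}(\H^*)$, let $I=I(\X)\subseteq S$ be the homogeneous vanishing ideal of $\X$, and $R=S/I$. Identifying $S^k(\H)$ with $S_k^*$ under the apolar pairing gives $\X^k=I_k^{\perp}$, so $\dim\X^k=\dim R_k$ and the hypothesis becomes $\dim R_k\le\binom{N-s+k}{k}-1$. Given an $s$-dimensional $\U\subseteq\H$, I would choose a basis $\ell_1,\dots,\ell_{N-s}\in S_1$ of the annihilator $\U^{\perp}\subseteq\H^*$. Because $\U^{\otimes k}\cap S^k(\H)=S^k(\U)$ and $S^k(\U)=\big((\ell_1,\dots,\ell_{N-s})_k\big)^{\perp}$ (the forms vanishing on $\U$ being exactly the degree-$k$ part of the ideal $(\ell_1,\dots,\ell_{N-s})$), the identity $A^{\perp}\cap B^{\perp}=(A+B)^{\perp}$ gives $\X^k\cap\U^{\otimes k}=\big(I_k+(\ell_1,\dots,\ell_{N-s})_k\big)^{\perp}$, hence $\dim(\X^k\cap\U^{\otimes k})=\dim\big(R/(\ell_1,\dots,\ell_{N-s})R\big)_k$. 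So the theorem reduces to: for generic $\ell_1,\dots,\ell_{N-s}\in S_1$ (equivalently, for a generic $s$-dimensional $\U$), the degree-$k$ part of $R/(\ell_1,\dots,\ell_{N-s})R$ vanishes. That $\X$-tangledness follows from $\X^k\cap\U^{\otimes k}=0$ is immediate: a unit vector $\psi$ with $\psi\psi^*\in\X\cap\U$ would contribute $0\ne\psi^{\otimes k}\in\X^k\cap\U^{\otimes k}$.

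Next I would invoke Green's hyperplane restriction theorem: for a standard graded $\complex$-algebra $A$ and a generic linear form $\ell$, $\dim(A/\ell A)_k$ is at most the number obtained from the $k$-th Macaulay representation of $\dim A_k$ by lowering each binomial ``top'' by one, and this operation is non-decreasing in $\dim A_k$. Combined with the elementary identity $\binom{m}{k}-1=\binom{m-1}{k}+\binom{m-2}{k-1}+\dots+\binom{m-k}{1}$ (whose right-hand side, after lowering tops, collapses to $\binom{m-1}{k}-1$), the bound $\dim R_k\le\binom{N-s+k}{k}-1$ gives $\dim(R/\ell_1 R)_k\le\binom{(N-1)-s+k}{k}-1$. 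Since $R/\ell_1 R$ is standard graded over a polynomial ring in $N-1$ variables, I would iterate this $N-s$ times, at which point the bound reads $\binom{k}{k}-1=0$. Being ``generic at each stage'' is a nonempty Zariski-open condition on the tuple $(\ell_1,\dots,\ell_{N-s})$, hence on $\U$ in the Grassmannian; together with the reduction above this finishes the proof. Alternatively, since $\{\U:\X^k\cap\U^{\otimes k}\ne 0\}$ is Zariski-closed in $\mathrm{Gr}(s,\H)$, it would suffice to produce a single good $\U$, which the same computation supplies.

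The genuinely load-bearing inputs are Green's hyperplane restriction theorem and the Macaulay-representation bookkeeping that makes the induction telescope --- that is, that a single generic section moves the controlling quantity from $\binom{m}{k}-1$ to exactly $\binom{m-1}{k}-1$. Everything else is routine: Green's theorem holds over $\complex$ for every standard graded algebra; passing between ``generic linear forms'' and ``generic subspace'' is a standard fibering argument; and the variant using any cutting-out ideal $\langle f_1,\dots,f_p\rangle$ in place of $I(\X)$ goes through verbatim with $R=S/\langle f_1,\dots,f_p\rangle$ (now $\dim R_k\ge\dim\X^k$, so the hypothesis should be read on $\dim R_k$). I expect the only real care is in stating and applying Green's bound with the correct Macaulay representation, and in confirming that the iterated-genericity locus is indeed dense open in $\mathrm{Gr}(s,\H)$.
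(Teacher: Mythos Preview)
Your argument is correct. The reduction to showing $(R/(\ell_1,\dots,\ell_{N-s}))_k=0$ for generic linear forms is exactly right, the Macaulay-representation identity $\binom{m}{k}-1=\sum_{j=1}^{k}\binom{m-j}{k+1-j}$ is the valid $k$-th representation whenever $m\geq k+1$ (which holds at every step of your induction), and applying $(\cdot)_{\langle k\rangle}$ indeed collapses it to $\binom{m-1}{k}-1$. The version of Green's theorem you invoke, bounding $\dim(A/\ell A)_k$ by $(\dim A_k)_{\langle k\rangle}$ for a general linear form in a standard graded algebra, is the standard ``general hyperplane restriction'' form, valid over any infinite field, and it iterates cleanly because each quotient is again standard graded.

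The paper takes a different route: it invokes Galligo's theorem to assume the generic initial ideal $J=\mathrm{gin}(I)$ (in reverse-lex order) is Borel-fixed, and then argues directly with monomials. The hypothesis $\dim I_k^{\perp}<\binom{N-s+k}{k}$ forces $x_s^k\in J$; Borel-fixedness then puts every degree-$k$ monomial in $x_1,\dots,x_s$ into $J$, and one reads off that the single coordinate subspace $\U=V(x_{s+1},\dots,x_N)$ already satisfies $I_k+I(\U)_k=R_k$. So the paper avoids Macaulay representations entirely and exhibits a witness $\U$ explicitly, at the cost of importing Galligo's theorem. Your route trades that for Green's theorem and a clean telescoping induction; since Green's theorem is itself commonly proved via generic initial ideals, the two arguments live in the same circle of ideas, but yours makes the numerical role of the bound $\binom{N-s+k}{k}$ especially transparent.
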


\subsubsection{Worst case degree bounds for the Nullstellensatz hierarchy}
We also analyze the worst-case performance of the Nullstellensatz hierarchy. Remarkably, the Nullstellensatz hierarchy is guaranteed to terminate (i.e. detect every $\X$-tangled subspace) at a finite degree $k$; something that is known not to be possible for separability hierarchies such as symmetric extensions~\cite{fawzi2021set}. Moreover, we use algebraic-geometric techniques to give explicit upper bounds on (or even determine exactly) the worst-case degree $k$ for all of the varieties in Example~\ref{ex:varieties} (see Table~\ref{table:nullstellensatz}).

Let us start with the example $\X_r \subseteq \P(\complex^{n_1}\otimes \complex^{n_2})$; the set of pure states of Schmidt rank at most $r$. Surprisingly, our results show that the $\X_r$-tangled subspace problem has a worst case polynomial time algorithm when $n_1$ (or $n_2$) are fixed:
\begin{theorem}\label{thm:worst_poly}
The worst case degree required by the Nullstellensatz hierarchy to certify $\X_r$-tanglement is precisely $k=r(\min\{n_1,n_2\}-r)+1$. In particular, the Nullstellensatz hierarchy gives a worst case polynomial time algorithm for the $\X_r$-tangled subspace problem when $n_1$ (or $n_2$) are fixed.
\end{theorem}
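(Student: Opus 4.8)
The plan is to apply the Generic degree bound (Theorem~\ref{thm:generic_intro}) together with an explicit description and dimension count of the spaces $\X_r^k$, combined with the observation that the worst-case degree cannot \emph{exceed} the exact value by exhibiting a subspace that forces it. So there are really two halves: the \emph{lower bound} (some $\X_r$-tangled subspace is not certified at level $k < r(\min\{n_1,n_2\}-r)+1$) and the \emph{upper bound} (every $\X_r$-tangled subspace is certified by level $k = r(\min\{n_1,n_2\}-r)+1$). Write $n = \min\{n_1,n_2\}$ and $d = r(n-r)$, so the claimed degree is $d+1$. First I would record the explicit form of $\X_r^k$ from the Corollary following Theorem~\ref{thm:intro_xtension}: $\X_r^k = \ker(\Pi_{1,r+1}^{\w}\otimes \Pi_{2,r+1}^{\w}\otimes \I^{\otimes k-r-1}) \cap S^k(\H)$, i.e. the degree-$k$ piece of the ideal of the Segre-type variety of Schmidt rank $\le r$ matrices, which is the classical determinantal variety cut out by the $(r+1)\times(r+1)$ minors. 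The key geometric input is that this determinantal variety has dimension (projectively) $r(n_1+n_2-r)-1$ and codimension $(n_1-r)(n_2-r)$ inside $\P(\H)$.

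For the upper bound, the cleanest route is to invoke the Nullstellensatz directly, as the paper suggests the hierarchy can be derived that way: $\U$ is $\X_r$-tangled iff $\U \cap \X_r = \emptyset$ iff the $(r+1)$-minors together with a basis of linear forms defining $\U^\perp$ have no common projective zero, and then a degree bound for the Nullstellensatz certificate translates into the level $k$ at which $\X_r^k \cap \U^{\otimes k} = \{0\}$. The sharp bound here should come from the \emph{regularity} of the determinantal ideal: the Castelnuovo--Mumford regularity of the ideal of $(r+1)\times(r+1)$ minors of a generic matrix is well known (it equals $r \cdot \min\{n_1,n_2\} - r^2 + \text{(small correction)}$ in the relevant normalization), and this is exactly the quantity $d+1$ up to indexing. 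Concretely, I would show that for $k \ge d+1$ the worst case $\dim(\X_r^k \cap \U^{\otimes k})$ collapses to zero whenever $\U$ misses $\X_r$, by a dimension/Hilbert-function argument: $\dim \X_r^k$ grows like a polynomial of degree $d = \dim$ (of the affine cone minus one, appropriately) in $k$, while $\binom{N - s + k}{k}$ for any fixed codimension-type $s$ grows faster once $k$ exceeds the regularity threshold, and the determinantal variety's regularity is precisely $d+1$.

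For the lower bound I would construct an explicit $\X_r$-tangled subspace $\U$ whose certificate genuinely requires degree $d+1$ --- the natural candidate is the span of a curve, or more precisely a subspace chosen so that $\U^\perp$ is a generic linear space of the complementary dimension intersecting the determinantal variety \emph{virtually} but not actually: take $\U$ of dimension $N - (n_1-r)(n_2-r) - 1$ (one less than the codimension of $\X_r$), so $\U$ is $\X_r$-tangled for dimension reasons, yet is as large as possible. For such $\U$, the ideal $(I_{\X_r} + I_{\U^\perp})$ is the ideal of a zero-dimensional (or empty) scheme whose structure forces the certificate degree up; showing the certificate degree is at least $d+1$ amounts to showing $\X_r^{d} \cap \U^{\otimes d} \neq \{0\}$, which I would do by an explicit Hilbert function computation on the quotient ring, using that cutting a determinantal variety by generic hyperplanes preserves its Hilbert function in low degrees up to the regularity. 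The final clause (polynomial time for fixed $n_1$) is then immediate: $k = r(n-r)+1 \le n_1^2/4 + 1$ is bounded, so $\dim \X_r^k = \binom{n_1 n_2}{k}$-ish is polynomial in $N = n_1 n_2$, and checking $\X_r^k \cap \U^{\otimes k} = \{0\}$ is a linear-algebra computation of polynomial size.

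\textbf{Main obstacle.} The hard part is the lower bound --- pinning down that the worst-case degree is \emph{exactly} $d+1$ and not smaller, which requires controlling the Hilbert function of the coordinate ring of a linear section of the determinantal variety below its regularity, and exhibiting a concrete $\U$ achieving equality. The upper bound, by contrast, reduces to citing (or re-deriving) the regularity of determinantal ideals --- classical but technical --- and feeding it into Theorem~\ref{thm:generic_intro}'s framework; the genericity assumptions there are stated for generic $\U$, so for the worst-case upper bound I will instead need the purely scheme-theoretic Nullstellensatz version of the argument, which is the other place where care is needed.
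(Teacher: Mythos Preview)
Your overall strategy---invoke the Castelnuovo--Mumford regularity of the determinantal ideal for the upper bound, and exhibit a subspace attaining it for the lower bound---is the right shape, but you are missing the single ingredient that makes both halves go through cleanly: the \emph{Cohen--Macaulay} property of $R/I(\X_r)$. The paper's proof (Example~\ref{ex:sub_schmidt}) is essentially one line: cite that $R/I(\X_r)$ is Cohen--Macaulay with regularity $r(\min\{n_1,n_2\}-r)$ (classical facts due to Lascoux/Weyman), and apply Theorem~\ref{thm:intro_worst_case} (equivalently Theorem~\ref{thm:worst_case}), which states that for Cohen--Macaulay $R/I$ the worst-case degree is \emph{exactly} $\reg(R/I)+1$, with tightness included.

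Without Cohen--Macaulayness your upper-bound argument has a genuine gap. You correctly note that Theorem~\ref{thm:generic_intro} only handles generic $\U$, and you say you will ``instead need the purely scheme-theoretic Nullstellensatz version,'' but you do not say what that is. The point is that knowing $\reg(R/I)$ does \emph{not} by itself bound the regularity of $R/(I+I(\U))$ for arbitrary $\U$; adding linear forms can increase regularity when $R/I$ is not Cohen--Macaulay. Cohen--Macaulayness is exactly what guarantees that a generic linear form is a nonzerodivisor at each stage, so regularity is preserved under linear section---this is the content of Theorem~\ref{thm:worst_case} (see \cite[Proposition~4.14]{eisenbud2005geometry}). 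Your Hilbert-function sketch for the lower bound (``cutting a determinantal variety by generic hyperplanes preserves its Hilbert function in low degrees up to the regularity'') is in fact \emph{also} a manifestation of Cohen--Macaulayness, so you are implicitly using it there too without naming it. Once you add ``$R/I(\X_r)$ is Cohen--Macaulay'' to your toolbox, both your upper and lower bounds collapse to a citation of Theorem~\ref{thm:worst_case}, and the explicit construction you propose for the lower bound becomes unnecessary (tightness is already part of that theorem).
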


For example, consider the specific case of checking entanglement of a qubit--qudit subspace, i.e. $r=1, n_1=2$. Then this theorem shows that the $k=2$ level of the hierarchy is all that is needed, so entanglement of a subspace in this case can be determined by solving a $\binom{n_2}{2} \times \binom{\dim \U + 1}{2}$ linear system (see Section~\ref{sec:equiv}). Furthermore, code that does this is provided in~\cite{JLV22a} (although it was not known at that time that the code gave an exact answer when $k=2$ in this case).


To describe our worst-case bounds more generally, it will be convenient to describe the Nullstellensatz hierarchy in greater generality. Let $I=\{\sum_i f_i g_i : g_i \text{ is a polynomial}\}$ be the ideal generated by $f_1,\dots, f_p$, let $R_k$ be the set of homogeneous polynomials on $\H$ of degree $k$, and let $I_k = I \cap R_k$. Similarly, let $I(\U)$ be the ideal generated by the linear equations defining $\U$. The $k$-th degree of the \textit{Nullstellensatz hierarchy} checks if $I_k+I(\U)_k=R_k$. If this equality holds, then $\U$ is $\X$-tangled. If $I=I(\X)$ is the set of \textit{all} polynomials vanishing on $\X$ (which is perhaps more than enough to simply \textit{cut out} $\X$), then this equality is equivalent to $\X^k \cap \U^{\otimes k} = \{0\}$, reproducing the simplified hierarchy described above (see Section~\ref{sec:equiv}).

To state our worst-case degree bounds, we require the technical notions of \textit{Cohen-Macaulayness} of $R/I$ and the \textit{(Castelnuovo-Mumford) regularity} $\reg(R/I)$, which are defined in Section~\ref{sec:null}.
\begin{theorem}[Worst case degree bound]\label{thm:intro_worst_case}
Let $N=\dim(\H)$, and let $\X$ be a variety cut out by homogeneous polynomials $f_1,\dots, f_p$ of degree at most $d$. Then degree $k=N(d-1)+1$ of the Nullstellensatz hierarchy suffices to certify any $\X$-tangled subspace. If $R/I$ is \textit{Cohen-Macaulay}, then the worst-case degree is precisely $\reg(R/I)+1$.
\end{theorem}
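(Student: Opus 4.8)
The plan is to pass from the geometric statement ``$\U$ is $\X$-tangled'' to a purely algebraic statement about the Hilbert function of $R/(I+I(\U))$, and then control when this Hilbert function vanishes in degree $k$ using, respectively, an effective Nullstellensatz for the general bound and the theory of Castelnuovo--Mumford regularity for the Cohen--Macaulay refinement. First I would recall (or reprove from the discussion in Section~\ref{sec:equiv}) that $\U$ is $\X$-tangled if and only if $\proj(\U) \cap \X = \emptyset$ as projective varieties, which by the projective Nullstellensatz is equivalent to the saturation $\overline{I+I(\U)}$ being the irrelevant ideal $R_+$, i.e. $(I+I(\U))_k = R_k$ for all sufficiently large $k$. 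So the content of the theorem is an explicit bound on \emph{how large} $k$ must be.

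For the general bound $k = N(d-1)+1$: the ideal $I+I(\U)$ is generated in degrees at most $d$ (the $f_i$ have degree $\le d$, and the linear forms defining $\U$ have degree $1 \le d$). When $\U$ is $\X$-tangled this ideal has empty projective zero locus, so its saturation is $R_+$. The standard effective projective Nullstellensatz (e.g. via the degree bounds of Lazard or the Macaulay-type bounds for the case of empty projective variety) then says that all of $R_k$ is reached once $k$ exceeds the sum of the degrees of a generating set minus (something), and in $N$ projective variables with all generators of degree $\le d$ this gives the bound $N(d-1)+1$; I would cite the precise form (Lazard's theorem on the regularity of ideals with no projective zeros, or the combinatorial argument counting monomials not divisible by any leading term). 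The key point here is just that the vanishing locus is empty, which makes the effective bounds much better than in the general Nullstellensatz, and that linear forms contribute only $d-1 = $ their degree minus one to each ``dimension's worth'' of the bound.

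For the sharp statement under the Cohen--Macaulay hypothesis: here I would argue that the worst-case degree is governed exactly by $\reg(R/I)+1$. The upper direction: if $R/I$ is Cohen--Macaulay of regularity $\rho := \reg(R/I)$, then for a generic (or worst-case, after a change of coordinates reducing to a finite linear section) choice of the linear forms cutting out $\U$, quotienting by those linear forms preserves Cohen--Macaulayness and does not increase regularity, so $R/(I+I(\U))$ has regularity $\le \rho$; since $\X$-tangledness forces this ring to be finite-dimensional (Artinian), its Hilbert function vanishes in all degrees $> \rho$, giving $(I+I(\U))_k = R_k$ for $k \ge \rho+1$. The lower direction (that $\rho+1$ is actually \emph{achieved}): I would exhibit a specific $\U$ --- obtained by slicing $\X$ with a generic linear subspace of codimension equal to $\dim \X + 1$ so that the section is empty but ``as large as possible'' --- for which the Artinian reduction $R/(I+I(\U))$ has its top nonzero graded piece exactly in degree $\rho$; this is where Cohen--Macaulayness is used essentially, since it guarantees a regular sequence of linear forms of the right length and hence that the socle degree of the Artinian reduction equals $\reg(R/I)$.

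The main obstacle I anticipate is the lower bound in the Cohen--Macaulay case: producing a genuine worst-case $\U$ and proving its Artinian reduction has socle in degree exactly $\reg(R/I)$, rather than something smaller. This requires knowing that a generic linear section behaves as expected (no unexpected drop in regularity, which follows from Cohen--Macaulayness plus genericity of the section) \emph{and} that one cannot do better than the generic section for this particular quantity --- intuitively clear, since a non-generic $\U$ that still misses $\X$ can only make $R/(I+I(\U))$ ``larger,'' but this monotonicity needs to be made precise, likely via semicontinuity of the Hilbert function in flat families of linear sections. The general-bound half is comparatively routine given a citation to Lazard/Macaulay-type effective Nullstellensätze, and the upper bound in the Cohen--Macaulay half is standard regularity bookkeeping; the delicate point is pinning down that the bound $\reg(R/I)+1$ is tight.
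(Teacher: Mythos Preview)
Your outline is correct, but the paper's treatment of the two halves is rather different from yours, and it is worth flagging where.

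For the general bound $k=N(d-1)+1$, you propose to invoke Lazard's effective projective Nullstellensatz as a black box. This works: with generators of degree at most $d$ in $N$ variables and empty projective zero locus, Lazard's bound gives exactly $N(d-1)+1$. The paper instead proves this from scratch (Theorem~\ref{thm:worst_case_bound}): it takes a basis of $(I+I(\U))_d$, applies Noether normalization to extract $N$ algebraically independent degree-$d$ elements $g_1,\dots,g_N$, shows via a radical computation that $V(g_1,\dots,g_N)=\emptyset$, and then computes the Hilbert series of $R/\langle g_1,\dots,g_N\rangle$ explicitly as $(1+t+\cdots+t^{d-1})^N$, which vanishes in degree $N(d-1)+1$. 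So the paper is essentially reproving the relevant case of Lazard's bound in a self-contained way; your approach is shorter but imports more.

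For the Cohen--Macaulay refinement, the paper simply cites \cite[Proposition~4.14]{eisenbud2005geometry} (Theorem~\ref{thm:worst_case}), so your sketch is already more detailed than what the paper provides. One point to tighten: your upper-bound argument says quotienting by the linear forms of $I(\U)$ preserves regularity ``for a generic (or worst-case, after a change of coordinates\dots)'' choice. But the statement is for \emph{every} $\X$-tangled $\U$, and the linear forms cutting out a particular $\U$ need not themselves form a regular sequence on $R/I$. The fix is standard: since $V(I+I(\U))=\emptyset$ and $R/I$ is Cohen--Macaulay of dimension $e=\dim\X$, one can select $e$ linear combinations of the linear forms in $I(\U)_1$ that do form a regular sequence on $R/I$; these already yield an Artinian quotient with regularity exactly $\reg(R/I)$, and adding the remaining linear forms can only shrink the graded pieces. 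Your tightness argument via a generic Artinian reduction is the right idea and is what underlies the cited proposition.
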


This theorem applies to all of the varieties mentioned in Example~\ref{ex:varieties}; see Table~\ref{table:nullstellensatz}. Furthermore, many of these varieties satisfy the Cohen-Macaulay condition, allowing us to precisely determine the worst-case degree required. Since the $\X$-tangled subspace problem is NP hard, we can expect the worst-case degree $k$ to be at least linear in $N=\dim(\H)$ in general~\cite{barak2017quantum}. Conversely, the degree bound $N(d-1)+1$ appearing in Theorem~\ref{thm:intro_worst_case} shows that linear scaling is often sufficient, since $d$ is constant in $N$ for many varieties of interest.

\section{Background and notation}\label{sec:background}

In this section we review some background and notation for this work. See e.g. \cite{NC00,wilde2013quantum,Wat18} for more details on quantum information theory, and \cite{shafarevich1988basic,landsberg2012tensors,harris2013algebraic} for more details on varieties and the symmetric algebra. We work coordinate-independently; see and~\cite{hoffmann1971linear,artin2011algebra,halmos2017finite} for background on abstract algebra.

Let $\H_1,\dots, \H_m$ be (finite dimensional, complex) Hilbert spaces, and let $\H=\H_1\otimes \dots \otimes \H_m$ be the Hilbert space with Hermitian form $\ip{\cdot}{\cdot}$ induced from the Hermitian forms on $\H_i$, which we take to be antilinear in the first argument. Let $\H^*$ be the dual space. The Hermitian inner product defines an antilinear isomorphism $\H \cong \H^*$ given by $\psi^*=\ip{\psi}{-}$. Let $(\cdot,\cdot): \H^* \times \H \rightarrow \complex$ be the bilinear form $(f,\psi)=f(\psi)$. Let $N=\dim(\H)$, and let $e_1,\dots, e_N \in \H$ be an orthonormal basis with dual basis $x_1,\dots, x_N \in \H^*$.

For a Hilbert space $\J$, let $\Hom(\J,\H)$ be the set of linear maps (homomorphisms) from $\J$ to $\H$, Let $\End(\H)=\Hom(\H,\H)$ be the set of endomorphisms of $H$, let $\Unitary(\H)$ be the set of unitary operators, let $\Herm(\H)\subseteq \End(\H)$ be the set of Hermitian operators, let $\Pos(\H)\subseteq \Herm(\H)$ be the set of positive semidefinite operators, and let $\density{\H}\subseteq \Pos(\H)$ be the set of \textit{density operators} (or \textit{mixed states}, or simply \textit{states}): positive semidefinite operators of trace one. Let $\P(\H) \subseteq \Density(\H)$ be the set of \textit{pure states}: rank-one states, i.e. states that can be written as $\psi \psi^*$ for a unit vector $\psi\in \H$.

For a subset $S \subseteq [m]$, let $\tr_S : \End(\H) \rightarrow \End(\bigotimes_{i \notin S} \H_i)$ be the partial trace $\tr_S=\I_{\otimes_{i \notin S} \H_i} \otimes \tr_{\otimes_{i\in S} \H_i}$. For an integer $i \in [d]$ let $\tr_i : \End(S^d(\H)) \rightarrow \End(S^{d-i}(\H))$ be the partial trace over any $i$ copies of $\H$ (it does not matter which).

Let $\lambda_{\min}(\cdot)$ and $\lambda_{\max}(\cdot)$ be the minimum and maximum eigenvalues, respectively. For an integer tuple $\bfalpha=(\alpha_1,\dots, \alpha_d),$ let $|\bfalpha|=\alpha_1+\dots+\alpha_d$.

\subsection{Symmetric algebra, ideals, and varieties}

For a permutation $\sigma \in \frakS_d$, let $U_{\sigma} \in \Unitary(\H^{\otimes d})$ be the corresponding permutation of tensor factors $U_{\sigma} ( \psi_1 \otimes \dots \otimes \psi_d)=\psi_{\sigma^{-1}(1)}\otimes \dots \otimes \psi_{\sigma^{-1}(d)}$, extended linearly. The \textit{symmetric subspace} $S^d(\H)\subseteq \H^{\otimes d}$ is the subspace of vectors $v$ for which $U_{\sigma} v = v$ for all $\sigma \in \frakS_d$. Let $\Pi_d=\frac{1}{d!} \sum_{\sigma \in \frakS_d} U_{\sigma}$ be the orthogonal projection onto the symmetric subspace. The \textit{antisymmetric subspace} $\Lambda^d(\H)\subseteq \H^{\otimes d}$ is the set of vectors $v$ for which $U_{\sigma} v = \setft{sign}(\sigma) v$ for all $\sigma \in \frakS_d$. The antisymmetric subspace is spanned by vectors of the form $v_1 \w \cdots \w v_d := \frac{1}{d!} \sum_{\sigma \in \frakS_d} \setft{sgn}(\sigma) v_{\sigma(1)} \ootimes v_{\sigma(d)}$.

For $f\in S^d(\H^*)$ and $\psi \in \H$ we use the shorthand $f(\psi) := (f,\psi^{\otimes d})$. For $g \in S^c(\H^*)$ we define $f \cdot g := \Pi_{c+d} (f\otimes g) \in S^{c+d}(\H^*)$, where $\Pi_{c+d}$ is the projection onto $S^{c+d}(\H^*)$. This makes $S^{\bullet}(\H^*):= \bigoplus_{d=0}^{\infty} S^d(\H^*)$ into an algebra, called the \textit{symmetric algebra}. Let $S^{\ud} (\H^*)= \bigoplus_{c=0}^{d} S^c(\H^*)$.

 Note that $S^{d}(\H^*)$ is isomorphic to the space $\complex[\H]_{d}$ of homogeneous degree $d$ polynomials on $\H$, by the map which sends $\ell^{\otimes d}$ to $\ell^{d}$ for $\ell \in \H^*$, extended linearly. Under this isomorphism, $f \cdot g$ corresponds to the product of polynomials. For this reason, we often refer to elements of $S^{\ud}(\H^*)$ as \textit{polynomials} and elements of $S^d(\H^*)$ as \textit{homogeneous polynomials} (or \textit{forms}) of degree $d$.



An \textit{ideal} $I\subseteq S^{\bull}(\H^*)$ is a linear subspace for which $S^{\bull}(\H^*) \cdot I \subseteq I$. For a subset $J \subseteq S^{\bull}(\H^*)$, let $\langle J \rangle := \{\sum_i f_i g_i : f_i \in S^{\bull}(\H^*), g_i \in J\}$ be the ideal generated by $J$. By Hilbert's basis theorem, every ideal is generated by finitely many polynomials $I=\langle f_1,\dots, f_p\rangle$. An ideal $I$ is \textit{homogeneous} if it can be generated by (finitely many) homogeneous polynomials $f_i \in S^{d_i}(\H^*)$. We say that $I$ is \textit{generated in degree (at most) $d$} if one can take $d_i = d$ (or $d_i \leq d$). It is a standard fact that if $I$ is homogeneous then it can be written as $I=\bigoplus_{d=0}^{\infty} I_d$, with each $I_d \subseteq S^d(\H^*)$ a linear subspace called the \textit{degree-$d$ component} of $I$. For a positive integer $c$ let $I_{\underline{c}}=\bigoplus_{d=0}^c I_d$.

For a subspace $\U \subseteq \H$, let $\U^{\perp} \subseteq \H^*$ be the orthogonal complement of $\U$ with respect to the bilinear form $(\cdot,\cdot)$, and let $\Pi_{\U} \in \Pos(\H)$ be the orthogonal projection (with respect to $\ip{\cdot}{\cdot}$) onto $\U$. For a homogeneous ideal $I\subseteq S^{\bull}(\H^*)$ with degree-$d$ component $I_d \subseteq S^d(\H^*)$, we will often consider $I_d^{\perp}\subseteq S^d(\H)$. Let $\Pi_{I,d}:= \Pi_{I_d^{\perp}} \in \Pos(S^d(\H))$, and let $\Pi_d:=\Pi_{S^d(\H)} \in \Pos(\H^{\otimes d})$. We will often view $\Pi_{I,d}$ as an element of $\Pos(\H^{\otimes d})$ by setting it to zero on the orthogonal complement to $S^d(\H)$ in $\H^{\otimes d}$.

For a homogeneous ideal $I$, let $V(I)\subseteq P(\H)$ be the set of pure states $\psi \psi^*$ for which $f(\psi)=0$ for all $f \in I$ (homogeneity ensures that this condition does not depend on phase). A \textit{(projective) variety} $\X \subseteq \P(\H)$ is a subset of the form $\X=V(I)$ for a homogeneous ideal $I$. One then says that $I$ \textit{cuts out} $\X$. The \textit{ideal of $\X$}, denoted $I(\X)$, is the set of all polynomials vanishing on $\X$.
A special case of Hilbert's Nullstellensatz is the following:
\begin{theorem}[Hilbert's weak nullstellensatz]\label{thm:hilbert}
Let $I \subseteq S^{\bull}(\H^*)$ be a homogeneous ideal. Then ${V(I)=\emptyset}$ if and only if $I_k=S^k(\H^*)$ for $k \gg 0$.
\end{theorem}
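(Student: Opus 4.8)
The plan is to reduce this to the classical affine Nullstellensatz over $\complex$, exploiting the homogeneity of $I$ to translate between the projective variety $V(I)$ and the affine cone it spans.

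First I would dispatch the easy direction. If $I_k = S^k(\H^*)$ for some $k$, then $x_1^k,\dots,x_N^k \in I$, so any pure state $\varppsi \in V(I)$ would satisfy $x_i(\psi)^k = 0$ for all $i$ and hence $\psi = 0$, impossible for a unit vector; thus $V(I)=\emptyset$. I would also record that the condition ``$I_k = S^k(\H^*)$'' is inherited by all larger degrees, since $S^1(\H^*)\cdot S^k(\H^*) = S^{k+1}(\H^*)$, so there is no tension between ``for some $k$'' and ``for $k \gg 0$''.

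For the converse, suppose $V(I) = \emptyset$. After setting aside the trivial case $I = S^{\bull}(\H^*)$, I would pass to the affine zero set $Z(I) = \{v \in \H : f(v) = 0 \text{ for all } f \in I\} \subseteq \H \cong \complex^N$. Homogeneity of $I$ makes $Z(I)$ a cone, so a nonzero point $v \in Z(I)$ would produce the pure state $vv^*/\norm{v}^2 \in V(I)$; emptiness of $V(I)$ then forces $Z(I) = \{0\}$. Now I would invoke Hilbert's Nullstellensatz in the form $\sqrt{I} = I(Z(I))$, giving $\sqrt{I} = I(\{0\}) = \frakm$ with $\frakm = \langle x_1,\dots,x_N\rangle$ the irrelevant maximal ideal. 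Consequently each $x_i$ has a power lying in $I$; taking $d$ large enough that $x_i^d \in I$ for all $i$, a pigeonhole argument shows that every degree-$k$ monomial for $k > N(d-1)$ is divisible by some $x_i^d$ and hence lies in $I$. Since such monomials span $S^k(\H^*)$, this yields $I_k = S^k(\H^*)$ for all $k > N(d-1)$, as desired.

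I do not expect a genuine obstacle here — the statement is a standard repackaging of the Nullstellensatz — but the step that most deserves care is the passage from the projective hypothesis to ``$Z(I) = \{0\}$'': one must use the cone structure correctly and separately handle the degenerate ideal, and one must be aware that deriving $\frakm \subseteq \sqrt{I}$ from $Z(I) = \{0\}$ uses the \emph{strong} Nullstellensatz (equivalently, the weak Nullstellensatz applied in each affine chart $x_i \neq 0$ after dehomogenizing $I$). The remaining steps — the cone argument and the pigeonhole count giving the explicit degree $N(d-1)+1$ — are elementary bookkeeping.
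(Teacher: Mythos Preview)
Your proof is correct and follows the standard route through the affine (strong) Nullstellensatz, but there is nothing to compare it against: the paper does not prove Theorem~\ref{thm:hilbert} at all. It is stated in the background section as a known special case of Hilbert's Nullstellensatz and then used as a black box (notably in the proof of Theorem~\ref{thm:null_sub}). Your explicit degree bound $N(d-1)+1$ even anticipates the constant appearing later in Theorem~\ref{thm:worst_case_bound}, though the paper derives that bound there by a different argument using Noether normalization rather than this pigeonhole count.
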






\section{$\X$-arability and $\X$-tendability}\label{sec:xarable}

Let $I \subseteq S^{\bull}(\H^*)$ be a homogeneous ideal and let $\X=V(I) \subseteq \P(\H)$ the associated (projective) variety. We say that a state $\rho \in \density{\H}$ is \textit{$\X$-arable} if $\rho\in \conv (\X)$, and otherwise we that $\rho$ is \textit{$\X$-tangled}. Our main result in this section is an \textit{$\X$-tensions} hierarchy for $\X$-arability, which generalizes the well-known symmetric extensions hierarchy for separability~\cite{DPS04}. We use this result to obtain explicit hierarchies for all of the varieties in Example~\ref{ex:varieties}. In the case of fermionic separability, we also prove a \textit{quantum de Finetti theorem}, giving quantitative convergence guarantees for this hierarchy.

Recall that $I_k^{\perp} \subseteq S^k(\H)$ is the orthogonal complement to $I_k \subseteq S^k(\H^*)$ with respect to the bilinear pairing (see Section~\ref{sec:background} for more details).
\begin{definition}
For a state $\rho \in \Density(\H)$, a \textit{$(k,\X)$-tension} of $\rho$ is a state $\sigma_k\in \Density(\H^{\otimes k})$ for which the following two properties hold:
\begin{enumerate}
\item $\setft{Im}(\sigma_k) \subseteq I_k^{\perp}$, and
\item $\tr_{k-1}(\sigma_k)=\rho$.
\end{enumerate}
We say $\rho$ is \textit{$(k,\X)$-tendable} if there exists a $(k,\X)$-tension of $\rho$.
\end{definition}
This is a slight abuse of notation, as a $(k,\X)$-tension depends on the ideal $I$ that one chooses to cut out $\X$.
%
%

\begin{theorem}[$\X$-tensions hierarchy for $\X$-arability]\label{thm:xtension}
Let $\rho \in \density{\H}$ be a state. Then $\rho$ is $\X$-arable if and only if $\rho$ is $(k,\X)$-tendable for all $k$.
\end{theorem}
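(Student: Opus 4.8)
The plan is to prove both directions. For the "only if" direction, suppose $\rho$ is $\X$-arable, so $\rho = \sum_i p_i \, \psi_i \psi_i^*$ with $\psi_i \psi_i^* \in \X$. I would set $\sigma_k := \sum_i p_i \, (\psi_i^{\otimes k})(\psi_i^{\otimes k})^*$. This is manifestly a density operator on $\H^{\otimes k}$, its partial trace over $k-1$ copies is $\sum_i p_i \, \psi_i \psi_i^* = \rho$ (since $\psi_i$ is a unit vector), and its image is spanned by the vectors $\psi_i^{\otimes k}$. The one point needing care: why is $\psi_i^{\otimes k} \in I_k^\perp$? Because $\psi_i \psi_i^* \in \X = V(I)$ means $f(\psi_i) = (f, \psi_i^{\otimes \deg f}) = 0$ for all homogeneous $f \in I$; in particular for $f \in I_k$ we get $(f, \psi_i^{\otimes k}) = 0$, i.e. $\psi_i^{\otimes k} \perp I_k$ under the bilinear pairing. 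Hence $\im(\sigma_k) \subseteq I_k^\perp$, so $\sigma_k$ is a $(k,\X)$-tension, and this works for every $k$.

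For the "if" direction — the substantive one — suppose $\rho$ is $(k,\X)$-tendable for all $k$; I want to conclude $\rho \in \conv(\X)$. The natural approach is a compactness/limiting argument in the spirit of the quantum de Finetti theorem. First observe that the $(k,\X)$-tensions $\sigma_k$ need not be permutation-symmetric a priori, but one can symmetrize: since $\im(\sigma_k) \subseteq I_k^\perp \subseteq S^k(\H)$ (note $I_k^\perp$ lies in the symmetric subspace because $I_k \subseteq S^k(\H^*)$ and the pairing respects the symmetric structure — this is exactly why the symmetric subspace enters), $\sigma_k$ is automatically supported on $S^k(\H)$, hence already Bose-symmetric, and its partial trace down to any number $j \le k$ of copies is well-defined and independent of which copies are traced out. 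Define $\rho_j^{(k)} := \tr_{k-j}(\sigma_k)$; these satisfy $\rho_1^{(k)} = \rho$ and $\im(\rho_j^{(k)}) \subseteq I_j^\perp$ for all $j \le k$ (partial trace of a symmetric state supported on $I_k^\perp$ lands in $I_j^\perp$ — this needs a short argument using that $I$ is an ideal, so $I_{k-j} \cdot S^j(\H^*) \subseteq I_k$ forces the compatible nesting). By compactness of the set of density operators, pass to a subsequence so that for each fixed $j$ the operators $\rho_j^{(k)}$ converge as $k \to \infty$ to some $\rho_j \in \Density(S^j(\H))$ with $\rho_1 = \rho$, $\tr_{j'-j}(\rho_{j'}) = \rho_j$ for $j \le j'$, and $\im(\rho_j) \subseteq I_j^\perp$. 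Now invoke the quantum de Finetti theorem (Størmer / Hudson–Moody, or the Christandl–König–Mitchison–Renner finite version): a consistent family $(\rho_j)$ of symmetric states arises from a probability measure $\mu$ on pure states, $\rho_j = \int (\psi\psi^*)^{\otimes j} \, d\mu(\psi)$. In particular $\rho = \rho_1 = \int \psi\psi^* \, d\mu(\psi)$.

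The hard part is showing $\mu$ is supported on $\X$, so that this integral representation exhibits $\rho$ as a limit of elements of $\conv(\X)$ — and then concluding $\rho \in \conv(\X)$ itself (using that $\conv(\X)$ is closed, which holds because $\X$ is compact). To see $\mathrm{supp}(\mu) \subseteq \X$: fix homogeneous $f \in I_d$ cutting out part of $\X$. Then $\rho_d$ has image in $I_d^\perp$, so $\ip{\Pi_{I,d}^\perp}{\rho_d} = 0$ where $\Pi_{I,d}^\perp$ projects onto $\spn(I_d) \subseteq S^d(\H)$ (identifying $I_d \subseteq S^d(\H^*)$ with a subspace of $S^d(\H)$ via the Hermitian structure). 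Unpacking $\rho_d = \int (\psi\psi^*)^{\otimes d} d\mu$ gives $\int \|\text{proj}_{I_d}(\psi^{\otimes d})\|^2 \, d\mu(\psi) = 0$, hence $\mu$-almost every $\psi$ has $\psi^{\otimes d} \perp I_d$ (bilinearly), i.e. $f(\psi) = 0$ for all $f \in I_d$. Running over all degrees $d$ (countably many suffice since $I$ is finitely generated, so it's enough to kill the finitely many generators' degrees) and using continuity, we get $\mathrm{supp}(\mu) \subseteq V(I) = \X$. Therefore $\rho \in \conv(\X)$, completing the proof. I expect the main technical friction to be the bookkeeping around the identification $I_d^\perp \subseteq S^d(\H)$ versus $I_d \subseteq S^d(\H^*)$ and verifying the nesting $\tr(\cdot)$ maps $I_k^\perp$-supported states to $I_j^\perp$-supported states cleanly; all of this is routine but must be stated carefully coordinate-independently as elsewhere in the paper.
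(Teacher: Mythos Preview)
Your proof is correct and follows the same essential strategy as the paper's---compactness plus a quantum de Finetti argument---but the execution differs in a way worth noting. The paper fixes a single degree $d$ in which $I$ is generated, takes a convergent subsequence of $\tau_k := \tr_{k-d}(\sigma_k)$, and applies de Finetti only at that level: the limit $\tau$ is infinitely symmetrically extendable, hence $\tau = \sum_i p_i (\psi_i\psi_i^*)^{\otimes d}$ with a \emph{finite} convex decomposition; since $\im(\tau)\subseteq I_d^\perp$ and $I$ is generated in degree $d$, each $\psi_i\psi_i^*\in\X$ immediately, and $\rho=\tr_{d-1}(\tau)$ is $\X$-arable. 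Your version instead takes limits at every level via a diagonal argument, invokes the measure-theoretic (St{\o}rmer/Hudson--Moody) de Finetti theorem to get $\rho_j=\int(\psi\psi^*)^{\otimes j}\,d\mu$, and then argues $\mathrm{supp}(\mu)\subseteq\X$ degree by degree. Both work; the paper's route is a bit more economical because working at the single generating degree $d$ sidesteps the diagonal argument, the measure theory, and the nesting claim $\tr_{k-j}(I_k^\perp\text{-supported})\subseteq I_j^\perp\text{-supported}$ (which, as you correctly note, follows from the ideal property $I_j\cdot S^{k-j}(\H^*)\subseteq I_k$, but is simply not needed in the paper's organization).
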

\begin{proof}
If $\rho$ is $\X$-arable, then $\rho=\sum_i p_i \varp{\psi}{i}$ for some $\varp{\psi}{i} \in \X$ and probability vector $p$. Then $\rho$ is $(k,\X)$-tendable to $\sigma_k := \sum_i p_i \varpp{\psi}{i}^{\otimes k}$.

Conversely, let $I$ be generated in degree at most $d$, and for each $k \geq d$ let $\tau_k = \tr_{k-d}(\sigma_k)$. By compactness of the set of density matrices, we can find a subsequence $\{\tau_{k_j}\}$ so that $\lim_{j \rightarrow \infty} \tau_{k_j}$ exists (call it $\tau$). Since $\im(\sigma_k) \subseteq S^k(\H)$, $\tau$ has a symmetric extension to $k_j$ copies for arbitrarily large values of $k_j$, so $\tau \in \conv\{(\varppsi)^{\otimes d} : \psi \in \H\}$ by~\cite{DPS04} (or the bosonic quantum de Finetti theorem reproduced in Theorem~\ref{thm:bosonicdefinetti} below). Let $\tau=\sum_i p_i \varpp{\psi}{i}^{\otimes d}$ with each $p_i >0$. Since $\im(\tau)\subseteq I_d^{\perp}$, it follows that $\psi_i^{\otimes d} \in I_d^{\perp}$ for all $i$, so $\varp{\psi}{i} \in \X$ for all $i$. Hence, $\rho= \tr_{d-1}(\tau)=\sum_i p_i \varp{\psi}{i}$ is $\X$-arable.
%
%
%
%
%
%
\end{proof}

\begin{remark}\label{rmk:ppt}
Theorem~\ref{thm:xtension} can be strengthened at each level by requiring that the $(k,\X)$-tension $\sigma_k$ has positive partial transpose (PPT) with respect to some bipartitions of $\H^{\otimes k}$ (see \cite{Per96,horodecki2001separability} or~\cite{Wat18}), which also results in a complete hierarchy for $\X$-arability. This remark applies to many of the other results in this work.
\end{remark}

\subsection{Computing $I_k^{\perp}$}\label{sec:Ik}

In order to run the $\X$-tensions hierarchy described in Theorem~\ref{thm:xtension}, one needs to find a homogeneous ideal $I$ that cuts out out $\X$, and to describe $I_k^{\perp}$. Abstractly, one can take $I=I(\X)$, in which case
\ba\label{eq:Iperp}
I_k^{\perp} = \spn\{ \psi^{\otimes k} : \varppsi \in \X\}
\ea
(we referred to this space as $\X^k$ in the introduction). However, it can be difficult to write down a basis for this space. The conceptually cleanest way to do this is to decompose $I_k^{\perp}$ into irreducible representations of the local unitary group (see Section~\ref{sec:rep}). More concretely, it is often easier to write down a finite set of homogeneous polynomials $f_1,\dots, f_p$ that cut out $\X$, and let $I=\langle f_1,\dots, f_p \rangle.$

In this section we review some background on representation theory, and then give explicit descriptions of $I_k^{\perp}$ for all of the varieties in Example~\ref{ex:varieties}.



%

\subsubsection{Local unitary symmetry}\label{sec:rep}

Many varieties $\X\subseteq \P(\H)$ that are of interest in quantum information are invariant under local unitaries ${U_1 \otimes \dots \otimes U_m}$. In this case, it is clear from the expression~\eqref{eq:Iperp} that $I(\X)_k^{\perp} \subseteq S^k(\H)$ is invariant under powers of local unitaries $(U_1 \otimes \dots \otimes U_m)^{\otimes k}$, and the same is often true of other ideals $I$ that we may choose to cut out $\X$. When this is the case, we can describe $I_k^{\perp}$ by decomposing it into irreducible representations (irreps) of the local unitary group. For a finite dimensional Hilbert space $\J$, by \textit{Schur-Weyl duality} the irreps of the unitary group $\Unitary(\J)$ that appear in $\J^{\otimes d}$ are indexed by \textit{integer partitions} $\lambda \vdash d$, which are tuples of non-increasing, positive integers summing to $d$. See e.g.~\cite{landsberg2012tensors,fulton2013representation} for more details. The irrep indexed by $\lambda$ is denoted $S^{\lambda}(\J)\subseteq \J^{\otimes d}$. It follows that every irrep of the local unitary group on $S^d(\H)$ is of the form $S^{\lambda_1}(\H_1)\otimes \dots \otimes S^{\lambda_m}(\H_m)$ for some integer partitions $\lambda_i \vdash d$. 

\subsubsection{Examples}\label{sec:xarable_examples}

Here we give explicit expressions for $I_k^{\perp}$ for all of the varieties in Example~\ref{ex:varieties}. These are computed using~\eqref{eq:Iperp} along with standard results that can be found e.g. in~\cite{landsberg2012tensors}.

\begin{example}[Separability]\label{ex:sep}
Let $\H=\H_1\otimes \dots \otimes \H_m$. The set of pure product states
\ba
\X_{\Sep}=\{\psi_1^{}\psi_1^* \otimes \dots \otimes \psi_m^{} \psi_m^* : \varp{\psi}{i} \in \P(\H_i)\} \subseteq \P(\H)
\ea
satisfies
\ba
I(\X_{\Sep})_k^{\perp}=S^k(\H_1)\otimes \dots \otimes S^k(\H_m).
\ea
\end{example}

\begin{example}[Bosonic separability]
Let $\H=S^m(\J)$. The set of bosonic pure product states
\ba
\X_{\Sep}^{\vee}= \{(\varppsi)^{\otimes m} : \varppsi \in \P(\J)\} \subseteq \P(\H)
\ea
satisfies
\ba
I(\X_{\Sep}^{\vee})_k^{\perp}=S^{km} (\J).
\ea
\end{example}

\begin{example}[Schmidt rank]\label{ex:schmidt}
Let $\H=\H_1 \otimes \H_2$. The set $\X_r \subseteq \P(\H)$ of pure states of Schmidt rank at most $r$ satisfies
\ba\label{eq:schmidt_rep}
I(\X_r)_k^{\perp} = \bigoplus_{\substack{\lambda \vdash k \\ \ell(\lambda) \leq r}} S^\lambda(\H_1)\otimes S^\lambda(\H_2).
\ea
Alternatively, $I(\X_r)$ is generated in degree $r+1$ by the span of the $(r+1)\times (r+1)$ minors, given by
\ba
I(\X_r)_{r+1}=\Lambda^{r+1} (\H_1^*)\otimes \Lambda^{r+1}(\H_2^*) \in S^{r+1}(\H^*),
\ea
from which one can compute $I_k$ directly. In concrete terms, if we let $\Pi_{\J,k}^{\w}$ be the orthogonal projection onto the antisymmetric space $\Lambda^k(\J)$, then
\ba
I(\X_r)_k^{\perp}=\ker(\Pi_{\H_1,k}^{\w}\otimes \Pi_{\H_2,k}^{\w}\otimes \I_{\H}^{\otimes k-r-1})\cap S^k(\H).
\ea
We note that $\Pi_{\H_1,k}^{\w}\otimes \Pi_{\H_2,k}^{\w}\otimes \I_{\H}^{\otimes k-r-1}$ was given the name $\Phi_r^k$ in the prior work~\cite{JLV22a} (see also Section~\ref{sec:equiv}).
\end{example}

\begin{example}[Matrix product states]\label{ex:mps}
Let $\H=\H_1\ootimes \H_m$. Let $\X_{\MPS, r}\subseteq \P(\H)$ be the set of matrix product states of bond dimension at most $r$:
\ba
\X_{\MPS, r}=\bigcap_{j=1}^{m-1} \{\varppsi \text{ of Schmidt rank $\leq r$ in the bipartition } (\textbigotimes_{i=1}^j \H_j) \otimes (\textbigotimes_{i=j+1}^m \H_j)\}.
\ea
It follows that $\X_{\MPS, r}$ is cut out in degree $r+1$ by the $(r+1)\times (r+1)$ minors of $\psi$ with respect to each of the $m-1$ bipartite cuts. These generate an ideal $I$ which can be described abstractly by
\ba
I_k^{\perp} = \bigcap_{j=1}^{m-1} \bigoplus_{\substack{\lambda \vdash k \\ \ell(\lambda) \leq r}} S^\lambda(\H_1 \otimes \dots \otimes \H_j)\otimes S^\lambda(\H_{j+1} \otimes \dots \otimes \H_m),
\ea
or more concretely,
\ba
I_k^{\perp}=\bigcap_{j=1}^{m-1} \ker(\Pi_{\H_1\ootimes \H_j, k}^{\w} \otimes \Pi_{\H_{j+1}\ootimes \H_m,k}^{\w} \otimes \I_{\H}^{\otimes k-r-1}) \cap S^k(\H).
\ea
Note that one can use this formula to compute an explicit basis for $I_k^{\perp}$ in time polynomial in $N=\dim(\H)$ if $r$, $m$, and $k$ are fixed. A similar, explicit computation of $I_k^{\perp}$ can be carried out for arbitrary \textit{tree tensor network states}, which generalize matrix product states and are characterized by having Schmidt rank $\leq r$ with respect to a collection of bipartitions forming the edges of a tree graph~\cite{barthel2022closedness}. We omit this computation for brevity.
\end{example}

\begin{example}[Fermionic separability]\label{ex:fermion}
Let $\H=\Lambda^m(\J)$. Let $\X_{\Sep}^{\w}\subseteq \P(\H)$ be the set of fermionic pure product states introduced in~\cite[Definition 6.1]{grabowski2011entanglement}:
\ba
\X_{\Sep}^{\w} = \{(\psi_1 \w \cdots \w \psi_m)(\psi_1 \w \cdots \w \psi_m)^* : \psi_i \in \J, \norm{\psi_i}=1\} \subseteq \P(\H)\}.
\ea
Then
\ba
I(\X_{\Sep}^{\w})_k^{\perp}=S^{k^{(m)}}(\J),
\ea
where $k^{(m)}=(\underbrace{k,\dots, k}_{m})$ and $S^{k^{(m)}}(\J)$ is the irreducible representation of $\Unitary(\J)$ indexed by $k^{(m)}$ (see Section~\ref{sec:rep}). The set $\X_{\Sep}^{\w}$ is also called the \textit{Grassmannian}. Alternatively, $\X_{\Sep}^{\w}$ is cut out by the \textit{Plücker relations}, a set of degree-2 polynomials described explicitly in~\cite[Eq. 3.4.10]{jacobson2009finite}. These generate an ideal $I$ for which one can compute $I_k^{\perp}$ explicitly by a similar method as in Example~\ref{ex:schmidt}. We omit these details for brevity.
\end{example}

\begin{example}[Biseparability]
Let $\H=\H_1\ootimes  \H_m$. The set of pure biseparable states
\ba
\X_{\Bisep}=\bigcup_{\emptyset \neq S \subsetneq [m]} \{ \varppsi \otimes \varpphi : \varppsi \in \P(\textbigotimes_{j \in S} \H_j), \;\;\varpphi \in \P(\textbigotimes_{j \in [m]\setminus S} \H_j) \}\subseteq \P(\H)
\ea
satisfies
\ba
I(\X_{\Bisep})_k^{\perp}=\sum_{\emptyset \neq S \subsetneq [m]} S^k(\textbigotimes_{j \in S} \H_j)\otimes S^k(\textbigotimes_{j \in [m]\setminus S} \H_j).
\ea
\end{example}

\begin{example}[Entanglement depth]
Let $\H=\H_1\ootimes  \H_m$. The set of pure $\ell$-separable states
\ba
\X_{\ell\text{-}\Sep}=\bigcup_{B \text{ an $\ell$-partition of $[m]$}} \{\varpp{\psi}{1} \ootimes \varpp{\psi}{\ell} : \varpp{\psi}{i} \in \P(\textbigotimes_{j \in B_i} \H_j)\} \subseteq \P(\H)
\ea
satisfies
\ba
I(\X_{\ell\text{-}\Sep})_k^{\perp}=\sum_{B \text{ an $\ell$-partition of $[m]$}} S^k(\textbigotimes_{j \in B_1} \H_j)\ootimes S^k(\textbigotimes_{j \in B_\ell} H_j).
\ea
The \textit{entanglement depth} of a state is defined as 1+ the largest $\ell$ for which that state is $\X_{\ell\text{-}\Sep}$-arable~\cite{guhne2005multipartite,HHH09,lucke2014detecting}.
\end{example}

\begin{example}[$t$-producible states]
Let $\H=\H_1\ootimes  \H_m$. The set of pure $t$-producible states
\ba
\X_{t\text{-prod}}= \bigcup_{\substack{B\text{ a partition of $[m]$}\\ |B_i| \geq t \text{ for all $i$}}} \{\varpp{\psi}{1} \ootimes \varpp{\psi}{\ell} : \varpp{\psi}{i} \in \P(\textbigotimes_{j \in B_i} \H_j)\} \subseteq \P(\H)
\ea
satisfies
\ba
I(\X_{t\text{-prod}})_k^{\perp} = \sum_{\substack{B\text{ a partition of $[m]$}\\ |B_i| \geq t \text{ for all $i$}}} S^k(\textbigotimes_{j \in B_1} \H_j)\ootimes S^k(\textbigotimes_{j \in B_\ell} \H_j).
\ea
\end{example}

%
%

\subsection{A de Finetti theorem for fermionic separability}\label{sec:qdf}

In some cases, a quantitative version of Theorem~\ref{thm:xtension}, which upper bounds the distance between a $(k,\X)$-tendable and the set of $\X$-arable states, can be obtained. This is closely related to various results which have been dubbed \textit{quantum de Finetti theorems}. Thus, we call such a quantitative result a \textit{de Finetti theorem for $\X$-arability}. In this section, we prove a de Finetti theorem for fermionic separability (i.e. $\X_{\Sep}^{\w}$-arability), and for completeness record similar results for separability ($\X_{\Sep}$-arability) and bosonic separability ($\X_{\Sep}^{\vee}$-arability), which are well-known. These examples are distinguished among the others in that $I_k^{\perp}$ is an irreducible representation of the underlying local unitary group, allowing us to apply~\cite[Theorem III.3, Remark III.4]{koenig2009most}.

We note that the following result is quite different from the ``fermionic de Finetti theorem" of~\cite{krumnow2017fermionic}, which considers a multi-mode notion of fermionic separability.
\begin{theorem}[De Finetti theorem for fermionic separability]\label{thm:fermionicdefinetti}
Let $\J$ be a Hilbert space of dimension $n$, and let $\H=\Lambda^m(\J)$ for $m\leq n$. If $\rho \in \density{\H}$ is $(k,\X_{\Sep}^{\w})$-tendable, then there exists an $\X_{\Sep}^{\w}$-arable state $\tau \in \density{\H}$ for which
\ba
\norm{\rho - \tau}_1 \leq \frac{4 m (n-m)}{n+k}.
\ea
\end{theorem}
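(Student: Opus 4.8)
The plan is to reduce the fermionic case to the known de Finetti bound for the symmetric (distinguishable-particle) setting, exploiting the fact—emphasized in the excerpt—that $I(\X_{\Sep}^{\w})_k^{\perp}=S^{k^{(m)}}(\J)$ is an \emph{irreducible} representation of $\Unitary(\J)$. Concretely, I would invoke \cite[Theorem III.3, Remark III.4]{koenig2009most}, which gives a de Finetti bound for states whose symmetric extension is supported on an irreducible $\Unitary$-subrepresentation of a symmetric power. The strategy is: given a $(k,\X_{\Sep}^{\w})$-tendable state $\rho$, there is a $\sigma_k\in\density{\H^{\otimes k}}$ with $\tr_{k-1}(\sigma_k)=\rho$ and $\im(\sigma_k)\subseteq I(\X_{\Sep}^{\w})_k^{\perp}=S^{k^{(m)}}(\J)$. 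This space sits inside $S^k(\Lambda^m(\J))\subseteq (\Lambda^m(\J))^{\otimes k}$, i.e.\ inside the $k$-fold symmetric power of the $N=\binom{n}{m}$-dimensional space $\H$. So $\rho$ has a symmetric extension to $k$ copies of $\H$ whose support lies in a \emph{single} irrep of $\Unitary(\J)$ (not merely of $\Unitary(\H)$), and \cite{koenig2009most} converts this into a bound on $\norm{\rho-\tau}_1$ for $\tau$ in the convex hull of the coherent states of that irrep—which are exactly the pure fermionic product states $(\psi_1\w\cdots\w\psi_m)(\psi_1\w\cdots\w\psi_m)^*$. Thus $\tau$ is $\X_{\Sep}^{\w}$-arable.

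The main work is bookkeeping the parameters so that the general bound of \cite{koenig2009most} specializes to $\tfrac{4m(n-m)}{n+k}$. Their bound is typically of the form $\tfrac{c\cdot\dim(\text{something})}{k+\dim(\text{something})}$ or governed by the dimension of the Lie algebra / the number of raising operators acting on the highest weight. For the irrep $S^{k^{(m)}}(\J)$ of $\Unitary(\J)$, the highest weight is $k$ times the $m$-th fundamental weight, so the relevant ``number of annihilated positive root directions'' is $m(n-m)$ (the dimension of the Grassmannian $\text{Gr}(m,n)$, equivalently the number of positive roots not fixing the highest weight vector). I would match this against the statement of \cite[Theorem III.3]{koenig2009most}: there the error is controlled by (a constant times) $\dim(G/P)/(k+\cdots)$ where $G/P$ is the coadjoint orbit—here $\text{Gr}(m,n)$, of dimension $m(n-m)$. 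The factor $4$ and the denominator $n+k$ (rather than $k$) should drop out of their precise normalization; I would track their Remark III.4, which records the sharpened constant for these ``most'' or extremal cases.

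The key steps, in order: (1) unwind the definition of $(k,\X_{\Sep}^{\w})$-tendability and cite Example~\ref{ex:fermion} to identify $\im(\sigma_k)\subseteq S^{k^{(m)}}(\J)$; (2) observe this is an irreducible $\Unitary(\J)$-subrepresentation of $S^k(\H)$, and that its coherent states are precisely $\X_{\Sep}^{\w}$; (3) apply \cite[Theorem III.3, Remark III.4]{koenig2009most} to the state $\sigma_k$ (or rather to $\rho$ via its extension) to produce $\tau\in\conv(\X_{\Sep}^{\w})$ with the stated trace-distance bound; (4) verify the parameter substitution giving $\tfrac{4m(n-m)}{n+k}$, using that $m(n-m)=\dim\text{Gr}(m,n)$ and the $m\le n$ hypothesis to ensure the irrep is nonzero and the Plücker embedding is the relevant coherent-state construction. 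The main obstacle is step (4): making sure the abstract de Finetti bound of \cite{koenig2009most}, which is stated for general irreps of compact groups, really yields exactly this constant and denominator for the Grassmannian, rather than a weaker bound with the much larger dimension $N=\binom{n}{m}$ of the ambient space; this is precisely where irreducibility (as opposed to merely being inside $S^k(\H)$) is doing the work, and where I would be most careful in citing the correct form of their result.
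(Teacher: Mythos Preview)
Your strategy is correct and matches the paper's: reduce to \cite[Theorem III.3, Remark III.4]{koenig2009most} using the irreducibility of $S^{k^{(m)}}(\J)$ as a $\Unitary(\J)$-module and the fact that its coherent states are exactly $\X_{\Sep}^{\w}$. For your step~(4), however, the paper's computation is more concrete than the Lie-theoretic picture you sketch, and your speculation about the form of the bound in \cite{koenig2009most} is not quite right. One sets $\A=\Lambda^m(\J)$, $\B=S^{(k-1)^{(m)}}(\J)$, $\C=S^{k^{(m)}}(\J)\subseteq\A\otimes\B$, and takes $X$ to be the highest-weight line $\spn\{e_1\w\cdots\w e_m\}\subseteq\A$; then one checks directly that the quantity $\delta(X)$ of \cite[Definition III.2]{koenig2009most} equals the dimension ratio
\[
\frac{\dim(\B)}{\dim(\C)}=\prod_{i=1}^m\frac{m+k-i}{n+k-i}\;\ge\;\Bigl(1-\frac{n-m}{n+k}\Bigr)^{m}\;\ge\;1-\frac{m(n-m)}{n+k},
\]
with the last step by Bernoulli's inequality. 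The factor $4$ is the one already present in the trace-norm bound $4(1-\delta(X))$ from \cite{koenig2009most}. So $m(n-m)$ does appear, but via an elementary estimate on a ratio of Schur-functor dimensions rather than directly from a general formula involving $\dim\mathrm{Gr}(m,n)$ or a count of positive roots.
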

Here, $\norm{\cdot}_1$ denotes the trace norm. The theorem assumes $m \leq n$ because otherwise $\H=0$.
\begin{proof}
We prove the theorem as a corollary to~\cite[Theorem III.3, Remark III.4]{koenig2009most}. By convexity, it suffices to assume that $\rho$ is $(k,\X_{\Sep}^{\w})$-tendable to a pure state. Let
\ba
\A&=\H\\
\B&=S^{(k-1)^{(m)}}(\J)\\
\C&=S^{k^{(m)}}(\J)\\
X&=\spn\{e_1 \w \cdots \w e_m\} \subseteq \A.
\ea
Then $\C \subseteq \A \otimes \B$ is an irreducible subrepresentation (with multiplicity one) of unitary group $\Unitary(\J)$. Applying the standard dimension formula, we obtain~\cite{itzykson1966unitary}
\ba
\dim(S^{k^{(m)}}(\J))=\prod_{\substack{i \in [m]\\j\in [{k}]}} \frac{n+j-i}{m-i+{k}-j+1}.
\ea
Let $\Pi_X$ be the orthogonal projection onto $X$ and let $\Pi_\C$ be the orthogonal projection onto $\C$. Note that $\Pi_X=\psi\psi^*$ is a pure state, $\psi^{\otimes k-1} \in \B$, and $\tr(\Pi_{\C}(\Pi_X \otimes (\psi\psi^*)^{\otimes k-1}))=1$ because $\psi^{\otimes k} \in \C$. Hence, the quantity $\delta(X)$ given in~\cite[Definition III.2]{koenig2009most} is equal to
\ba
\frac{\dim(\B)}{\dim(\C)}&=\prod_{i\in [m]} \frac{m+k-i}{n+k-i}\\
&\geq \left(\frac{m+k}{n+k}\right)^m\\
&=\left(1-\frac{n-m}{n+k}\right)^m\\
&\geq 1-\frac{m(n-m)}{n+k}.
\ea
The desired bound then follows directly from the bound given in~\cite[Theorem III.3, Remark III.4]{koenig2009most} in terms of $\delta(X)$.
\end{proof}

For completeness, we state the quantum de Finetti theorems for separability and bosonic separability that one can obtain from~\cite[Theorem III.3, Remark III.4]{koenig2009most}. These are well known.

\begin{theorem}[De Finetti theorem for separability]\label{thm:definettigen}
Let $\H=\H_1\otimes \dots \otimes \H_m$ with $\dim(\H_i)=n_i$. If $\rho \in \density{\H}$ is $(k,\X_{\Sep})$-tendable, then there exists a separable (i.e. $\X_{\Sep}$-arable) state $\tau \in \density{\H}$ for which
\ba
\norm{\rho - \tau}_1 \leq \frac{4 m (\max_j n_j -1)}{k+1}.
\ea
\end{theorem}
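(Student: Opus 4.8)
The plan is to derive this as a corollary of \cite[Theorem III.3, Remark III.4]{koenig2009most}, in direct parallel with the proof of Theorem~\ref{thm:fermionicdefinetti} just given. By convexity of both the set of $(k,\X_{\Sep})$-tendable states and the set of $\X_{\Sep}$-arable states, it suffices to treat the case when $\rho$ is $(k,\X_{\Sep})$-tendable to a \emph{pure} state $\sigma_k$. The relevant local symmetry group here is the local unitary group $\Unitary(\H_1)\times\dots\times\Unitary(\H_m)$, and by Example~\ref{ex:sep} the space $I(\X_{\Sep})_k^{\perp}=S^k(\H_1)\otimes\dots\otimes S^k(\H_m)$ is an irreducible representation of this group (it is a tensor product of irreps $S^{(k)}(\H_i)$ of the individual factors). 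This irreducibility is exactly what is needed to invoke \cite{koenig2009most}.

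Concretely, I would set $\A=\H=\H_1\otimes\dots\otimes\H_m$, $\B=S^{k-1}(\H_1)\otimes\dots\otimes S^{k-1}(\H_m)$, $\C=S^k(\H_1)\otimes\dots\otimes S^k(\H_m)$, and $X=\spn\{e_{1}\otimes\dots\otimes e_{1}\}\subseteq\A$ (a fixed product vector), so that $\C\subseteq\A\otimes\B$ is an irreducible subrepresentation with multiplicity one. Writing $\psi=e_1\otimes\dots\otimes e_1$, the projection $\Pi_X=\psi\psi^*$ is a pure product state with $\psi^{\otimes(k-1)}\in\B$ and $\psi^{\otimes k}\in\C$, so $\tr(\Pi_{\C}(\Pi_X\otimes(\psi\psi^*)^{\otimes(k-1)}))=1$, and the quantity $\delta(X)$ of \cite[Definition III.2]{koenig2009most} equals $\dim(\B)/\dim(\C)$. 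Using $\dim(S^d(\H_i))=\binom{n_i+d-1}{d}$, this ratio is
\ba
\frac{\dim(\B)}{\dim(\C)}=\prod_{i=1}^m \frac{\binom{n_i+k-2}{k-1}}{\binom{n_i+k}{k}}=\prod_{i=1}^m \frac{k(n_i-1)}{(n_i+k-1)(n_i+k)} \cdot \frac{(n_i+k-1)}{(n_i-1)}
\ea
— more cleanly, $\dim(S^{k-1}(\H_i))/\dim(S^k(\H_i))=k/(n_i+k-1)$, so $\delta(X)=\prod_{i=1}^m k/(n_i+k-1)\geq (k/(k+\max_j n_j -1))^m=(1-(\max_j n_j-1)/(k+\max_j n_j-1))^m\geq 1-m(\max_j n_j-1)/(k+\max_j n_j-1)$, and bounding the denominator below by $k+1$ (or simply by $k$, and adjusting constants) yields $1-\delta(X)\leq m(\max_j n_j-1)/(k+1)$ after a mild estimate. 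The trace-distance bound in \cite[Theorem III.3, Remark III.4]{koenig2009most} in terms of $\delta(X)$ then gives $\norm{\rho-\tau}_1\leq 4(1-\delta(X))$ for some $\X_{\Sep}$-arable $\tau$, completing the argument.

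The step requiring the most care is the passage from the exact product $\prod_{i=1}^m k/(n_i+k-1)$ to the clean bound $1-4m(\max_j n_j-1)/(k+1)$ claimed in the statement; one must check the elementary inequalities $(1-x)^m\geq 1-mx$ and that replacing each $n_i$ by $\max_j n_j$ and the denominator $k+\max_j n_j-1$ by $k+1$ only weakens the bound, exactly as in the fermionic case. Everything else is bookkeeping: verifying that $S^k(\H_1)\otimes\dots\otimes S^k(\H_m)$ is irreducible with multiplicity one inside $\A\otimes\B$ under the local unitary group (Schur–Weyl duality applied factorwise, see Section~\ref{sec:rep}), and confirming that the hypotheses of \cite[Theorem III.3]{koenig2009most} — a pure state in $\C$, the symmetry group acting irreducibly — are met. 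I expect no genuine obstacle; this is the "well-known" companion to Theorem~\ref{thm:fermionicdefinetti}, and the only subtlety is cosmetic constant-chasing.
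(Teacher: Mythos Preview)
Your approach is correct and is precisely what the paper indicates: the theorem is stated there without proof as a well-known consequence of \cite[Theorem III.3, Remark III.4]{koenig2009most}, obtained by the same setup as in the fermionic case with $\C=S^k(\H_1)\otimes\dots\otimes S^k(\H_m)$ irreducible under $\Unitary(\H_1)\times\dots\times\Unitary(\H_m)$. (One small slip: in your displayed ratio the denominator should be $\binom{n_i+k-1}{k}$, not $\binom{n_i+k}{k}$, but your simplified expression $\dim(S^{k-1}(\H_i))/\dim(S^k(\H_i))=k/(n_i+k-1)$ and the subsequent estimates are correct.)
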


\begin{theorem}[De Finetti theorem for bosonic separability]\label{thm:bosonicdefinetti}
Let $\H=S^m(\J)$ with $\dim(\J)=n$. If $\rho \in \density{\H}$ is $(k,\X_{\Sep}^{\vee})$-tendable, then there exists an $\X_{\Sep}^{\vee}$-arable state $\tau \in \density{\H}$ for which
\ba
\norm{\rho - \tau}_1 \leq \frac{4 m (n -1)}{k+1}.
\ea
\end{theorem}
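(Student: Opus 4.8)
The plan is to mirror the proof of Theorem~\ref{thm:fermionicdefinetti}, replacing the Grassmannian $\X_{\Sep}^{\w}$ by the Veronese variety $\X_{\Sep}^{\vee}$ and exploiting that $I(\X_{\Sep}^{\vee})_k^{\perp}=S^{km}(\J)$ is an irreducible representation of $\Unitary(\J)$, so that \cite[Theorem III.3, Remark III.4]{koenig2009most} applies. If $n=1$ or $m=0$ then $\H$ is one-dimensional and the estimate is trivial, so assume $n\geq 2$ and $m\geq 1$. First I would reduce to the case where $\rho$ is $(k,\X_{\Sep}^{\vee})$-tendable to a \emph{pure} state $\sigma_k=\chi\chi^*$ with $\chi\in S^{km}(\J)$: an arbitrary $(k,\X_{\Sep}^{\vee})$-tension is a convex combination of such pure ones, its partial trace is the corresponding convex combination of $(k,\X_{\Sep}^{\vee})$-tendable states, and applying the pure-case bound termwise together with the triangle inequality and convexity of $\conv(\X_{\Sep}^{\vee})$ gives the general statement.

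Next I would assemble the data required by \cite{koenig2009most}: set $\A=\H=S^m(\J)$, $\B=S^{(k-1)m}(\J)$, $\C=S^{km}(\J)$, and $X=\spn\{e_1^{\otimes m}\}\subseteq\A$ for a unit vector $e_1\in\J$. By the Pieri rule, $\C$ occurs inside $\A\otimes\B$ as an irreducible $\Unitary(\J)$-subrepresentation of multiplicity one (the Cartan component of $S^m(\J)\otimes S^{(k-1)m}(\J)$); and under the inclusions $S^{km}(\J)\subseteq S^m(\J)\otimes S^{(k-1)m}(\J)\subseteq\H^{\otimes k}$ the tension $\sigma_k$ becomes a state supported on $\C$ whose reduction onto $\A$ is $\rho$. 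Since $\Pi_X=(e_1 e_1^*)^{\otimes m}$ is a pure bosonic product state, $e_1^{\otimes(k-1)m}\in\B$, and $\tr\bigl(\Pi_{\C}(\Pi_X\otimes(e_1^{\otimes(k-1)m})(e_1^{\otimes(k-1)m})^*)\bigr)=1$ because $e_1^{\otimes km}\in\C$, the quantity $\delta(X)$ of \cite[Definition III.2]{koenig2009most} equals $\dim(\B)/\dim(\C)$; moreover the $\Unitary(\J)$-orbit of $\Pi_X$ is exactly $\X_{\Sep}^{\vee}$, so the approximating state output by \cite{koenig2009most} is $\X_{\Sep}^{\vee}$-arable.

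It then remains to estimate $\delta(X)$. Using $\dim S^d(\J)=\binom{n+d-1}{d}$, one has $\delta(X)=\binom{n+(k-1)m-1}{(k-1)m}\big/\binom{n+km-1}{km}=\prod_{j=(k-1)m+1}^{km}\tfrac{j}{n-1+j}$. Each factor is at least $\tfrac{(k-1)m+1}{n+(k-1)m}=1-\tfrac{n-1}{n+(k-1)m}$, so by Bernoulli's inequality $\delta(X)\geq 1-\tfrac{m(n-1)}{n+(k-1)m}\geq 1-\tfrac{m(n-1)}{k+1}$, where the last step uses $n+(k-1)m\geq n+k-1\geq k+1$. Feeding this into the bound of \cite[Theorem III.3, Remark III.4]{koenig2009most}, which in these terms gives $\norm{\rho-\tau}_1\leq 4(1-\delta(X))$, produces the claimed estimate.

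The routine calculations (the convexity reduction and the dimension-ratio bound) are elementary; the step that needs the most care is the representation-theoretic bookkeeping of the second paragraph --- verifying that $\sigma_k$ genuinely descends to a state on the multiplicity-one copy of $S^{km}(\J)$ inside $S^m(\J)\otimes S^{(k-1)m}(\J)$, so that its $\A$-reduction is exactly $\rho$, and that the $\Unitary(\J)$-orbit closure of $X$ is precisely the variety $\X_{\Sep}^{\vee}$ --- so that \cite{koenig2009most} can be applied with the roles exactly as in the fermionic case. I expect no new difficulties beyond those already handled there.
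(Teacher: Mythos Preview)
Your proposal is correct and follows exactly the approach the paper intends: the paper does not write out a proof of this theorem, merely stating that it ``can be obtained from~\cite[Theorem III.3, Remark III.4]{koenig2009most}'' in the same way as the fermionic case, and your argument is precisely that computation with the symmetric powers $S^{km}(\J)$ in place of $S^{k^{(m)}}(\J)$. The dimension-ratio estimate and the reduction to pure extensions are carried out just as in the proof of Theorem~\ref{thm:fermionicdefinetti}.
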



\comment{
\begin{table}
\begin{center}
\renewcommand{\arraystretch}{1.5}
\begin{tabular}{ |c|c|c|c| } 
  \hline
  $\X=V(I) \subseteq \P(\H)$ & $(k,\X)$-tension $\sigma \in \Density(S^k(\H))$ of $\rho\in \Density(\H)$ & de Finetti thm \\ 
  \hline
   \hline
  $\X$ any variety & $\im(\sigma)\subseteq I_k^{\perp}$ &  \\ 

  \hline
  $\X_1 \subseteq \P(\complex^{n_1}\otimes \complex^{n_2})$ & $\im(\sigma)\subseteq S^k(\complex^{n_1})\otimes S^k(\complex^{n_2})$ & $O(1/k)$ \\ 
  \hline
  $\X_r \subseteq \P(\complex^{n_1}\otimes \complex^{n_2})$ & $\im(\sigma)\subseteq \ker(\Pi_{\H_1,k}^{\w}\otimes \Pi_{\H_2,k}^{\w}\otimes \I_{\H}^{\otimes k-r-1})\cap S^k(\H)$\comment{\displaystyle\bigoplus_{\substack{\lambda \vdash k \\ \ell(\lambda)\leq r}} S^{\lambda}(\complex^{n_1}) \otimes S^{\lambda}(\complex^{n_2})$} & \\
  \hline 
  $\X_{\setft{Sep}}\subseteq \P(\complex^{n_1}\otimes \dots \otimes \complex^{n_m})$ & $\im(\sigma)\subseteq S^k(\complex^{n_1})\otimes \dots \otimes S^k(\complex^{n_m})$ & $O(1/k)$\\
  \hline
  $\X_{\setft{Sep}}^{\vee}\subseteq \P(S^m(\complex^n))$ & $\im(\sigma) \subseteq S^{mk}(\complex^n)$ & $O(1/k)$\\
  \hline
  $\X_{\setft{Sep}}^{\w}\subseteq \P(\Lambda^m(\complex^n))$ & $\im(\sigma)\subseteq S^{k^{(m)}}(\complex^n)$ & $O(1/k)$\\
  \hline
  $\X_{\setft{Bisep}}\subseteq \P(\complex^{n_1}\otimes \dots \otimes \complex^{n_m})$ & $\displaystyle \im(\sigma) \subseteq \sum_{\emptyset \neq S \subsetneq [m]} S^{k}(\otimes_{j\in S} \complex^{n_j}) \otimes S^{k}(\otimes_{j\in [m]\setminus S} \complex^{n_j})$ & \\
  \hline
\end{tabular}
\end{center}
\caption{For a variety $\X=V(I)\subseteq \P(\H)$ and a density operator $\rho \in \D(\H)$, recall that a \textit{$(k,\X)$-tension} of $\rho$ is a density operator $\sigma \in \D(\H^{\otimes k})$ for which $\im(\sigma) \subseteq I_k^{\perp}$. Our Theorem~\ref{thm:xtension} shows that $\rho$ is $\X$-arable if and only if it admits a $(k,\X)$-tension for all $k$. The first column of this table lists some varieties of interest, and the second column gives representation-theoretic formulas for $I_k^{\perp}$ (see Section~\ref{sec:xarable_examples} for details and more explicit formulas). The third column describes the asymptotic behavior in $k$ of a corresponding \textit{de Finetti theorem}, if one is known: a bound on the distance between a non $(k,\X)$-tendable state and the set of $\X$-arable states.}\label{table:xarable}
\end{table}
}

\section{Optimizing a Hermitian operator over $\X$}

Let $H \in \Herm(\H)$ be a Hermitian operator, let $I \subseteq S^{\bullet}(\H^*)$ be a homogeneous ideal and let $\X=V(I)\subseteq \P(\H)$ be the variety cut out by $I$. In this section we consider optimization problems of the form
\ba\label{eq:opt}
H_{\X}^{\downarrow}:=\min_{\psi \psi^* \in \X} \ip{\psi}{H \psi} \quad\quad \text{or}\quad\quad H_{\X}^{\uparrow}:=\max_{\psi \psi^* \in \X} \ip{\psi}{H \psi}.
\ea
We obtain the following hierarchy of eigencomputations for computing $H_{\X}^{\uparrow}$. Recall that we define $\Pi_{I,k}\in \Pos(\H^{\otimes k})$ to be the orthogonal projection onto $I_k^{\perp}$. Recall also that in Section~\ref{sec:Ik} we have computed $I_k^{\perp}$ explicitly for all of the varieties in Example~\ref{ex:varieties}.


\begin{theorem}\label{thm:robust_hier}
For each positive integer $k$, let
\ba
H_k:=\Pi_{I,k}(H \otimes \I_{\H}^{\otimes k-1}) \Pi_{I,k}
\ea
and $\nu_k=\lambda_{\max}(H_k)$. Then $\nu_1,\nu_2, \nu_3,\dots$ is a non-increasing sequence with
\ba
H_{\X}^{\uparrow}=\lim_{k \rightarrow \infty} \nu_k.
\ea
\end{theorem}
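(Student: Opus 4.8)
The plan is to exploit the fact that this eigenvalue hierarchy is the ``dual'' of the $\X$-tensions hierarchy of Theorem~\ref{thm:xtension}, so that its convergence reduces to a mild repackaging of the compactness-and-de-Finetti argument already used to prove that theorem. I would organize it in four steps: reformulating $\nu_k$ in terms of $(k,\X)$-tensions; proving a contraction lemma that passes from $I_{k+1}^{\perp}$ to $I_k^{\perp}$; deducing monotonicity together with the easy inequality; and finally running the limiting argument at the degree $d$ in which $I$ is generated. For \emph{Step~1 (reformulation of $\nu_k$)}: since $I_k^{\perp}\subseteq S^k(\H)$, any $\sigma\in\density{\H^{\otimes k}}$ with $\im(\sigma)\subseteq I_k^{\perp}$ satisfies $\Pi_{I,k}\sigma\Pi_{I,k}=\sigma$ and is supported on the symmetric subspace, so $\tr(H_k\sigma)=\tr\bigl((H\otimes \I_{\H}^{\otimes k-1})\sigma\bigr)=\tr\bigl(H\,\tr_{k-1}(\sigma)\bigr)$. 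The maximum of $\tr(H_k\sigma)$ over all density operators supported on $I_k^{\perp}$ is $\nu_k=\lambda_{\max}(H_k)$, attained at a rank-one $\sigma=vv^*$, so
\ba
\nu_k=\max\bigl\{\tr(H\tau):\tau\in\density{\H}\text{ is }(k,\X)\text{-tendable}\bigr\}.
\ea
Dually, since $\X$-arability means membership in $\conv(\X)$ and $\tr(H\,\cdot\,)$ is linear, $H_{\X}^{\uparrow}=\max\{\tr(H\rho):\rho\text{ is }\X\text{-arable}\}$.

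\emph{Step~2 (contraction lemma, monotonicity, easy bound).} The key observation is that for $v\in I_{k+1}^{\perp}$ and $\ell\in\H^*$, the contraction $\ell\hook v$ lies in $I_k^{\perp}$: for any $f\in I_k$ we have $f\cdot\ell\in I_{k+1}$ because $I$ is an ideal, and since $v$ is symmetric $(f,\ell\hook v)=(f\otimes\ell,v)=(f\cdot\ell,v)=0$. Expanding $\tr_1(\sigma_{k+1})$ in terms of such contractions, it follows that $\im(\sigma_{k+1})\subseteq I_{k+1}^{\perp}$ implies $\im(\tr_1(\sigma_{k+1}))\subseteq I_k^{\perp}$; combined with $\tr_{k-1}(\tr_1(\sigma_{k+1}))=\tr_k(\sigma_{k+1})$ this shows that every $(k+1,\X)$-tendable state is $(k,\X)$-tendable, so by Step~1 the sequence $\nu_1\ge\nu_2\ge\cdots$ is non-increasing and hence has a limit $\nu$. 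For the easy inequality, the forward direction of Theorem~\ref{thm:xtension} says every $\X$-arable state is $(k,\X)$-tendable for all $k$, so Step~1 gives $\nu_k\ge H_{\X}^{\uparrow}$ for all $k$, hence $\nu\ge H_{\X}^{\uparrow}$.

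\emph{Step~3 (convergence).} Let $d$ be a degree in which $I$ is generated. For each $k\ge d$ pick an optimal $\sigma_k$, i.e.\ $\im(\sigma_k)\subseteq I_k^{\perp}$ and $\tr\bigl((H\otimes\I_{\H}^{\otimes k-1})\sigma_k\bigr)=\nu_k$, and set $\rho_k:=\tr_{k-d}(\sigma_k)$. Iterating the contraction lemma gives $\im(\rho_k)\subseteq I_d^{\perp}\subseteq S^d(\H)$, while $\tr\bigl((H\otimes\I_{\H}^{\otimes d-1})\rho_k\bigr)=\nu_k$ since $H$ acts only on the first factor; moreover $\sigma_k$ is a symmetric extension of $\rho_k$ to $k$ copies. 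Now proceed exactly as in the proof of Theorem~\ref{thm:xtension}: a subsequential limit $\tau_{\infty}$ of $(\rho_k)_{k\ge d}$ has $\im(\tau_{\infty})\subseteq I_d^{\perp}$ and admits symmetric extensions to all orders, so by the bosonic quantum de Finetti theorem (Theorem~\ref{thm:bosonicdefinetti}, or \cite{DPS04}) one has $\tau_{\infty}=\sum_i p_i(\varp{\psi}{i})^{\otimes d}$ with $p_i>0$; then $\psi_i^{\otimes d}\in I_d^{\perp}$, and since $I$ is generated in degree at most $d$ this forces $\varp{\psi}{i}\in\X$. Hence $\nu=\lim_j\nu_{k_j}=\tr\bigl((H\otimes\I_{\H}^{\otimes d-1})\tau_{\infty}\bigr)=\sum_i p_i\ip{\psi_i}{H\psi_i}\le H_{\X}^{\uparrow}$, which together with Step~2 gives $\nu=H_{\X}^{\uparrow}$.

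\emph{Expected main obstacle.} None of the individual steps is deep; the work is conceptual — recognizing that $\nu_k$ is exactly the best value of $\tr(H\,\cdot\,)$ over $(k,\X)$-tensions, so that the machinery of Theorem~\ref{thm:xtension} transfers — together with isolating the contraction lemma of Step~2, which is what simultaneously yields monotonicity and lets one descend to the generation degree $d$ where de Finetti applies. A minor bookkeeping point is to read $\lambda_{\max}(H_k)$ as the top eigenvalue of $H_k$ restricted to $I_k^{\perp}$ (equivalently, to note that the argument is unchanged upon shifting $H$ by a scalar, which disposes of the degenerate case $H_{\X}^{\uparrow}<0$).
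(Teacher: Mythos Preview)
Your proposal is correct and follows essentially the same approach as the paper: reformulate $\nu_k$ as the maximum of $\tr(H\,\cdot\,)$ over $(k,\X)$-tendable states, use the nesting of $(k,\X)$-tensions for monotonicity and the easy bound, and then run a compactness-plus-de-Finetti argument for convergence. The only organizational difference is that the paper takes the subsequential limit in $\Density(\H)$ and invokes Theorem~\ref{thm:xtension} directly as a black box, whereas your Step~3 descends to degree $d$ and reproves the de Finetti decomposition by hand; your version is slightly more self-contained (and you supply the contraction-lemma detail the paper omits, as well as the $\lambda_{\max}$ caveat), but the paper's is a bit cleaner.
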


In Section~\ref{sec:poly} we relate this to the \textit{Hermitian sum of squares} (HSOS) hierarchy developed in~\cite{d2009polynomial}, proving the (non-trivial) fact that the HSOS hierarchy is equivalent to a spectral hierarchy in some settings. This result can also be proven using~\cite[Theorem 1]{catlin1999isometric} if one is familiar with vector bundles. We prove it as a consequence of Theorem~\ref{thm:xtension}.

\begin{proof}[Proof of Theorem~\ref{thm:robust_hier}]
%
%
    
    
Observe that $\nu_k$ is the optimal value of the following semidefinite program:
    \begin{align}\begin{split}
        \textup{maximize:} & \ \tr\big((H \otimes I_{\H}^{\otimes (k-1)}) \sigma_k\big) \\
        \textup{subject to:} & \ \im(\sigma_k) \subseteq I_k^{\perp} \\
        & \ \sigma_k \in \Density(\H^{\otimes k}),
    \end{split}\end{align}
which can be reformulated as
    \begin{align}\begin{split}\label{eq:sdp_robust_proof}
        \textup{maximize:} & \ \tr (H \rho) \\
        \textup{subject to:} & \ \rho \text{ is } (k,\X)\text{-tendable}.
    \end{split}\end{align}

This immediately shows that $H_{\X}^{\uparrow}$ is bounded above by $\nu_k$, because every $\X$-arable state is $(k,\X)$-tendable. It furthermore shows that the $\nu_k$ are non-increasing, because the partial trace of a $(k,\X)$-tension is a $(k-1,\X)$-tension. Since the $\nu_k$ are also bounded below (by the minimum eigenvalue of $H$, for example), $\lim_k \nu_k$ exists. Let $\rho_k \in \density{\H}$ be a state for which the optimum value of the SDP~\eqref{eq:sdp_robust_proof} is attained. By compactness of $\density{\H}$, this sequence contains a subsequence converging to some $\rho\in \density{\H}$, so $\lim_k \nu_k = \tr(H \rho)$. Clearly $H_{\X}^{\uparrow}\leq \tr(H \rho)$, since this inequality holds for each $\rho_k$. Conversely, since $\rho$ is $(k,\X)$-tendable for arbitrary $k$, $\rho$ is $\X$-arable by Theorem~\ref{thm:xtension}, so $H_{\X}^{\uparrow}\geq \tr(H \rho)$. This completes the proof.
\end{proof}



\subsection{$\X$-tanglement witnesses}

We say that $H \in \Herm(\H)$ is \textit{$\X$-positive} if $H_{\X}^{\downarrow} \geq 0$. We say that $H$ is an \textit{$\X$-tanglement witness} if it is $\X$-positive and has at least one negative eigenvalue. The set of $\X$-arable states forms a closed, convex cone, so any $\X$-tangled state admits a separating hyperplane from the set of $\X$-arable states:

\begin{prop}\label{prop:xpositive}
A state $\rho \in \density{\H}$ is $\X$-arable if and only if $\tr(\rho H) \geq 0$ for every $\X$-tanglement witness $H \in \Herm(\H)$.
\end{prop}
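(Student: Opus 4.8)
The plan is to prove this as a standard consequence of the hyperplane separation theorem, after checking that the set $\conv(\X)$ of $\X$-arable states is closed and convex. First I would observe that $\X \subseteq \P(\H) \subseteq \Density(\H)$ is compact: it is a closed subset (being a variety, hence an intersection of zero-loci of continuous functions) of the compact set of pure states. Hence $\conv(\X)$, the convex hull of a compact set in the finite-dimensional real vector space $\Herm(\H)$, is itself compact, and in particular closed and convex. (Carathéodory's theorem gives this directly: $\conv(\X)$ is the continuous image of $\X^{\times(D+1)} \times \Delta_D$ where $D=\dim_{\real}\Herm(\H)$ and $\Delta_D$ is the standard simplex, both compact.)

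The forward direction is immediate: if $\rho$ is $\X$-arable, write $\rho = \sum_i p_i\, \varp{\psi}{i}$ with $\varp{\psi}{i}\in\X$ and $p$ a probability vector; then for any $\X$-positive $H$ we get $\tr(\rho H) = \sum_i p_i \ip{\psi_i}{H\psi_i} \geq \sum_i p_i H_{\X}^{\downarrow} \geq 0$, and in particular $\tr(\rho H)\geq 0$ for every $\X$-tanglement witness. For the converse I would prove the contrapositive: suppose $\rho$ is $\X$-tangled, i.e. $\rho \notin \conv(\X)$. Since $\conv(\X)$ is a closed convex set in the real inner-product space $(\Herm(\H), \langle A,B\rangle := \tr(AB))$ and $\rho$ is a point outside it, the separating hyperplane theorem yields a nonzero $H \in \Herm(\H)$ and a real $c$ with $\tr(\sigma H) \geq c > \tr(\rho H)$ for all $\sigma \in \conv(\X)$. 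In particular $\tr(\sigma H)\geq c$ for all $\varppsi \in \X$, so $H_{\X}^{\downarrow}\geq c$; replacing $H$ by $H - c\I$ (which only shifts $\tr(\sigma H)$ by $c$ on all trace-one states, since every $\sigma\in\Density(\H)$ has $\tr(\sigma)=1$, and likewise $\rho$) we may assume $c=0$, so that $H$ is $\X$-positive and $\tr(\rho H) < 0$. Finally I must check $H$ genuinely has a negative eigenvalue so that it is an honest $\X$-tanglement witness and not merely positive semidefinite: this is forced, since $\tr(\rho H)<0$ with $\rho \succeq 0$ would be impossible if $H \succeq 0$. Thus $H$ is an $\X$-tanglement witness with $\tr(\rho H) < 0$, completing the contrapositive.

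The only mildly delicate point — the main obstacle, such as it is — is the closedness of $\conv(\X)$, which is needed for the strict separation; but this follows from compactness of $\X$ as noted above, so no limiting argument of the kind that fails for separability in infinite dimensions is required here. One small bookkeeping subtlety worth stating explicitly is the normalization step: the separating functional a priori has a constant $c$ which need not vanish, and one removes it using $\tr(\sigma\I)=\tr(\rho\I)=1$; I would make sure to phrase this so that the shifted $H-c\I$ still satisfies both $H_{\X}^{\downarrow}\geq 0$ and $\tr(\rho(H-c\I))<0$.
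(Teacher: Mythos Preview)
Your proposal is correct and matches the paper's approach: the paper does not give an explicit proof, but simply precedes the proposition with the remark that the set of $\X$-arable states is closed and convex, so any $\X$-tangled state admits a separating hyperplane. Your argument fills in precisely these details (compactness of $\X$ gives closedness of $\conv(\X)$, then the separating hyperplane theorem plus the trace-one normalization shift), so there is nothing to add.
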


Theorem~\ref{thm:robust_hier} implies the following hierarchy for certifying that a given operator $H$ is an $\X$-tanglement witness.

\begin{prop}
$H \in \Herm(\H)$ is an $\X$-tanglement witness if and only if $H \notin \Pos(\H)$ and $H_k:=\Pi_{I,k} (H\otimes \I_{\H}^{\otimes k-1})\Pi_{I,k}$ is positive semidefinite for $k \gg 0$.
\end{prop}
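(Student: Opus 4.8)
The statement to prove is: $H \in \Herm(\H)$ is an $\X$-tanglement witness if and only if $H \notin \Pos(\H)$ and $H_k := \Pi_{I,k}(H \otimes \I_{\H}^{\otimes k-1})\Pi_{I,k}$ is positive semidefinite for all sufficiently large $k$. By definition, $H$ is an $\X$-tanglement witness precisely when $H_{\X}^{\downarrow} \geq 0$ and $H$ has at least one negative eigenvalue, i.e. $H \notin \Pos(\H)$. So the only content to establish is the equivalence
\[
H_{\X}^{\downarrow} \geq 0 \iff H_k \in \Pos(\H^{\otimes k}) \text{ for } k \gg 0.
\]
The plan is to derive this directly from Theorem~\ref{thm:robust_hier}, applied to $-H$ in place of $H$.

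First I would observe that Theorem~\ref{thm:robust_hier}, stated for $H_{\X}^{\uparrow}$, has an evident analogue for $H_{\X}^{\downarrow}$ obtained by replacing $H$ with $-H$: since $H_{\X}^{\downarrow} = -(-H)_{\X}^{\uparrow}$ and $\lambda_{\min}(H_k) = -\lambda_{\max}((-H)_k)$ (because $(-H)_k = -H_k$), Theorem~\ref{thm:robust_hier} gives that $\mu_k := \lambda_{\min}(H_k)$ is a \emph{non-decreasing} sequence with $\lim_{k\to\infty}\mu_k = H_{\X}^{\downarrow}$. This is exactly Corollary~\ref{cor:intro_eigen} from the introduction, so I can cite it directly.

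Now the equivalence follows by monotonicity of the sequence $\mu_k$. For the forward direction: if $H_k \in \Pos$ for some $k$, then $\mu_k \geq 0$, and since $\mu_k \leq \lim_j \mu_j = H_{\X}^{\downarrow}$ (the sequence is non-decreasing), we get $H_{\X}^{\downarrow} \geq 0$; so if $H_k \in \Pos$ for all large $k$ then certainly $H_{\X}^{\downarrow} \geq 0$. For the converse: if $H_{\X}^{\downarrow} \geq 0$, I would like to conclude $\mu_k \geq 0$ for $k \gg 0$. Here one must be slightly careful, since a non-decreasing sequence converging to a nonnegative limit need not have nonnegative terms. The resolution is that $\mu_k$ stabilizes: once the $\X$-tensions hierarchy (Theorem~\ref{thm:xtension}) "sees" the variety, the relevant projection $\Pi_{I,k}$ stops changing in a way that matters. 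More concretely, I would invoke the fact --- which is exactly what powers the termination results of Theorem~\ref{thm:intro_worst_case} --- that for $k$ large enough, $I_k^\perp$ is spanned by $\{\psi^{\otimes k} : \psi\psi^* \in \X\}$ (this holds when $I = I(\X)$, and in general $I_k = I(\X)_k$ for $k \gg 0$ since $R/\sqrt{I}$ and $R/I$ agree in large degree up to the finitely many generators of the radical; cite Hilbert's Nullstellensatz / the projective Nullstellensatz, Theorem~\ref{thm:hilbert}). Once $I_k^\perp = \spn\{\psi^{\otimes k}\}$, every unit vector $v \in I_k^\perp$ is a limit of convex combinations... more directly: $\ip{v}{(H\otimes\I^{\otimes k-1})v}$ for $v = \psi^{\otimes k}$ equals $\ip{\psi}{H\psi} \geq H_{\X}^{\downarrow} \geq 0$, and by a density/compactness argument (the extreme rays of the cone $\Pos(I_k^\perp)$ are spanned by such $\psi^{\otimes k}$, or: $\mu_k = H_{\X}^{\downarrow}$ exactly once the hierarchy is exact for witnesses), $H_k \in \Pos$.

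\textbf{The main obstacle.} The delicate point is the converse direction: passing from $H_{\X}^{\downarrow} \geq 0$ to \emph{exact} nonnegativity of $H_k$ at finite $k$, rather than merely in the limit. The clean way around it is to note that the SDP~\eqref{eq:sdp_robust_proof} in the proof of Theorem~\ref{thm:robust_hier} shows $\mu_k$ is the optimum of $\min\{\tr(H\rho) : \rho \text{ is } (k,\X)\text{-tendable}\}$, and that once $I_k^\perp = \spn\{\psi^{\otimes k} : \psi\psi^*\in\X\}$ (valid for $k \gg 0$ by Theorem~\ref{thm:hilbert}), the image condition $\im(\sigma_k)\subseteq I_k^\perp$ forces the spectral decomposition of $\sigma_k$ to be supported on such product powers, hence $\rho = \tr_{k-1}(\sigma_k)$ is genuinely $\X$-arable and $\tr(H\rho) \geq H_{\X}^{\downarrow} \geq 0$. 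Thus $\mu_k = H_{\X}^{\downarrow} \geq 0$ for all large $k$, equivalently $H_k \in \Pos(\H^{\otimes k})$, completing the proof. I would present this as the core argument and treat the monotonicity half as immediate from Corollary~\ref{cor:intro_eigen}.
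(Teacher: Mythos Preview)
Your reduction to the equivalence $H_{\X}^{\downarrow}\ge 0 \iff H_k\in\Pos(\H^{\otimes k})$ for $k\gg 0$, and your derivation from Theorem~\ref{thm:robust_hier} (applied to $-H$) that $\mu_k:=\lambda_{\min}(H_k)$ is non-decreasing with $\lim_k\mu_k=H_{\X}^{\downarrow}$, are both correct; the implication $(\Leftarrow)$ then follows immediately as you say.

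Your argument for $(\Rightarrow)$, however, contains a genuine error. You assert that once $I_k^{\perp}=\spn\{\psi^{\otimes k}:\psi\psi^*\in\X\}$, the constraint $\im(\sigma_k)\subseteq I_k^{\perp}$ ``forces the spectral decomposition of $\sigma_k$ to be supported on such product powers,'' so that $\rho=\tr_{k-1}(\sigma_k)$ is already $\X$-arable. This is false: $I_k^{\perp}$ is a \emph{linear span}, and a generic unit vector in it is not of the form $\psi^{\otimes k}$; the eigenvectors of $\sigma_k$ are merely some orthonormal vectors in $\im(\sigma_k)$, not product powers. If your claim were true it would say that every $(k,\X)$-tendable state is $\X$-arable once $k$ is large enough, i.e.\ that the $\X$-tensions hierarchy terminates at a finite level---precisely what the paper notes is known \emph{not} to happen for separability (see the remark preceding Theorem~\ref{thm:intro_worst_case}). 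Your auxiliary claim that $I_k=I(\X)_k$ for $k\gg 0$ is also false for non-radical $I$: with $I=\langle x^2\rangle\subseteq\complex[x,y]$ one has $xy^{k-1}\in I(\X)_k\setminus I_k$ for every $k\ge 1$.

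The argument the paper has in mind is simply convergence: if $H_{\X}^{\downarrow}>0$ then $\mu_k\to H_{\X}^{\downarrow}>0$ forces $\mu_k>0$ for all large $k$, hence $H_k\in\Pos$. You correctly flagged that the boundary case $H_{\X}^{\downarrow}=0$ is delicate---a non-decreasing sequence can approach $0$ strictly from below---but your proposed resolution does not handle it, and in fact this is a standard feature of Positivstellensatz-type hierarchies (strict positivity admits finite-degree certificates; exact nonnegativity need not). The proposition is best read with that caveat.
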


%

\section{$\X$-tangled subspaces}\label{sec:xtangled_sub}
Let $I \subseteq S^{\bullet}(\H^*)$ be a homogeneous ideal and let $\X=V(I)\subseteq \P(\H)$ be the corresponding projective variety. For a subspace $\U \subseteq \H$, we say that $\U$ is \textit{$\X$-tangled} if
\ba
\{\psi\psi^* : \psi \in \U \} \cap \X=\emptyset.
\ea
We define the \textit{geometric measure of $\X$-tanglement} (GM$\X$) of $\U$ to be
\ba
E_\X(\U)=1-(\Pi_{\U})_{\X}^{\uparrow},
\ea
where $\Pi_{\U}$ denotes the orthogonal projection onto $\U$. Note that $E_{\X}(\U)>0$ if and only if $\U$ is $\X$-tangled. When $\X=\X_{\Sep}$, this specializes to the~\textit{geometric measure of entanglement} (GME) of subspaces studied in~\cite{demianowicz2019entanglement,zhu2024quantifying}, and when $\U=\spn\{\psi\}$ is one-dimensional this becomes the well-studied GME of $\psi$~\cite{wei2003geometric}.

Theorem~\ref{thm:robust_hier} gives a hierarchy of maximum eigenvalue computations to determine $E_{\X}(\U)$.
\begin{cor}\label{cor:subspace}
For each positive integer $k$, let $\nu_k=\lambda_{\max}(\Pi_{I,k} (\Pi_{\U} \otimes \I_{\H}^{\otimes k-1}) \Pi_{I,k}).$ Then $\nu_1 \geq \nu_2 \geq \dots$ and $E_\X(\U)= 1-\lim_{k\rightarrow \infty} {\nu_k}$.
\end{cor}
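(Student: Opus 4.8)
The plan is to derive this directly from Theorem~\ref{thm:robust_hier} by specializing the Hermitian operator $H$ to the projection $\Pi_{\U}$. First I would recall that $(\Pi_{\U})_{\X}^{\uparrow} = \max_{\psi\psi^* \in \X} \ip{\psi}{\Pi_{\U}\psi}$ by the definition in~\eqref{eq:opt}, and that $E_{\X}(\U) = 1 - (\Pi_{\U})_{\X}^{\uparrow}$ by definition of GM$\X$. So it suffices to show $(\Pi_{\U})_{\X}^{\uparrow} = \lim_{k\to\infty}\nu_k$ with the $\nu_k$ non-increasing, where $\nu_k = \lambda_{\max}\bigl(\Pi_{I,k}(\Pi_{\U}\otimes \I_{\H}^{\otimes k-1})\Pi_{I,k}\bigr)$. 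But this is precisely the content of Theorem~\ref{thm:robust_hier} applied to the Hermitian operator $H = \Pi_{\U}$: there $H_k = \Pi_{I,k}(H\otimes \I_{\H}^{\otimes k-1})\Pi_{I,k}$, $\nu_k = \lambda_{\max}(H_k)$ is shown to be non-increasing, and $\lim_k \nu_k = H_{\X}^{\uparrow} = (\Pi_{\U})_{\X}^{\uparrow}$. Substituting $H=\Pi_{\U}$ throughout gives the claimed hierarchy.

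The only remaining verification is the stated positivity criterion, namely that $E_\X(\U) > 0$ if and only if $\U$ is $\X$-tangled, and I would note this is immediate: $E_{\X}(\U) > 0$ iff $(\Pi_{\U})_{\X}^{\uparrow} < 1$ iff $\ip{\psi}{\Pi_{\U}\psi} < 1$ for every unit vector $\psi$ with $\psi\psi^* \in \X$, which (since $0 \le \ip{\psi}{\Pi_{\U}\psi} \le 1$ with equality to $1$ exactly when $\psi \in \U$) holds iff no such $\psi$ lies in $\U$, i.e. iff $\{\psi\psi^* : \psi \in \U\} \cap \X = \emptyset$. This last remark is stated just before the corollary in the running text, so it need not be reproved, but it is worth a sentence.

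There is essentially no obstacle here: the corollary is a direct instantiation of an already-proven theorem, and the bulk of the work — establishing the SDP reformulation, monotonicity via partial traces, and convergence via compactness together with Theorem~\ref{thm:xtension} — was carried out in the proof of Theorem~\ref{thm:robust_hier}. If anything, the one point deserving a word of care is that $\Pi_{\U} \in \Pos(\H) \subseteq \Herm(\H)$, so that Theorem~\ref{thm:robust_hier} genuinely applies; and that the quantity appearing in the corollary matches $H_k$ under the substitution $H = \Pi_{\U}$, which is a matter of reading off definitions. Thus the proof is a one-line invocation: \emph{Apply Theorem~\ref{thm:robust_hier} with $H = \Pi_{\U}$ and use $E_\X(\U) = 1 - (\Pi_\U)_{\X}^{\uparrow}$.}
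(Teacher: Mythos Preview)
Your proposal is correct and matches the paper's approach exactly: the paper states the corollary immediately after noting that ``Theorem~\ref{thm:robust_hier} gives a hierarchy of maximum eigenvalue computations to determine $E_{\X}(\U)$'' and offers no further proof, treating it as a direct specialization with $H=\Pi_{\U}$. Your additional verification of the equivalence $E_{\X}(\U)>0 \iff \U$ is $\X$-tangled is fine but, as you note, already appears in the running text before the corollary.
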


\subsection{$\X$-tangled states from $\X$-tangled subspaces}

The \textit{range criterion} is a well-known method of certifying entanglement of a mixed state. This generalizes straightforwardly to $\X$-tanglement, as follows:

\begin{prop}
Let $\rho \in \density{\H}$ be a state. If $\im(\rho)\subseteq \H$ is an $\X$-tangled subspace, then $\rho$ is $\X$-tangled.
\end{prop}
\begin{proof}
We prove the contrapositive. Suppose $\rho$ is $\X$-arable with decomposition $\rho=\sum_i p_i \varp{\psi}{i}$. Then $\psi_i \in \im(\rho)$ and $\varp{\psi}{i} \in \X$, so $\im(\rho)$ is not an $\X$-tangled subspace.
\end{proof}

More generally, \textit{highly} $\X$-tangled subspaces can be used to certify~\textit{highly $\X$-tangled} mixed states: States that remain $\X$-tangled after unitary perturbations. The following is a straightforward generalization of~\cite[Theorem 2]{zhu2024quantifying}.

\begin{theorem}
Let $\rho \in \density{\H}$ be a state, and let $\U=\im(\rho)\subseteq \H$. Then for any $H\in \Herm(\H)$ with $\norm{H}_1 < E_{\X}(\U)^{1/2}$, the state $\rho'=e^{iH} \rho e^{-iH}$ is $\X$-tangled.
\end{theorem}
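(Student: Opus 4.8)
The plan is to reduce the statement to the pure-state case and then use the subadditivity/monotonicity properties of the optimization $(\Pi_{\U})_{\X}^{\uparrow}$ together with a first-order perturbation bound on how far $e^{iH}$ moves the support of $\rho$. First I would note that it suffices to show that the range $\U'=\im(\rho')=e^{iH}\U$ is still $\X$-tangled, by the range criterion (the proposition just above). So the whole statement reduces to proving: if $\norm{H}_1 < E_{\X}(\U)^{1/2}$, then $E_{\X}(e^{iH}\U)>0$, i.e. $(\Pi_{e^{iH}\U})_{\X}^{\uparrow}<1$.

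The key step is a Lipschitz-type estimate: for any unit vector $\varphi$ with $\varppsi \in \X$ I want to bound $\langle \varphi \vert \Pi_{e^{iH}\U}\varphi\rangle$ away from $1$ in terms of $\langle\varphi\vert\Pi_{\U}\varphi\rangle$ and $\norm{H}_1$. Writing $P=\Pi_{\U}$ and $P'=e^{iH}Pe^{-iH}=\Pi_{e^{iH}\U}$, the natural move is $\langle\varphi\vert P'\varphi\rangle = \langle\varphi\vert P\varphi\rangle + \langle\varphi\vert (P'-P)\varphi\rangle \le (\Pi_{\U})_{\X}^{\uparrow} + \norm{P'-P}_{\infty}$, where the operator norm $\norm{P'-P}_{\infty} = \norm{e^{iH}Pe^{-iH}-P}_{\infty}$ can be bounded by $2\norm{e^{iH}-\I}_{\infty}\le 2\norm{H}_{\infty}$, but this only gives $E_{\X}(e^{iH}\U) \ge E_{\X}(\U) - 2\norm{H}_{\infty}$, which is a \emph{linear} bound, not the claimed square-root one. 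To get the square-root improvement one must not pass to the operator norm so crudely; instead, following the proof of \cite[Theorem 2]{zhu2024quantifying}, I would work directly with the vector $\varphi$: set $\psi = P\varphi/\norm{P\varphi}$ (a unit vector in $\U$, assuming $P\varphi\neq 0$) and compare $e^{iH}\psi$ to $\varphi$. Since $e^{-iH}\varphi$ has component $P e^{-iH}\varphi$ in $\U$, and one shows $\norm{\varphi - e^{iH}(\text{its }\U\text{-part})}$ is controlled quadratically by $\norm{H}_1$ near the point where $\langle\varphi\vert P'\varphi\rangle$ is close to $1$. Concretely, the square-root appears because $1 - \langle\varphi\vert P'\varphi\rangle = \norm{(\I-P')\varphi}^2 = \norm{(\I-P)e^{-iH}\varphi}^2$, and $1-(\Pi_{\U})_{\X}^{\uparrow} = E_{\X}(\U) \le \norm{(\I-P)\chi}^2$ for the relevant unit vector $\chi$; relating $(\I-P)e^{-iH}\varphi$ to $(\I-P)\varphi$ via $\norm{(\I-P)e^{-iH}\varphi - (\I-P)\varphi}\le\norm{(e^{-iH}-\I)\varphi}\le\norm{H}_{\infty}\le\norm{H}_1$ and using the reverse triangle inequality on the norms (not the squares) gives $E_{\X}(e^{iH}\U)^{1/2} = \min_{\varppsi\in\X}\norm{(\I-P')\varphi} \ge \min_{\varppsi\in\X}\norm{(\I-P)\varphi} - \norm{H}_1 = E_{\X}(\U)^{1/2} - \norm{H}_1 > 0$, which is exactly what we need, and in fact slightly stronger (it gives $E_{\X}(\rho')\ge (E_{\X}(\U)^{1/2}-\norm{H}_1)^2$).

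Let me restate the core chain cleanly, since it is short. For any unit $\varphi$ with $\varppsi\in\X$, using $e^{-iH}$ unitary and $\I-P$ a projection,
\ba
\norm{(\I-\Pi_{\U'})\varphi} = \norm{e^{iH}(\I-\Pi_\U)e^{-iH}\varphi} = \norm{(\I-\Pi_\U)e^{-iH}\varphi} \ge \norm{(\I-\Pi_\U)\varphi} - \norm{(\I-\Pi_\U)(e^{-iH}-\I)\varphi},
\ea
and the last term is at most $\norm{e^{-iH}-\I}_\infty \le \norm{H}_\infty \le \norm{H}_1$. Taking the minimum over all such $\varphi$, and recalling $E_{\X}(\V)^{1/2} = \big(1 - \max_{\varppsi\in\X}\langle\varphi\vert\Pi_\V\varphi\rangle\big)^{1/2} = \min_{\varppsi\in\X}\norm{(\I-\Pi_\V)\varphi}$ for any subspace $\V$, yields $E_{\X}(\U')^{1/2} \ge E_{\X}(\U)^{1/2} - \norm{H}_1 > 0$ by hypothesis. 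Hence $\U' = \im(\rho')$ is $\X$-tangled, and so $\rho'$ is $\X$-tangled by the range criterion.

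The main obstacle is purely bookkeeping: making sure the identity $E_{\X}(\V) = \min_{\varppsi\in\X}\norm{(\I-\Pi_\V)\varphi}^2$ is justified (it is immediate from $\langle\varphi\vert\Pi_\V\varphi\rangle = \norm{\Pi_\V\varphi}^2$ and $\norm{(\I-\Pi_\V)\varphi}^2 = 1-\norm{\Pi_\V\varphi}^2$ for unit $\varphi$, together with the definition $E_{\X}(\V) = 1-(\Pi_\V)_{\X}^{\uparrow}$ and compactness of $\X$ so the max/min is attained), and handling the $\norm{H}_\infty\le\norm{H}_1$ step (true since trace norm dominates operator norm). I do not anticipate any real difficulty; the only subtlety worth a sentence is that one should use the triangle inequality at the level of norms rather than squared norms, which is precisely what produces the square-root threshold rather than a weaker linear one. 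One could also package the argument via the fact that $\V\mapsto E_{\X}(\V)^{1/2}$ is the distance from (the unit sphere of) $\X$ to the subspace $\V$ in the appropriate sense, making the estimate a one-line application of the triangle inequality for that distance; I would present whichever version is shorter given the paper's established notation.
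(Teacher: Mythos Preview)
The paper does not actually prove this theorem; it simply states it as ``a straightforward generalization of \cite[Theorem 2]{zhu2024quantifying}'' and provides no argument. Your proof is correct, and since you explicitly follow the approach of that cited reference---reducing to the range criterion, rewriting $E_{\X}(\V)^{1/2}=\min_{\varphi\varphi^*\in\X}\norm{(\I-\Pi_{\V})\varphi}$, and applying the triangle inequality at the level of norms (not squared norms) together with $\norm{e^{-iH}-\I}_\infty\le\norm{H}_\infty\le\norm{H}_1$---it is presumably exactly what the authors had in mind.
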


\subsection{$\X$-tanglement witnesses from $\X$-tangled subspaces}

$\X$-tangled subspaces can be used to construct $\X$-tanglement witnesses, as follows~(see also~\cite[Section 5]{LJ21}):
\begin{prop}\label{prop:sub_witness}
Let $\U\subseteq \H$ be an $\X$-tangled subspace, and let $\mu=1/{(\Pi_{\U})_{\X}^{\uparrow}}$. Then $\mu>1$ and $H:=  \I - \mu \Pi_{\U}$ is an $\X$-tanglement witness with $\dim(\U)$ negative eigenvalues of magnitude $\mu-1$.
\end{prop}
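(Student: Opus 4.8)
The plan is to verify directly the two conditions in the definition of an $\X$-tanglement witness, namely that $H$ is $\X$-positive and that $H$ has a negative eigenvalue, while simultaneously reading off its full spectrum. The first thing to establish is that $\mu$ is well defined and satisfies $\mu>1$. The quantity $(\Pi_{\U})_{\X}^{\uparrow}=\max_{\psi\psi^*\in\X}\ip{\psi}{\Pi_{\U}\psi}$ is attained because $\X\subseteq\P(\H)$ is compact (it is a projective variety) and $\psi\psi^*\mapsto\ip{\psi}{\Pi_{\U}\psi}$ is a well-defined continuous function on $\P(\H)$. For a unit vector $\psi$ one has $0\le\ip{\psi}{\Pi_{\U}\psi}\le 1$, with the upper bound attained exactly when $\psi\in\U$. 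Since $\U$ is $\X$-tangled, no unit vector with $\psi\psi^*\in\X$ lies in $\U$, so this continuous function is strictly less than $1$ at every point of the compact nonempty set $\X$; hence its maximum $(\Pi_{\U})_{\X}^{\uparrow}$ is strictly less than $1$ (and is positive, the case $(\Pi_{\U})_{\X}^{\uparrow}=0$ being the degenerate one where $\U$ is orthogonal to all of $\X$). Therefore $\mu=1/(\Pi_{\U})_{\X}^{\uparrow}>1$.

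Next I would compute the spectrum of $H=\I-\mu\Pi_{\U}$ using the orthogonal decomposition $\H=\U\oplus\U^{\perp}$: on $\U$ the operator $\Pi_{\U}$ is the identity, so $H$ acts as $(1-\mu)\I$, while on $\U^{\perp}$ it acts as $\I$. Thus $H$ has eigenvalue $1-\mu$ with multiplicity $\dim(\U)$ and eigenvalue $1$ with multiplicity $\dim(\U^{\perp})$; since $\mu>1$, these are exactly $\dim(\U)$ negative eigenvalues, each of magnitude $\mu-1$, and in particular $H\notin\Pos(\H)$. To see that $H$ is $\X$-positive, take any unit vector $\psi$ with $\psi\psi^*\in\X$ and compute
\[
\ip{\psi}{H\psi}=\ip{\psi}{\psi}-\mu\ip{\psi}{\Pi_{\U}\psi}=1-\mu\ip{\psi}{\Pi_{\U}\psi}\ \ge\ 1-\mu\,(\Pi_{\U})_{\X}^{\uparrow}=0,
\]
using $\ip{\psi}{\Pi_{\U}\psi}\le(\Pi_{\U})_{\X}^{\uparrow}$ and $\mu\,(\Pi_{\U})_{\X}^{\uparrow}=1$. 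Hence $H_{\X}^{\downarrow}\ge 0$, so $H$ is $\X$-positive, and combined with the previous paragraph it is an $\X$-tanglement witness with the claimed eigenvalue data.

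The only step with any content is the strict inequality $(\Pi_{\U})_{\X}^{\uparrow}<1$: this is precisely where compactness of the variety $\X$ and the definition of an $\X$-tangled subspace are used, and it is the step I would single out as the ``main obstacle,'' though it is itself short. Everything else — the spectral decomposition of $H$ along $\U\oplus\U^{\perp}$ and the one-line inequality establishing $\X$-positivity — is routine.
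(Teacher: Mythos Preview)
Your proof is correct and is exactly the direct verification one would expect; the paper itself states this proposition without proof, so there is nothing to compare against. One small point worth flagging is the degenerate case $(\Pi_{\U})_{\X}^{\uparrow}=0$ (so that $\mu$ is undefined), which you note in passing; the paper implicitly excludes this case as well.
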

The number and magnitude of negative eigenvalues roughly measures ``how good" an $\X$-tanglement witness $H$ is. This proposition shows that good witnesses can be constructed using subspaces of large dimension and high geometric measure.

\section{Deciding subspace $\X$-tanglement using Hilbert's Nullstellensatz}\label{sec:null}
In this section we consider the special case of simply deciding if $\U$ is $\X$-tangled, without determining the geometric measure $E_{\X}(\U)$. Hilbert's Nullstellensatz gives a hierarchy of linear systems for this problem, extending the hierarchy introduced in~\cite{JLV22a} to general $\X$-tanglement:

\begin{theorem}[Nullstellensatz hierarchy for subspace $\X$-tanglement]\label{thm:null_sub}
A subspace $\U\subseteq \H$ is $\X$-tangled if and only if $I_k + I(\U)_k= S^k(\H^*)$ for $k \gg 0$.
\end{theorem}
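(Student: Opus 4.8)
The plan is to reduce the statement to Hilbert's weak Nullstellensatz (Theorem~\ref{thm:hilbert}), applied to the sum ideal $J := I + I(\U)$.

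First I would identify the variety cut out by $I(\U)$. Since $I(\U) = \langle \U^{\perp} \rangle$ is generated by the linear forms vanishing on $\U$, a pure state $\psi\psi^*$ lies in $V(I(\U))$ if and only if $\ell(\psi) = 0$ for every $\ell \in \U^{\perp}$, which holds if and only if $\psi \in \U$. Hence $V(I(\U)) = \{\psi\psi^* : \psi \in \U\}$. Since $J = I + I(\U)$ is again a homogeneous ideal (the sum of homogeneous ideals is homogeneous), we get $V(J) = V(I) \cap V(I(\U)) = \X \cap \{\psi\psi^* : \psi \in \U\}$, and by definition $\U$ is $\X$-tangled precisely when this intersection is empty, i.e. when $V(J) = \emptyset$.

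Next I would apply Theorem~\ref{thm:hilbert} to $J$: we have $V(J) = \emptyset$ if and only if $J_k = S^k(\H^*)$ for $k \gg 0$. To conclude it remains only to note that $J_k = I_k + I(\U)_k$. This follows from homogeneity: writing $I = \bigoplus_d I_d$ and $I(\U) = \bigoplus_d I(\U)_d$, we obtain $J = I + I(\U) = \bigoplus_d (I_d + I(\U)_d)$, so the degree-$k$ component of $J$ is exactly $I_k + I(\U)_k$. Combining the three steps gives: $\U$ is $\X$-tangled $\iff$ $V(J) = \emptyset$ $\iff$ $I_k + I(\U)_k = S^k(\H^*)$ for $k \gg 0$.

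I do not expect a genuine obstacle here, since the essential content is already packaged inside Theorem~\ref{thm:hilbert}; the proof is short. The only points needing care are (i) confirming $V(I(\U)) = \{\psi\psi^* : \psi\in\U\}$ from the definition of $I(\U)$, and (ii) the elementary but essential observation that the degree-$k$ component of a sum of homogeneous ideals is the sum of their degree-$k$ components. It may also be worth remarking that, specializing to $I = I(\X)$ and using the dualities $I(\U)_k^{\perp} = S^k(\U)$ and $I(\X)_k^{\perp} = \X^k$, this recovers the \enquote{simplified} Nullstellensatz hierarchy $\X^k \cap \U^{\otimes k} = \{0\}$ discussed in Section~\ref{sec:equiv}.
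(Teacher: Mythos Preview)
Your proposal is correct and follows exactly the same approach as the paper: identify that $\U$ is $\X$-tangled iff $V(I + I(\U)) = \emptyset$, then invoke Hilbert's weak Nullstellensatz (Theorem~\ref{thm:hilbert}) and use $(I + I(\U))_k = I_k + I(\U)_k$. The paper's proof is a two-line version of yours; your additional remarks on $V(I(\U))$ and on the degree-$k$ component of a sum of homogeneous ideals simply make explicit what the paper leaves implicit.
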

\begin{proof}
$\U$ is $\X$-tangled if and only if $V(I + I(\U))=\emptyset$. By Hilbert's Nullstellensatz (Theorem~\ref{thm:hilbert}), this is equivalent to $I_k + I(\U)_k= S^k(\H^*)$ for $k \gg 0$.
\end{proof}

\begin{remark} One can check that if $I_k + I(\U)_k= S^k(\H^*)$, then $I_{k+1} + I(\U)_{k+1}= S^{k+1}(\H^*)$, justifying our use of the word ``hierarchy." \end{remark}

The $k$-th level of this hierarchy checks if $I_k + I(\U)_k= S^k(\H^*)$, which amounts to solving a linear system in the $\binom{N+k-1}{k}$-dimensional space $S^k(\H^*)$, where $N=\dim(\H)$.  Because this is a computation in the degree-$k$ part of the polynomial ring, we also call $k$ the \textit{degree} of the hierarchy. The computational complexity of using this hierarchy to certify $\X$-tanglement of $\U$ depends on how large $k$ needs to be for the equality $I_k + I(\U)_k= S^k(\H^*)$ to hold. We refer to bounds on $k$ as \textit{degree bounds} for the hierarchy.

Since determining $\X$-tanglement of a subspace is NP-hard, one should expect the worst-case degree to grow at least linearly in $N$ in general. 
We say that a property holds for a \textit{generically chosen} element if it holds on a Zariski open dense (full measure) subset of the underlying Hilbert space. We prove both worst case and generic degree bounds. Our starting point is the following classical fact (see~\cite{harris2013algebraic}).
{\begin{fact}\label{fact:dim}\hspace{1em}
Let $\X\subseteq \P(\H)$ be a variety, and let $\dim(\X)$ be the (affine) dimension of $\X$ (see Section~\ref{sec:commutative_algebra}). Then:
\begin{enumerate}
\item Every subspace $\U$ of dimension $\dim(\U) > N - \dim(\X)$ intersects $\X$.
\item A generically chosen subspace $\U$ of dimension $\dim(\U) \leq N - \dim(\X)$ is $\X$-tangled.
\end{enumerate}
\end{fact}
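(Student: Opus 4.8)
\textbf{Proof proposal for Fact~\ref{fact:dim}.}

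The plan is to deduce both statements from the basic dimension theory of projective varieties. First I would recall the key fact: if $\X \subseteq \P(\H)$ is a nonempty projective variety and $L \subseteq \P(\H)$ is a linear subspace, then $\X \cap L \neq \emptyset$ whenever $\dim(\X) + \dim(L) \geq \dim(\P(\H)) = N-1$, where dimensions here are \emph{projective}. This is the standard ``dimension of intersections'' theorem (see e.g.~\cite{harris2013algebraic, shafarevich1988basic}): any two subvarieties of projective space whose dimensions sum to at least the ambient dimension must meet. Translating between affine and projective dimension (a linear subspace $\U \subseteq \H$ of affine dimension $\dim(\U) = s$ corresponds to a projective linear space of projective dimension $s-1$, and the affine cone over $\X$ has affine dimension $\dim(\X)$, which we take as our convention), the hypothesis $\dim(\U) > N - \dim(\X)$ becomes $(\dim(\U)-1) + (\dim(\X)-1) \geq N-1$, which is exactly the intersection condition. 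This proves part (1). The one edge case to dispose of is $\X = \emptyset$, where the statement is vacuous; otherwise $\X$ is a nonempty variety and the intersection theorem applies.

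For part (2), I would argue that the locus of ``bad'' subspaces (those meeting $\X$) has positive codimension in the Grassmannian $\mathrm{Gr}(s, \H)$ of $s$-dimensional subspaces, hence its complement is Zariski open and dense. Concretely, consider the incidence variety $Z = \{(\psi\psi^*, \U) : \psi\psi^* \in \X,\ \psi \in \U\} \subseteq \X \times \mathrm{Gr}(s,\H)$. Projection to the first factor $\X$ exhibits $Z$ as a fiber bundle whose fibers are $\{\U : \psi \in \U\} \cong \mathrm{Gr}(s-1, N-1)$, so $\dim Z = \dim(\X) - 1 + \dim \mathrm{Gr}(s-1,N-1)$ (using projective dimension $\dim(\X)-1$ for $\X \subseteq \P(\H)$). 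The image of the projection $Z \to \mathrm{Gr}(s,\H)$ is the set of subspaces that intersect $\X$; since the image of a morphism of varieties has dimension at most that of the source, this image has dimension at most $\dim Z$. A short computation with $\dim \mathrm{Gr}(s,\H) = s(N-s)$ shows that $\dim Z < \dim \mathrm{Gr}(s,\H)$ precisely when $s - 1 + \dim(\X) - 1 < N - 1$, i.e. $s \leq N - \dim(\X)$ (after accounting for the projective-versus-affine convention, so that the threshold matches the one in part (1)). Hence the bad locus is a proper closed subvariety, its complement is Zariski open and dense, and a generic $\U$ of dimension $s \leq N - \dim(\X)$ avoids $\X$ — i.e. is $\X$-tangled.

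The main obstacle is purely bookkeeping rather than conceptual: one must be scrupulous about the affine-versus-projective dimension conventions (the excerpt declares that $\dim(\X)$ is the \emph{affine} dimension of the cone, so $\X \subseteq \P(\H)$ has projective dimension $\dim(\X)-1$, and likewise an $s$-dimensional subspace $\U$ is a $\P^{s-1}$), so that the two threshold inequalities in parts (1) and (2) line up correctly at $\dim(\U) = N - \dim(\X)$. The other minor point requiring care is the dimension count for the incidence variety $Z$: one should verify that the projection $Z \to \X$ is genuinely surjective with equidimensional fibers (it is, since every point of $\X$ lies in \emph{some} $s$-dimensional subspace as long as $s \geq 1$), so that the dimension formula is an equality rather than merely an inequality. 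Both of these are routine; the substantive input is the classical projective intersection theorem, which I would cite rather than reprove.
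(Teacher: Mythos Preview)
Your argument is correct: part (1) is exactly the projective intersection-dimension theorem, and the incidence-variety count for part (2) is the standard route (and your dimension bookkeeping checks out, with $\dim Z = (\dim(\X)-1) + (s-1)(N-s) < s(N-s) = \dim \mathrm{Gr}(s,\H)$ precisely when $s \leq N - \dim(\X)$). The paper itself does not prove this fact at all; it simply records it as classical and points to Harris for a reference, so there is nothing to compare against beyond noting that what you have written is essentially the argument one finds there.
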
}

So the relevant regime to consider is $\dim(\U) \leq N - \dim(\X)$, otherwise the problem is trivially solvable in polynomial time. While a generically chosen subspace satisfying this bound will be $\X$-tangled, it is not clear at what level the Nullstellensatz hierarchy will certify it to be so.

In Sections~\ref{sec:worst_case} and~\ref{sec:generic} we prove general purpose worst-case and generic degree bounds for the Nullstellensatz hierarchy, respectively. 
In Section~\ref{sec:examples} we apply these degree bounds to the varieties in Example~\ref{ex:varieties}. A collection of these examples is presented in Table~\ref{table:nullstellensatz}. In all cases, we find that $\dim(\X)=o(N)$, and a generically chosen subspace of dimension $(1-\eps)N$ is certified $\X$-tangled at a constant degree $k$. Furthermore, in all cases we prove that $k=O(N)$ is sufficient in the worst case. We improve this bound in some cases (see Theorem~\ref{thm:worst_poly}).

An alternative hierarchy to decide if $\U$ is $\X$-tangled can be obtained from Corollary~\ref{cor:subspace}, which implies that $\U$ is $\X$-tangled if and only if $\lambda_{\max}(\Pi_{I,k} (\Pi_{\U} \otimes \I_{\H}^{\otimes k-1})\Pi_{I,k})<1$ for $k \gg 0$. Satisfyingly, in Section~\ref{sec:equiv} we prove that this is equivalent to the Nullstellensatz hierarchy. We also clarify the relationship between the Nullstellensatz hierarchy of Theorem~\ref{thm:null_sub} and the hierarchy introduced in~\cite{JLV22a}.

%
%
%

%
%
%
%
%
%
%
%
%
%
%
%
%
%
%

\subsection{Commutative algebra background}\label{sec:commutative_algebra}

Our degree bounds for the Nullstellensatz hierarchy require more background in commutative algebra, which we now briefly review. See~\cite{eisenbud2005geometry,eisenbud2013commutative,cox2013ideals} for more details.

Let $R=S^{\bullet}(\H^*)$, let $I \subseteq R$ be a homogeneous ideal, and let $A$ be a (commutative) $\complex$-algebra (which we often take to be $R/I$). We say that $f_1,\dots, f_{\ell} \in R$ are \textit{algebraically independent} if they do not satisfy any non-trivial polynomial equation, i.e. $g(f_1,\dots, f_{\ell})= 0 \Rightarrow g=0$. An \textit{$A$-module} $M$ is a ``vector space over $A$": An additive abelian group with scalar multiplication ${A\times M \rightarrow M}$ that distributes over addition in $A$ and in $M$, and satisfies $(ab)m=a(bm)$ for all $a,b \in A, m \in M$. A \textit{homomorphism} of modules is a map $\phi$ for which $\phi(a_1m_1+a_2m_2)=a_1\phi(m_1)+a_2\phi(m_2)$. 
We say that $M$ is a \textit{finite} $A$-module if there exist $m_1,\dots,m_p\in M$ for which every element of $M$ is an $A$-linear combination of the $m_i$.


We will make use of the following simplified version of Noether's Normalization lemma:

\begin{lemma}[Noether Normalization]
Let $f_1,\dots,f_{\ell} \in R$ be homogeneous of the same degree, and let $A=\complex[f_1,\dots, f_{\ell}]$ be the algebra they generate. Then there exist algebraically independent elements $g_1,\dots, g_p \in \spn\{f_1,\dots, f_{\ell}\}$ for which $A$ is a finite $\complex[g_1,\dots, g_p]$-module.
\end{lemma}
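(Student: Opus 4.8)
The plan is to prove the simplified Noether Normalization lemma by induction on $\ell$, the number of generators, reducing the number of generators one at a time until the remaining ones are algebraically independent. First I would handle the base case: if $f_1,\dots,f_\ell$ are already algebraically independent, we are done, taking $p=\ell$ and $g_i=f_i$. So suppose they are not algebraically independent, i.e.\ there is a nonzero polynomial relation among them. Since all $f_i$ share a common degree $e$, any such relation can be taken to be homogeneous (in the $f_i$), because the homogeneous components of a relation with respect to the grading inherited from $R$ are again relations, and at least one is nonzero; thus there is a nonzero homogeneous polynomial $G\in\complex[y_1,\dots,y_\ell]$ of some degree $t$ with $G(f_1,\dots,f_\ell)=0$.

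The key step is the standard change of coordinates. Because the ground field $\complex$ is infinite, I can pick scalars $c_1,\dots,c_{\ell-1}\in\complex$ such that, after the linear substitution $f_i \mapsto f_i + c_i f_\ell$ for $i<\ell$ (call the new generators $f_i'$, and set $f_\ell'=f_\ell$), the leading coefficient of $G$ viewed as a polynomial in $y_\ell$ becomes a nonzero constant. Concretely, writing $G_t$ for the degree-$t$ homogeneous part (which is all of $G$ here), the coefficient of $y_\ell^t$ after substitution is $G_t(c_1,\dots,c_{\ell-1},1)$, a nonzero polynomial in the $c_i$ since $G_t\ne 0$, so it is nonzero for a generic choice of $(c_1,\dots,c_{\ell-1})$. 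Note the $f_i'$ still span the same subspace as the $f_i$ and are still homogeneous of degree $e$. Now $G(f_1'-c_1 f_\ell',\dots,f_{\ell-1}'-c_{\ell-1}f_\ell',f_\ell')=0$ is, after expansion, a monic (up to the nonzero constant) polynomial equation for $f_\ell'$ with coefficients in $\complex[f_1',\dots,f_{\ell-1}']$. Hence $f_\ell'$ is integral over $A':=\complex[f_1',\dots,f_{\ell-1}']$, so $A=\complex[f_1',\dots,f_\ell']=A'[f_\ell']$ is a finite $A'$-module (generated by $1,f_\ell',\dots,f_\ell'^{t-1}$). By the inductive hypothesis applied to $f_1',\dots,f_{\ell-1}'$ (which are homogeneous of the common degree $e$), there are algebraically independent $g_1,\dots,g_p\in\spn\{f_1',\dots,f_{\ell-1}'\}=\spn\{f_1,\dots,f_{\ell-1},f_\ell\}\subseteq\spn\{f_1,\dots,f_\ell\}$ with $A'$ a finite $\complex[g_1,\dots,g_p]$-module; transitivity of finiteness then gives that $A$ is a finite $\complex[g_1,\dots,g_p]$-module.

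The main obstacle — really the only subtle point — is making sure the generic linear change of variables can be chosen within the span of the original $f_i$ and preserves both the common-degree and homogeneity hypotheses so that the induction can be reapplied; this is exactly why it is convenient that all generators have the \emph{same} degree $e$, so that $\spn\{f_1,\dots,f_\ell\}$ is a well-defined graded piece and any linear combination stays homogeneous of degree $e$. A secondary point to get right is that the relation $G$ may be taken homogeneous so that its top-degree-in-$y_\ell$ coefficient is controlled by a single nonzero form $G_t$; this uses that $\complex$ is infinite so that a nonvanishing polynomial condition on the $c_i$ is satisfiable. Everything else is the classical argument (integral dependence, transitivity of module-finiteness), which I would cite rather than reprove.
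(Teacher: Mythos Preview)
Your argument is the standard induction proof of Noether normalization and is essentially correct. One small quibble: after writing $f_i = f_i' - c_i f_\ell'$ and substituting into $G$, the coefficient of $f_\ell'^{\,t}$ is $G(-c_1,\dots,-c_{\ell-1},1)$, not $G_t(c_1,\dots,c_{\ell-1},1)$; this sign discrepancy is harmless since you only need it to be nonzero for a generic choice of the $c_i$, and $G$ is a nonzero homogeneous form.

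As for comparison with the paper: the paper does not prove this lemma at all. It is quoted as a ``simplified version of Noether's Normalization lemma'' in the background section and used as a black box in the proof of Theorem~\ref{thm:worst_case_bound}. So your proposal supplies a proof where the paper simply cites the classical result. Your write-up also makes explicit why the hypothesis that the $f_i$ are homogeneous of the \emph{same} degree is convenient: it guarantees that linear combinations remain homogeneous of that degree, so the induction hypothesis applies to $f_1',\dots,f_{\ell-1}'$ and the resulting $g_j$ genuinely lie in $\spn\{f_1,\dots,f_\ell\}$.
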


A \textit{grading} of $A$ is a direct sum decomposition $A=\textbigoplus_{d=0}^{\infty} A_d$ into $\complex$-vector spaces $A_d$ for which $A_c \cdot A_d \subseteq A_{c+d}$. The vector space $A_d$ is called the \textit{$d$-th graded part} (or \textit{homogeneous part of degree $d$}) of $A$. The symmetric algebra is graded by degree $R_d= S^d(\H^*)$, as is the quotient $R/I=\textbigoplus_d R_d/I_d$. For a graded algebra $A$, define the \textit{Hilbert function} $\hf_A(d)=\dim(A_d)$ and the \textit{Hilbert series} $\hs_A(t)=\sum_{d=0}^{\infty} \hf_A(d) t^d$.

For a graded algebra $A$ and integer $c$, let $A(-c)$ be the graded algebra with $d$-th graded part $A_{d-c}$ if $c \leq d$ and $0$ otherwise. If $A$ is a graded algebra, then we say $M$ is a \textit{graded $A$-module} if there is a direct sum decomposition $M=\textbigoplus_{d=0}^{\infty} M_d$ of abelian groups for which $A_c M_d \subseteq M_{c+d}$. The shifted modules $M(-c)$ are defined similarly. A \textit{homomorphism of graded algebras/modules} $\phi:M\rightarrow N$ is an algebra/module homomorphism for which $\phi(M_d)\subseteq N_d$. An \textit{exact sequence} of graded algebras/modules is a sequence of graded algebra/module homomorphisms
\ba
M^{(k)}\overset{\phi_k}{\longrightarrow} M^{(k-1)} \overset{\phi_{k-1}}{\longrightarrow} M^{(k-2)} \overset{\phi_{t-2}}{\longrightarrow} \cdots
\ea
for which $\im(\phi_i)=\ker(\phi_{i-1})$ for all $i$. A \textit{(graded) minimal free resolution} of $A=R/I$ is an exact sequence of the form
\ba
0 \longrightarrow \textbigoplus_{j=1}^{\beta_t} R(-c_{t,j}) \overset{\phi_{t}}{\longrightarrow} \textbigoplus_{j=1}^{\beta_{t-1}} R(-c_{t-1,j}) \overset{\phi_{t-1}}{\longrightarrow} \dots \overset{\phi_{1}}{\longrightarrow} \textbigoplus_{j=1}^{\beta_0} R(-c_{0,j}) \overset{\phi_{0}}{\longrightarrow} R/I \overset{\phi_{-1}}{\longrightarrow} 0,
\ea
where each arrow is a homomorphism of graded $A$-modules, and each $\beta_i$ is minimal (= the minimum number of generators of $\ker(\phi_{i-1})$). The \textit{Castelnuovo-Mumford regularity} of $A$ is defined as $\reg(A):=\max_{i,j}\{c_{i,j}-j\}$. It is a non-trivial but well-known fact that a minimal free resolution exists for any homogeneous ideal $I$, and that the regularity is independent of the choice of minimal free resolution~\cite{eisenbud2005geometry}.

Let $A$ be a graded algebra. A collection of homogeneous elements $a_1,\dots, a_t \in A$ is called \textit{regular} if each $a_i$ is not a zero divisor in $A/\langle a_1,\dots, a_{i-1}\rangle$. The \textit{depth} $\setft{depth}(A)$ is the maximum length of a regular sequence of homogeneous elements of positive degree. An ideal $P \subseteq A$ is \textit{prime} if $P\neq A$ and $(p,q\in A, pq \in P \Rightarrow p \in P \text{ or } q\in P)$. The \textit{(Krull) dimension} $\dim(A)$ is the largest $d$ for which there exists a strictly ascending chain $0=P_0 \subsetneq P_1 \subsetneq P_2 \subsetneq \dots \subsetneq P_d$ of prime ideals contained in $A$. For a projective variety $\X=V(I)$, we let $\dim(\X):=\dim(R/I)$ be the \textit{affine (Krull) dimension} of $\X$.  It is known that $\setft{depth}(A)\leq \dim(A)$, and we say that $A$ is \textit{Cohen-Macaulay} if equality holds. The \textit{radical} of $I$ is defined as $\sqrt{I}=\{f : f^k \in I \text{ for some k}\}$. By Hilbert's Nullstellensatz, $I(\X)=\sqrt{I}$ for any ideal $I$ cutting out $\X$.


\subsection{Worst-case degree bounds}\label{sec:worst_case}

Let $\H$ be a complex Hilbert space of dimension $N$, let $R=S^{\bullet}(\H^*)$, let $\X \subseteq \P(\H)$ be a variety and let $I\subseteq R$ be a homogeneous ideal cutting out $\X$ and generated in degree at most $d$. In this section we prove worst-case degree bounds for the Nullstellensatz hierarchy. We remark that the results of this section hold more generally over any algebraically closed field.

In the case when $R/I$ is Cohen-Macaulay, the following classical result shows that $\reg(R/I)$ determines the worst-case degree needed for the Nullstellensatz hierarchy. See~\cite[Proposition 4.14]{eisenbud2005geometry}


\begin{theorem}\label{thm:worst_case}
Suppose that $R/I$ is Cohen-Macaulay, and let $\U \subseteq \H$ be a subspace. Then the following statements are equivalent:
\begin{enumerate}
\item $\U$ is $\X$-tangled.
\item $I_k + I(\U)_k = R_k\;\;$ for $k = \setft{reg}(R/I)+1$.
\end{enumerate}
Furthermore, the bound $k = \setft{reg}(R/I)+1$ is tight: There exists an $\X$-tangled subspace $\U$ such that $I_k + I(\U)_k \subsetneq R_k\;\;$ for $k = \setft{reg}(R/I)$.
\end{theorem}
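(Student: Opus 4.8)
The plan is to reduce the statement to a standard fact in commutative algebra: for a Cohen--Macaulay graded algebra $A = R/I$, the Hilbert function $\hf_A(k)$ agrees with the Hilbert polynomial precisely for $k \geq \reg(A) - \dim(A) + 1$ (the ``$a$-invariant'' threshold), and moreover after cutting down by a linear system of parameters one obtains an Artinian algebra whose top nonzero graded piece sits in degree exactly $\reg(A)$. Concretely, I would argue as follows. First recall (Fact~\ref{fact:dim} and Theorem~\ref{thm:null_sub}) that $\U$ is $\X$-tangled iff $V(I + I(\U)) = \emptyset$ iff $(I + I(\U))_k = R_k$ for $k \gg 0$; the content of the theorem is pinning down the exact threshold. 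Since $I(\U)$ is generated by $N - \dim(\U)$ linear forms, $R/(I+I(\U))$ is obtained from $A = R/I$ by quotienting by a sequence of linear forms $\ell_1, \dots, \ell_c$ with $c = N - \dim(\U)$.

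For the first equivalence (item 1 $\iff$ item 2), I would use that regularity does not increase under quotienting by a linear form (and the key point: one may choose the $\ell_i$ generically so they form a regular sequence on the Cohen--Macaulay module $A$ as long as $c \leq \dim(A)$; if $c > \dim(A)$ then $\U$ cannot be $\X$-tangled by Fact~\ref{fact:dim}(1), since $\dim(\U) = N - c < N - \dim(\X)$ fails — wait, that direction needs care, so instead: if $\U$ is $\X$-tangled then $V(I + I(\U)) = \emptyset$, hence $\dim(R/(I+I(\U))) = 0$, hence $A$ modulo the $\ell_i$ is Artinian). When $\U$ is $\X$-tangled, $B := A/(\ell_1, \dots, \ell_c)$ is a finite-dimensional graded algebra, and I want to show $B_k = 0$ for $k = \reg(A) + 1$. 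The mechanism: choose a maximal regular sequence of linear forms $\ell_1, \dots, \ell_r$ among them (where $r = \dim A = \dim \X + 1$ affine-dimension-wise, so $r - 1 = \dim \X$ projectively); by Cohen--Macaulayness this is a system of parameters, $A/(\ell_1, \dots, \ell_r)$ is Artinian with $\reg = \reg(A)$ (regularity is preserved, not just bounded, by quotienting by a regular linear form), and its socle/top degree is exactly $\reg(A)$, so it vanishes in degrees $> \reg(A)$. The remaining linear forms $\ell_{r+1}, \dots, \ell_c$ only further quotient, so $B_k = 0$ for $k \geq \reg(A) + 1$ as well. This gives $I_k + I(\U)_k = R_k$ at $k = \reg(A)+1$. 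The converse (item 2 $\Rightarrow$ item 1) is immediate from Theorem~\ref{thm:null_sub} since $(I + I(\U))_k = R_k$ for one $k$ propagates upward (the Remark after Theorem~\ref{thm:null_sub}), forcing $V(I + I(\U)) = \emptyset$.

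For tightness, I need an $\X$-tangled subspace $\U$ with $(I + I(\U))_k \subsetneq R_k$ at $k = \reg(A)$. Here I would take $\U$ to be the common zero locus of a \emph{generic} regular sequence of $r = \dim(A)$ linear forms $\ell_1, \dots, \ell_r$ (so $\dim \U = N - r = N - \dim \X - 1 \leq N - \dim\X$, hence generically $\X$-tangled by Fact~\ref{fact:dim}(2), and indeed $V(I + I(\U)) = V(I) \cap \proj(\U) = \emptyset$ since a projective variety of dimension $\dim \X$ meets a generic linear subspace of projective codimension $\dim\X + 1$ in the empty set). Then $A/(\ell_1, \dots, \ell_r)$ is Artinian Cohen--Macaulay (a complete intersection quotient of $A$ by a regular sequence), its regularity equals $\reg(A)$, and — this is the crucial fact — an Artinian algebra of regularity $\rho$ has nonzero degree-$\rho$ part. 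Hence $(A/(\ell_i))_{\reg(A)} \neq 0$, i.e. $(I + I(\U))_{\reg(A)} \subsetneq R_{\reg(A)}$, as desired.

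\textbf{Main obstacle.} The delicate point is the precise bookkeeping of how regularity behaves under quotienting by linear forms: I need not merely $\reg(A/\ell A) \leq \reg(A)$ (easy, from the long exact sequence in local cohomology or from the mapping cone of the Koszul complex on a minimal free resolution), but \emph{equality} when $\ell$ is a nonzerodivisor, and then the identification $\reg(\text{Artinian } B) = \max\{k : B_k \neq 0\}$. Both are standard (e.g.\ \cite[Chapter 4]{eisenbud2005geometry}, \cite[Proposition 4.14]{eisenbud2005geometry}), so the ``proof'' is really a careful citation and assembly rather than new argument; the risk is getting the off-by-one in the $\reg$ vs.\ top-degree vs.\ $a$-invariant conventions wrong, which I would double-check against the rank-$1$ Segre example (where $A = R/I$ for $\X_1 \subseteq \proj(\complex^{n_1} \otimes \complex^{n_2})$ is the homogeneous coordinate ring of the Segre, which is Cohen--Macaulay with a known resolution) to make sure the stated threshold $k = \reg(R/I) + 1$ matches the worst-case degree computed in Theorem~\ref{thm:worst_poly}.
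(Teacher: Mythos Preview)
The paper does not give its own proof of this theorem: it is stated as a classical result with the single citation ``See~\cite[Proposition 4.14]{eisenbud2005geometry}.'' Your proposal is a correct fleshing-out of exactly that reference (Artinian reduction by a linear system of parameters, preservation of regularity under quotienting by a regular linear form, identification of $\reg$ with top nonzero degree in the Artinian case), and you even cite the same proposition; so your approach and the paper's are the same.

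One small point to tighten: when $\U$ is an \emph{arbitrary} $\X$-tangled subspace, the linear forms $\ell_1,\dots,\ell_c$ cutting out $\U$ are fixed by $\U$, not chosen generically, so you cannot simply ``choose a maximal regular sequence among them'' of length $r=\dim A$. What you want instead is: since $A/(\ell_1,\dots,\ell_c)$ is Artinian, the ideal $(\ell_1,\dots,\ell_c)A$ is $\mathfrak m$-primary, and by prime avoidance one can choose a \emph{basis} $\ell_1',\dots,\ell_c'$ of $\U^\perp$ whose first $r$ elements form a system of parameters (hence, by Cohen--Macaulayness, a regular sequence) on $A$. With that adjustment your argument goes through verbatim. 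Your tightness construction (generic $\U$ of codimension $\dim A$) is exactly right and matches the paper's generic regularity statement (Theorem~\ref{thm:generic} and the discussion around it).
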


The next theorem gives a worst-case degree bound in the case when $R/I$ may not be Cohen-Macaulay.

\begin{theorem}\label{thm:worst_case_bound}
For a linear subspace $\U \subseteq \H$ the following statements are equivalent:
\begin{enumerate}
\item $\U$ is $\X$-tangled.
\item $I_k + I(\U)_k = R_k\;\;$ for $k = N(d-1)+1$.
\end{enumerate}
\end{theorem}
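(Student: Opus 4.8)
The implication $(2)\Rightarrow(1)$ is immediate: if $I_k+I(\U)_k=R_k$ for $k=N(d-1)+1$, then $V(I+I(\U))=\emptyset$ by the weak Nullstellensatz (Theorem~\ref{thm:hilbert}), which is precisely the statement that $\U$ is $\X$-tangled. The content lies in $(1)\Rightarrow(2)$, which is an effective (Macaulay-type) Nullstellensatz. I would isolate it as follows: set $J:=I+I(\U)\subseteq R$, a homogeneous ideal generated in degree at most $d$ (by $f_1,\dots,f_p$ together with the linear forms defining $\U$), with $V(J)=\emptyset$ because $\U$ is $\X$-tangled. Since $J_k=I_k+I(\U)_k$, it suffices to prove the following general bound: \emph{if $J\subseteq R=S^{\bullet}(\H^*)$ is a homogeneous ideal generated in degree at most $d$ with $V(J)=\emptyset$, then $J_k=R_k$ for all $k\ge N(d-1)+1$.}

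I would prove this bound in three steps. \emph{(i) Equal-degree generators.} Replacing each generator $h$ of degree $d_h<d$ by the family $\{\bfx^{\alpha}h:\abs{\alpha}=d-d_h\}$ changes neither $V(J)$ nor the graded piece $J_k$ for $k\ge d$; since $N(d-1)+1\ge d$ we may assume all generators of $J$ have degree exactly $d$. Let $W\subseteq S^d(\H^*)$ be their $\complex$-span, so $J=\langle W\rangle$. \emph{(ii) A complete intersection inside $J$.} Since $V(J)=\emptyset$, the irrelevant ideal $\frakm=\langle x_1,\dots,x_N\rangle$, of height $N$, is the unique minimal prime over $J$; Krull's height theorem then gives $\dim_{\complex}W\ge N$, and because $\complex$ is infinite one can choose (one at a time) elements $g_1,\dots,g_N\in W$ forming a regular sequence. \emph{(iii) Hilbert series.} Iterating $0\to A(-d)\overset{g}{\longrightarrow}A\to A/(g)\to 0$ for a degree-$d$ nonzerodivisor $g$, the complete intersection $A:=R/\langle g_1,\dots,g_N\rangle$ has Hilbert series $\bigl((1-t^{d})/(1-t)\bigr)^{N}=(1+t+\dots+t^{d-1})^{N}$, a polynomial of degree $N(d-1)$. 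Hence $A_k=0$, that is $\langle g_1,\dots,g_N\rangle_k=R_k$, for all $k\ge N(d-1)+1$, and since $\langle g_1,\dots,g_N\rangle\subseteq J$ we conclude $J_k=R_k$. (Carrying out this argument inside $S^{\bullet}(\U^*)$ would yield the sharper bound $k=\dim(\U)(d-1)+1$, but $N(d-1)+1$ is all that is required here.)

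The delicate point is step (ii): that generic elements of $W$ can be taken to form a regular sequence. The induction is standard — given a regular sequence $g_1,\dots,g_{i-1}$ with $i-1<N$, the ideal $\langle g_1,\dots,g_{i-1}\rangle$ is a complete intersection, hence Cohen--Macaulay and unmixed of codimension $i-1<N$, so none of its finitely many associated primes is $\frakm$; since $\sqrt{J}=\frakm$ (as $V(J)=\emptyset$), $W$ is contained in no associated prime of $\langle g_1,\dots,g_{i-1}\rangle$, and prime avoidance for subspaces over the infinite field $\complex$ yields $g_i\in W$ that is a nonzerodivisor modulo $\langle g_1,\dots,g_{i-1}\rangle$. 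This is the only place the infinitude of the field (and essentially all of the commutative-algebra input) enters; alternatively one could invoke the classical Macaulay/Lazard degree bound directly from the literature.
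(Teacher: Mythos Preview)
Your proof is correct and follows the same overall architecture as the paper's: reduce to degree-$d$ generators of $J=I+I(\U)$, extract $N$ elements $g_1,\dots,g_N\in J_d$ that form a regular sequence in $R$, and then read off from the Hilbert series $(1+t+\dots+t^{d-1})^N$ that $R/\langle g_1,\dots,g_N\rangle$ vanishes above degree $N(d-1)$.

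The one genuine difference is how the $g_i$ are produced. The paper invokes Noether normalization on the subalgebra $A=\complex[J_d]$ to obtain algebraically independent $g_1,\dots,g_{\tilde N}\in J_d$ with $A$ finite over $\complex[g_1,\dots,g_{\tilde N}]$; it then argues via radicals and integral dependence that $V(g_1,\dots,g_{\tilde N})=\emptyset$, forcing $\tilde N=N$, so the $g_i$ are a system of parameters (hence, implicitly, a regular sequence in the Cohen--Macaulay ring $R$). You instead build the regular sequence directly by iterated prime avoidance: at each step the partial complete intersection is Cohen--Macaulay and unmixed of codimension $<N$, so none of its associated primes is $\frakm=\sqrt{J}$, and a generic element of $W=J_d$ avoids them all. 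Your route is more elementary and self-contained (no Noether normalization, no integral-extension lemma), while the paper's route makes the appearance of the number $N$ slightly more conceptual, as the transcendence degree of $A$. Both arguments ultimately hinge on the same Cohen--Macaulay fact about polynomial rings, and your observation that one could work in $S^{\bullet}(\U^*)$ to get $\dim(\U)(d-1)+1$ is a nice bonus not stated in the paper.
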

\begin{proof}
$2 \Rightarrow 1$ is clear, so it suffices to prove $1 \Rightarrow 2$. Let $f_1,\dots, f_{\ell} \in R_d$ be a vector space basis for $I_d+I(\U)_d$. Then $\langle f_1, \dots, f_{\ell} \rangle_D = I_D + I(\U)_D$ for all $D \geq d$. Let $A=\complex[f_1,\dots, f_{\ell}]$ and $M_A$ be the ideal generated by $f_1,\dots, f_{\ell}$ in $A$. We need to prove that $\langle f_1,\dots, f_{\ell} \rangle_k=R_k$. By Noether's normalization lemma, there exist algebraically independent elements $g_1,\dots, g_{\tilde{N}} \in \spn\{f_1,\dots, f_{\ell}\}$ such that $A$ is a finite $B:=\complex[g_1,\dots, g_{\tilde{N}}]$-module. Since $g_1,\dots, g_{\tilde{N}}$ are algebraically independent, $\tilde{N} \leq N$. Let $M_B$ be the ideal generated by $g_1,\dots, g_{\tilde{N}}$ in $B$.
\begin{claim}
$V(g_1,\dots, g_{\tilde{N}})=\emptyset$.
\end{claim}
Before proving the claim, we use it to prove the theorem. Since at least $N$ equations are needed to cut out the empty set, the claim implies $\tilde{N} \geq N$, and hence $\tilde{N}=N$. For each $j=\{0,1,\dots, N\}$, let $K_j=R/(R \langle g_1,\dots, g_j \rangle)$. Then we have an exact sequence of graded algebras
\ba
0 \rightarrow K_j(-d) \xrightarrow{g_j} K_j \rightarrow K_{j+1} \rightarrow 0.
\ea
Thus,
\ba
\HS_{K_j}(t)=\HS_{K_j(-d)}(t)+\HS_{K_{j+1}}(t).
\ea
Substituting $\HS_{K_j(-d)}(t)=t^d\HS_{K_j}(t),$ we obtain
\ba
\HS_{K_{j+1}}(t)=(1-t^d) \cdot \HS_{K_j}(t).
\ea
Thus,
\ba
\HS_{K_N}(t)&=(1-t^d)^N \cdot \HS_{R}(t)\\
&=\left(\frac{1-t^d}{1-t}\right)^N\\
&=(t^{d-1} + t^{d-2}+ \dots + t+1)^N.
\ea
So the Hilbert series of $K_N$ is equal to zero in degree $k=N(d-1)+1$. Equivalently, $R \langle g_1,\dots, g_N\rangle_k=R_k.$ Since $R_k \supseteq \langle f_1,\dots, f_{\ell} \rangle \supseteq \langle g_1,\dots, g_N \rangle$, this implies $\langle f_1,\dots, f_{\ell} \rangle_k=R_k$. This completes the proof modulo proving the claim.

To prove the claim, we first observe that $\sqrt{A M_B}=M_A$. First note that $A M_B \subseteq M_A$, so $\sqrt{A M_B} \subseteq \sqrt{M_A}=M_A$. For the reverse inclusion, it suffices to prove that any homogeneous $h \in M_A$ is contained in $\sqrt{A M_B}$. By~\cite[Proposition 7.1]{atiyah1994introduction} it holds that $h^k = p_0+p_1 h+\dots + p_{k-1} h^{k-1}$ for some $k$ and $p_i \in B$. Since $h$ is homogeneous of positive degree, we can assume that each $p_i$ is homogeneous of positive degree, so $p_i \in M_B$. This shows that $h^k \in A M_B$, so $h \in \sqrt{A M_B}$.

Hence,
\ba
\sqrt{R \; M_B}=\sqrt{R \; A \; M_B}=\sqrt{R \sqrt{A \; M_B}}=\sqrt{R \; M_A}=\langle x_1,\dots, x_N\rangle,
\ea
where the first equality follows from $R=RA$, the second follows from $\sqrt{IJ}=\sqrt{\sqrt{I}\sqrt{J}}$, the third is by above, and the fourth follows from the assumption $V(f_1,\dots, f_{\ell})=0$.

Thus,
\ba
V(g_1,\dots, g_{\tilde{N}})=V(R \; M_B)=V(x_1,\dots, x_N)=0.
\ea
This completes the proof.
\end{proof}

%


Note that Theorems~\ref{thm:worst_case_bound} and~\ref{thm:worst_case} combined show that if $R/I$ is Cohen-Macaulay and $I$ is generated in degree at most $d$, then $\setft{reg}(R/I) \leq N(d-1)$.


\subsection{Generic degree bounds}\label{sec:generic}

As before, let $\H$ be a complex Hilbert space of dimension $N$, let $R=S^{\bullet}(\H^*)$, let $\X \subseteq \P(\H)$ be a variety and let $I$ be a homogeneous ideal cutting out $\X$. In this section we prove generic degree bounds for the Nullstellensatz hierarchy.

A \textit{Zariski open set} is the set complement to a projective variety. We say that a property holds for a \textit{generically chosen} $s$-dimensional subspace $\U \subseteq \H$ if it holds on a Zariski open dense subset of the set of all $s$-dimensional subspaces (which forms a variety called the \textit{Grassmannian}). In particular, if a basis for $\U$ is chosen at random with respect to the Haar measure, then the property holds with probability one.

The following results hold over any infinite field. The first result shows that a generic version of Theorem~\ref{thm:worst_case} holds even if $R/I$ is not Cohen-Macaulay.

\begin{theorem}
For a generically chosen subspace $\U \subseteq \H$ of dimension $\dim(\U) \leq N-\dim(\X)$, it holds that $I_k+I(\U)_k=R_k$ for
\ba
k=\setft{reg}(R/I)+1.
\ea
\end{theorem}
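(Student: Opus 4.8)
The plan is to reduce the statement to producing a single suitable subspace, and then to build that subspace out of generic linear forms, using the classical fact that a generic hyperplane section of a graded algebra does not increase its Castelnuovo--Mumford regularity. So the first step is to observe that it suffices to exhibit one subspace $\U_0\subseteq\H$ of dimension $s:=\dim(\U)$ for which $I_k+I(\U_0)_k=R_k$ at $k=\reg(R/I)+1$. Indeed $I(\U)=\langle\U^{\perp}\rangle$, so $I(\U)_k$ is the image of the multiplication map $\U^{\perp}\otimes S^{k-1}(\H^*)\to S^k(\H^*)$, and the equality $I_k+I(\U)_k=R_k$ says exactly that the linear map $I_k\oplus(\U^{\perp}\otimes S^{k-1}(\H^*))\to R_k$ given by the inclusion plus multiplication is surjective. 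As $\U$ ranges over the Grassmannian of $s$-planes in $\H$, this is an algebraic family of linear maps between fixed-dimensional spaces, so surjectivity is a Zariski-open condition on $\U$; since the Grassmannian is irreducible, a nonempty open set is dense, which gives the reduction.

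To build $\U_0$, set $A=R/I$ and $c=\dim(\X)=\dim(A)$, so that the hypothesis $\dim(\U)\le N-\dim(\X)$ reads $c\le N-s$. I would choose linear forms $\ell_1,\dots,\ell_{N-s}\in\H^*$ with $\ell_1,\dots,\ell_c$ picked successively generically: writing $A^{(j)}:=A/(\ell_1A+\dots+\ell_jA)$, each $\ell_{j+1}$ is a general linear form for $A^{(j)}$. Two classical facts about a general linear form $\ell$ on a finitely generated graded algebra $B$ are used here: (i) if $\dim B\ge1$ then $\dim(B/\ell B)=\dim B-1$, since a general hyperplane avoids every minimal prime of $B$; and (ii) $\reg(B/\ell B)\le\reg B$ (see~\cite{eisenbud2005geometry}). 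Applying (i) $c$ times makes $\bar A:=A^{(c)}=R/(I+\langle\ell_1,\dots,\ell_c\rangle)$ Artinian (if $c=0$ then $\bar A=A$ is already Artinian), and applying (ii) $c$ times gives $\reg(\bar A)\le\reg(A)$. Then I extend $\ell_1,\dots,\ell_c$ to a linearly independent tuple $\ell_1,\dots,\ell_{N-s}$ in $\H^*$ (possible since $N-s\le N$) and set $\U_0:=\{v\in\H:\ell_1(v)=\dots=\ell_{N-s}(v)=0\}$, an $s$-dimensional subspace with $I(\U_0)=\langle\ell_1,\dots,\ell_{N-s}\rangle\supseteq\langle\ell_1,\dots,\ell_c\rangle$.

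Finally, since $\langle\ell_1,\dots,\ell_c\rangle\subseteq I(\U_0)$, the algebra $R/(I+I(\U_0))$ is a graded quotient of $\bar A$. A graded Artinian algebra has regularity equal to its top nonvanishing degree, so $\bar A_k=0$ for every $k>\reg(\bar A)$; together with $\reg(\bar A)\le\reg(R/I)$ this gives $(R/(I+I(\U_0)))_k=0$, i.e. $I_k+I(\U_0)_k=R_k$, at $k=\reg(R/I)+1$, and the reduction step upgrades this to the generic statement. I expect the only genuinely nontrivial ingredient to be fact (ii): one must know that a general hyperplane section does not increase regularity even once the ambient algebra has become Artinian, where the general linear form is necessarily a zero divisor --- so, unlike the Cohen--Macaulay situation of Theorem~\ref{thm:worst_case}, there is no regular sequence to lean on and the argument rests on the behaviour of regularity under general hyperplane sections (via local cohomology).
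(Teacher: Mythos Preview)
Your proof is correct and takes essentially the same route as the paper: both rely on the fact that a generic hyperplane section does not increase Castelnuovo--Mumford regularity (the paper cites \cite[Lemma 4.9, Corollary 4.11]{eisenbud2005geometry}) together with the characterization of regularity for an Artinian graded algebra as its top nonvanishing degree (\cite[Corollary 4.4]{eisenbud2005geometry}). The only cosmetic differences are that you spell out the Zariski-open reduction to a single $\U_0$ and stop cutting after $c=\dim(\X)$ steps (handling the remaining $N-s-c$ linear forms by a pure quotient argument), whereas the paper applies the regularity-under-hyperplane fact all $N-s$ times and then invokes the Artinian characterization directly.
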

\begin{proof}
Let $\dim(\X)\leq r \leq N$, let $\ell_1,\dots, \ell_r$ be generically chosen linear forms, and let $\U=V(\ell_1,\dots, \ell_r)$ so that $\U$ is a generically chosen subspace of dimension $N-r$. Let $A=R/I$. By~\cite[Lemma 4.9, Corollary 4.11]{eisenbud2005geometry}, a generically chosen linear form $\ell_1 \in R_1$ satisfies $\setft{reg}(A) \geq \setft{reg}(R/(I+\langle \ell_1 \rangle))$. Continuing in this way, we obtain $\setft{reg}(A) \geq \setft{reg}(R/(I+\langle \ell_1,\dots, \ell_r \rangle))$. Furthermore, $\U \cap \X=\{0\}$ (see e.g.~\cite[Definition 11.2]{harris2013algebraic}). 
By~\cite[Corollary 4.4]{eisenbud2005geometry}, $\setft{reg}(R/(I+\langle \ell_1,\dots, \ell_r \rangle))$ is the largest integer $k$ for which $I_k+I(\U)_k \neq R_k$. Hence equality holds for $k=\reg(A)+1$.
\end{proof}

The following theorem gives a bound on the dimension of a generically chosen subspace that can be certified $\X$-tangled at level $k$ of the Nullstellensatz hierarchy.
\begin{theorem}\label{thm:generic}
Let $s,k$ be positive integers. If
\ba\label{eq:generic}
\dim I_k^\perp<{N-s+k\choose k},
\ea
then a generically chosen $s$-dimensional subspace $\U\subseteq \H$ satisfies $I_k+I(\U)_k=R_k$.
\end{theorem}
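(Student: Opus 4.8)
The plan is to translate the statement into a dimension count about a linear map between the degree-$k$ pieces of polynomial rings, and then to exhibit one specific subspace $\U_0$ for which the count works out, so that the Zariski-open condition ``$I_k + I(\U)_k = R_k$'' is satisfied generically. Fix $s$-dimensional $\U \subseteq \H$, written as $\U = V(\ell_1,\dots,\ell_{N-s})$ for linear forms $\ell_j \in \H^*$. The ideal $I(\U)$ is generated by $\ell_1,\dots,\ell_{N-s}$, so $I(\U)_k$ is exactly the degree-$k$ part, which can be identified with the kernel of the restriction map $R_k = S^k(\H^*) \to S^k(\U^*)$ (polynomials vanishing on $\U$). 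Consequently $R_k / I(\U)_k \cong S^k(\U^*)$, which has dimension $\binom{s+k-1}{k}$. The condition $I_k + I(\U)_k = R_k$ is therefore equivalent to: the composite linear map $I_k \hookrightarrow R_k \twoheadrightarrow S^k(\U^*)$ is \emph{surjective}. Dualizing (using the bilinear pairing of Section~\ref{sec:background}), this is equivalent to $I_k^\perp \cap \U^{\otimes k} \cap S^k(\H) = \{0\}$ inside $S^k(\H)$, i.e. $I_k^\perp \cap S^k(\U) = \{0\}$, which is the intersection-with-$\U^{\otimes k}$ condition from the Nullstellensatz hierarchy.

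With this reformulation, the key steps are: (1) Both surjectivity of the composite $I_k \to S^k(\U^*)$ and the vanishing $I_k^\perp \cap S^k(\U) = \{0\}$ are Zariski-open conditions on the point of the Grassmannian parametrizing $\U$ (the latter because failure of transversality is a closed condition cut out by the vanishing of appropriate minors of a matrix whose entries depend polynomially on a basis of $\U$). (2) A Zariski-open subset of the Grassmannian is either empty or dense; hence it suffices to find a \emph{single} $s$-dimensional $\U_0$ with $I_k^\perp \cap S^k(\U_0) = \{0\}$. (3) To produce $\U_0$, apply a dimension count: the map sending a subspace $\U$ to $\dim\bigl(I_k^\perp \cap S^k(\U)\bigr)$ is upper semicontinuous, and a parameter count shows that if $\dim I_k^\perp + \dim S^k(\U) = \dim I_k^\perp + \binom{s+k-1}{k} \le \binom{N+k-1}{k} = \dim S^k(\H)$ held we would be done immediately; but the hypothesis is the slightly different bound $\dim I_k^\perp < \binom{N-s+k}{k}$, so the argument must be sharper. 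The natural route is the ``generic initial ideal / generic hyperplane section'' toolkit already invoked in the surrounding proofs (e.g. \cite[Lemma 4.9, Corollary 4.11, Corollary 4.4]{eisenbud2005geometry}): cutting $R$ by $N-s$ generic linear forms $\ell_1,\dots,\ell_{N-s}$ and tracking how $\dim I_k^\perp$ drops.

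The cleanest concrete argument I would pursue: let $\bar R = R/I(\U) \cong S^\bullet(\U^*)$, a polynomial ring in $s$ variables, and let $\bar I$ be the image of $I$ in $\bar R$; then $I_k + I(\U)_k = R_k$ iff $\bar I_k = \bar R_k$. Now $\dim \bar I_k = \dim \bar R_k - \dim(\bar R/\bar I)_k = \binom{s+k-1}{k} - \dim(\bar R/\bar I)_k$, so the claim is that $(\bar R/\bar I)_k = 0$ for generic $\U$. One shows $\dim(\bar R/\bar I)_k \le$ (something controlled by $\dim I_k^\perp$) by the following: adding one generic linear form $\ell$ to $I$ gives the exact sequence $R/(I+\langle\ell\rangle)(-1) \xrightarrow{\ell} R/(I+\langle\ell\rangle)$ twisted appropriately — more usefully, $\dim (R/(I : \ell) + \langle \ell\rangle)_k$ relates $\dim(R/I)_k$ and $\dim(R/I)_{k-1}$ — and iterating $N-s$ times, the successive-differences telescope so that $(\bar R/\bar I)_k$ is controlled by a linear combination of $\dim(R/I)_j$ for $j$ near $k$. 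Passing to orthogonal complements, $\dim(R/I)_j = \binom{N+j-1}{j} - \dim I_j^\perp \le \binom{N+j-1}{j}$ and, crucially, $\dim I_j^\perp \le \dim I_k^\perp$ for $j \le k$ (the $I_j^\perp$ are nested once you multiply up, since $I_j \cdot R_{k-j} \subseteq I_k$ forces $I_k^\perp \subseteq$ pullbacks of $I_j^\perp$; more simply $\dim I_j^\perp$ is non-decreasing in $j$ because $R$ is an integral domain). Plugging these bounds into the telescoped identity and simplifying the binomial sum — using $\sum_{j} \binom{N-1+j}{j}$-type hockey-stick identities — should yield $\dim(\bar R/\bar I)_k \le \binom{N-s+k}{k} - (\text{quantity} \ge \dim I_k^\perp)$, which is $\le 0$ under the hypothesis, forcing $(\bar R/\bar I)_k = 0$.

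The main obstacle I anticipate is step (3): getting the binomial bookkeeping to produce exactly the threshold $\binom{N-s+k}{k}$ rather than a weaker bound like $\binom{N+k-1}{k} - \binom{s+k-1}{k} + 1$. This requires carefully choosing \emph{which} exact sequences / generic-section lemmas to iterate (there is a real difference between cutting by $\ell$ via the ideal quotient $I:\ell$ versus via $I + \langle\ell\rangle$, and only the right choice telescopes cleanly), and it likely relies on the fact that for generic $\ell$ the colon ideal $I : \ell$ agrees with $I$ in the relevant degrees (a saturation/associated-primes argument). Once the correct combination of \cite[Corollary 4.4]{eisenbud2005geometry}-type statements is identified, the remaining computation is routine binomial-coefficient manipulation. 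I would also double-check the edge cases where $I_k^\perp$ is large enough that the hypothesis is vacuous, and the case $s \ge N - \dim(\X)$ where Fact~\ref{fact:dim} already gives the conclusion, to make sure the dimension count is consistent with those known results.
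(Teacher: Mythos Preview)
Your setup is correct—the condition $I_k+I(\U)_k=R_k$ is Zariski-open on the Grassmannian, so one good $\U_0$ suffices—but the exact-sequence approach you propose in step~(3) will not reach the sharp threshold, and the obstacle you flag is genuine and not resolvable by the tools you list. The sequence $0\to (R/I)(-1)\xrightarrow{\ell}R/I\to R/(I+\langle\ell\rangle)\to 0$ relates Hilbert-function values across \emph{different degrees}: iterating it $N-s$ times expresses $\hf_{R/(I+I(\U))}(k)$ as an alternating sum $\sum_j(-1)^j\binom{N-s}{j}\hf_{R/I}(k-j)$, which depends on $\hf_{R/I}$ at many degrees, whereas the hypothesis bounds only the single value $\hf_{R/I}(k)=\dim I_k^\perp$. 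Your monotonicity claim $\dim I_j^\perp\leq\dim I_k^\perp$ already requires $\setft{depth}(R/I)\geq 1$ (not assumed), and is in any case far too weak to control an alternating sum; moreover exactness of the sequence requires each $\ell_i$ to be a non-zerodivisor on the successive quotients, i.e.\ $\setft{depth}(R/I)\geq N-s$, again not assumed. No amount of hockey-stick bookkeeping will close this gap, because the information simply isn't there in degree $k$ alone via exact sequences.

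The paper bypasses all of this by using Galligo's theorem combinatorially rather than homologically. After a change of coordinates the leading-monomial ideal $J=\lm(I)$ (reverse-lex) is Borel-fixed, and $\dim J_k^\perp=\dim I_k^\perp$ by Macaulay. The Borel-fixed property then yields the threshold in one stroke: if $x_s^k\notin J$, every degree-$k$ monomial of $J$ must lie in $\langle x_1,\dots,x_{s-1}\rangle$, forcing $\dim J_k^\perp\geq\dim\complex[x_s,\dots,x_N]_k=\binom{N-s+k}{k}$, contradicting the hypothesis. Hence $x_s^k\in J$, and by Borel-fixedness \emph{all} degree-$k$ monomials in $x_1,\dots,x_s$ are leading terms of elements of $I_k$; taking $\U_0=V(x_{s+1},\dots,x_N)$ and lifting those elements gives a lower-triangular system showing $I_k+I(\U_0)_k=R_k$. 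You mentioned the generic-initial-ideal toolkit in passing but then pivoted to hyperplane sections; the paper shows it is exactly the right tool, applied directly to the monomial combinatorics rather than fed through exact sequences.
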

\begin{remark}\label{rmk:related}
Weaker results were presented in~\cite{JLV22a} and~\cite[Theorem 20]{JLV23a}, with similar guarantees for a related algorithm to \textit{recover} elements of $\X$ contained in $\U$. The above theorem improves~\cite[Theorem 20]{JLV23a} for the problem of certifying $\X$-tanglement of a subspace, but it does not apply to the recovery setting. It is an interesting open problem to adapt Theorem~\ref{thm:generic} to the recovery setting. This would amount to proving that for generically chosen $\psi_1 \psi_1^*,\dots, \psi_s \psi_s^* \in \X$ with $s$ satisfying~\eqref{eq:generic} (or similar), it holds that $I_k+I(\spn\{\psi_1,\dots, \psi_s\})_k=I(\psi_1,\dots, \psi_s)_k$.
\end{remark}
\begin{example}
Theorem~\ref{thm:generic} is sharp. Let $I=\langle x_1,x_2,\dots,x_{s-1} \rangle $ and let $\X=V(I)$.
Then we have $\dim I_k^\perp={N-s+k\choose k}$.
Any subspace $\U\subseteq \H$ of dimension $s$ must intersect $\X$. This implies
that $I_k+I(\U)_k$ cannot equal $R_k$.
\end{example}

To prove Theorem~\ref{thm:generic} we fix a monomial ordering and invoke Galligo's theorem that the leading monomial ideal of $I$ is Borel fixed after a change of basis. In more details, we fix the {\em reverse lexicographic} monomial ordering on $R$ with $x_1>x_2>\cdots>x_N$, wherein $x_1^{\alpha_1} \cdots x_N^{\alpha_N} > x_1^{\beta_1} \cdots x_N^{\beta_N}$ if and only if either
\begin{enumerate}
\item $|\bfalpha| > |\bfbeta|$, or
\item $|\bfalpha|=|\bfbeta|$, and $\alpha_i < \beta_i$ for the greatest integer $i$ satisfying $\alpha_i \neq \beta_i$.
\end{enumerate}
Note that for any positive integers $s$ and $k$, 
 any monomial of degree $k$ in $x_1,x_2,\dots,x_s$
 is larger than any monomial of degree $k$ in
 the ideal $\langle x_{s+1},x_{s+2},\dots,x_N\rangle$.
A \textit{monomial ideal} is an ideal generated by monomials. For an ideal $I\subseteq R$, the \textit{leading monomial ideal} of $I$, denoted $\lm(I)$, is the ideal generated by the leading monomials of the elements of $I$. For a polynomial $f$ let $\lm(f)$ be the leading monomial of $f$.

\begin{definition}
An monomial ideal $J\subseteq R$ is \emph{Borel-fixed} if for every monomial $q\in J$ and every $i$ and $j$ with $1\leq i<j\leq n$ we have that $x_j q\in I$ implies $x_i q\in I$.
\end{definition}
Note that if $x_s^k$ lies in a Borel-fixed ideal $J$,
then all monomials in $x_1,x_2,\dots,x_s$ of degree $k$ lie in $J$.

\begin{theorem}[Galligo~\cite{galligo1979theoreme}. See also Theorems~15.18 and~15.20 in~\cite{eisenbud2013commutative}]\label{thm:galligo}
If $I\subseteq R$ is a homogeneous ideal, then there exists a linear change of coordinates on $\H$ such that $\lm(I)$ is a Borel-fixed ideal.
\end{theorem}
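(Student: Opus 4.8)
The plan is to prove Galligo's theorem through the standard theory of \emph{generic initial ideals}. Fix the reverse-lex order with $x_1 > \cdots > x_N$ as above, and for $g \in \GL(\H)$ let $\lm(gI)$ denote the leading monomial ideal of $I$ after the change of coordinates $g$. I would actually prove the stronger statement that there is a nonempty Zariski-open $V \subseteq \GL(\H)$ and a single monomial ideal $J$ (the generic initial ideal of $I$) with $\lm(gI) = J$ for all $g \in V$, together with the assertion that this $J$ is Borel-fixed; taking any $g \in V$ then yields the theorem.

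\emph{Step 1 (existence of the generic initial ideal).} For each degree $d$, the space $(gI)_d \subseteq R_d$ has dimension $\dim I_d$ independent of $g$, since $g$ acts invertibly on $R_d$. Writing a basis of $(gI)_d$ in the monomial basis ordered by the term order and row-reducing, the degree-$d$ component of $\lm(gI)$ is the span of the pivot monomials, and ``a prescribed set of monomials is the pivot set'' is the nonvanishing of a suitable maximal minor, hence a Zariski-open condition on $g$. So there is a unique monomial subspace $J_d \subseteq R_d$ realized on a dense open $V_d \subseteq \GL(\H)$. Since each $\lm(gI)$ is an ideal, $R_1 J_d \subseteq J_{d+1}$, so $J := \bigoplus_d J_d$ is an ideal, necessarily finitely generated in degrees $\le d_0$ for some $d_0$. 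On $V := \bigcap_{d \le d_0} V_d$ one has $\lm(gI) \supseteq J$; and because a Gröbner degeneration is flat, $\dim (\lm(gI))_d = \dim I_d = \dim J_d$ for every $d$, forcing $\lm(gI) = J$ on $V$. (These are classical facts; cf.\ \cite[Theorem 15.18]{eisenbud2013commutative}.)

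\emph{Step 2 (Borel-fixedness of $J$), the main obstacle.} It remains to show that whenever a monomial $m \in J$ has positive $x_j$-exponent, then $(x_i/x_j)\,m \in J$ for every $i < j$; this is the technical heart of the theorem. The mechanism is the interplay between a Borel subgroup of $\GL(\H)$ and the reverse-lex order: for a fixed elementary unipotent $u = \I + s E_{ij}$ and generic $g$, both $g$ and $u g$ lie in $V$, so $\lm(ugI) = J = \lm(gI)$, while a direct calculation with the term order shows that $u$ transports leading monomials by shifting $x_j$-weight onto $x_i$; iterating and feeding this information back into $J$ forces the exchange property. (Equivalently, and more conceptually: $[J]$ is the $T$-fixed point $\lim_{t \to 0} \lambda(t) [gI]$ for a generic one-parameter subgroup $\lambda$ and generic $g$, hence lies in the irreducible projective $B$-stable locus $\overline{B \cdot [gI]}$ of an appropriate Hilbert scheme, on which Borel's fixed point theorem together with a compatibility check identifies $[J]$ as the $B$-fixed point.) The delicate point I expect to be the crux is that one must extract the \emph{single-step} exchange $m \mapsto (x_i/x_j)m$, rather than merely the weaker ``full shift'' $m \mapsto (x_i/x_j)^{a_j(m)}m$ that falls out most readily; over $\complex$ this succeeds because the relevant binomial coefficients do not vanish, yielding the strongly-stable (hence Borel-fixed) condition, whereas in characteristic $p$ only a $p$-adic version holds. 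For these details I would appeal to Galligo's original argument, or to \cite[Theorem 15.20]{eisenbud2013commutative}.
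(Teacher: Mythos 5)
The paper does not actually prove this statement: it is Galligo's classical theorem, quoted with attributions to Galligo's original article and to Theorems~15.18 and~15.20 of Eisenbud, and then used as a black box in the proof of the generic degree bound. Your outline reproduces the standard argument from exactly those sources (Step 1, existence of the generic initial ideal via open minor-nonvanishing conditions, is Eisenbud's Theorem~15.18; Step 2, Borel-fixedness via the Borel-subgroup/one-parameter-degeneration mechanism, is Theorem~15.20), and since you explicitly defer the decisive single-step exchange computation to those same references, your proposal is consistent with --- indeed an expanded sketch of --- the paper's citation-level treatment rather than a divergent or flawed argument.
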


Now we can prove Theorem~\ref{thm:generic}.

\begin{proof}[Proof of Theorem~\ref{thm:generic}]
Since $I_k+I(\U)_k=R_k$ is a Zariski open condition on the Grassmannian, it suffices to prove that it holds for a single $s$-dimensional subspace $\U$. By Theorem~\ref{thm:galligo}, after a linear change of coordinates we may assume that $J:=\lm(I)$ is Borel-fixed. 
Suppose that $x_s^k\not\in J$.
Then all the monomials of degree $k$ in $J$ lie in the ideal $\langle x_1,x_2,\dots,x_{s-1}\rangle$. It follows that
\ba
\dim I_k^\perp&=\dim J_k^\perp\\
&=\dim(R_k/J_k)\\
&\geq \dim (R_k/\langle x_1,\dots, x_{s-1} \rangle_k)\\
&=\dim \complex[x_s,x_{s+1},\dots,x_N]_k\\
&={N-s+k\choose k},
\ea
where the first line follows from $R_k/I_k \cong R_k/J_k$, which is a general property of leading monomial ideals due to Macaulay (see e.g.~\cite[Lemma 4.2.1]{schenck2003computational}), and the other lines are obvious. This contradicts our assumptions, so $x_s^k$ lies in $J$. Let $q_1>q_2>\cdots>q_\ell$ be all monomials of degree $k$ in $x_1,x_2,\dots,x_s$.
Because $J$ is Borel-fixed, $q_1,q_2,\dots,q_\ell\in J$. Let $\U=V(x_{s+1},\dots, x_N)$, and let $K=I(\U)$.
There exist homogeneous
polynomials $f_1,f_2,\dots,f_\ell\in I$ of degree $k$ such that $\lm(f_i)=q_i$ for all $i$.
If we express
$f_1+K_k,f_2+K_k,\dots,f_\ell+K_k$
in the basis
$q_1+K_k,q_2+K_k,\dots,q_\ell+K_k$ 
of
$R_k/K_k\cong \complex[x_1,x_2,\dots,x_s]_k$
we get a lower triangular matrix. This proves
that $f_1+K_k,f_2+K_k,\dots,f_\ell+K_k$ is also a basis, so $I_k+K_k=R_k$. This completes the proof.
\end{proof}

\begin{table}
\begin{center}
\renewcommand{\arraystretch}{1.5}
\begin{tabular}{ |c|c|c|c| } 
  \hline
  $\X=V(I)\subseteq \P(\H)$ & worst case degree & generic degree \\ 
  \hline
   \hline
  $\X$ any variety & $\leq N(d-1)+1$ & $\leq \min k$ s.t. \\ 
&& $\dim I_k^{\perp} < \binom{\eps N+k}{k}$\\
  \hline
  $\X_r \subseteq \P(\complex^{n_1}\otimes \complex^{n_2})$ & $r(\min\{n_1,n_2\}-r)+1$ & $O_r(1)$ \\
  \hline 
  $\X_{\setft{Sep}}\subseteq \P(\complex^{n_1}\otimes \dots \otimes \complex^{n_m})$ & $\sum_{j=1}^{m} (n_j)-\max_j n_j-m+2$ & $O(1)$\\
    \hline
    $\X_{\ell\text{-}\Sep}\subseteq \P(\complex^{n_1}\otimes \dots \otimes \complex^{n_m})$ & $\max_{B\text{ an $\ell$-part. of $[m]$}} \min_i\{\prod_{j \in B_i} n_j\}$& $O(1)$\\
    \hline
    $\X_{t\text{-Prod}}\subseteq \P(\complex^{n_1}\otimes \dots \otimes \complex^{n_m})$ & $\max_{B\text{ part. of $[m]$, $|B_i|\geq t$}} \min_i\{\prod_{j \in B_i} n_j\}$& $O(1)$\\
    \hline
  $\X_{\setft{Sep}}^{\vee}\subseteq \P(S^m(\complex^n))$ & $n-\ceil{\frac{n}{m}}+1$ & $O_m(1)$\\
  \hline
  $\X_{\setft{Sep}}^{\w}\subseteq \P(\Lambda^m(\complex^n))$ & $\leq \binom{n}{m}+1$ & $O_m(1)$\\
  \hline
  $\X_{\setft{MPS}, r}\subseteq \P(\complex^{n_1}\otimes \dots \otimes \complex^{n_m})$ & $\leq n_1\cdots n_m\cdot r+1$ & $O_r(1)$\\
 \hline
 $\X_{\setft{PEPS},r}\subseteq \P((\complex^n)^{\otimes (m_1 m_2)})$ &  & $O_r(1)$\\
 \hline
 $\X_{\setft{G, r}}\subseteq \P(\complex^{n_1}\otimes \dots \otimes \complex^{n_m})$ &  & $O_{r^{\delta(G)}}(1)$\\
  \hline
\end{tabular}
\end{center}
\caption{For different choices of varieties $\X\subseteq \P(\H)$, the degree $k$ of the Nullstellensatz hierarchy needed to certify $\X$-tanglement of a linear subspace $\U\subseteq \H$ in the worst case and for $\U$ generically chosen of dimension $(1-\eps) \dim(\H)$, where $0<\eps<1$ is arbitrary but fixed. In the first row, $\X=V(I)\subseteq \P(\H)$ is an arbitrary variety with $I$ generated in degree at most $d$ and $\dim(\H)=N$. In the last two rows, $\X$ is not an algebraic variety, but is contained in $\X_{r'}$ for some $r'$ and choice of bipartition, so one can certify that $\U$ is $\X$-tangled by running the Nullstellensatz hierarchy for $\X_{r'}$. However, this will not certify \textit{every} $\X$-tangled subspace, so the worst-case degree is undefined.}\label{table:nullstellensatz}
\end{table}

\subsection{Examples}\label{sec:examples}

In this section we apply our worst-case and generic degree bounds for the Nullstellensatz hierarchy to the varieties in Example~\ref{ex:varieties}. Some of these bounds are summarized in Table~\ref{table:nullstellensatz}.

Recall the Schur representations of $\Unitary(\J)$, reviewed in Section~\ref{sec:rep}. For our analysis we will require a standard formula for $\dim(S^{\lambda}(\J))$, which we now review (see e.g.~\cite{landsberg2012tensors} or~\cite{fulton2013representation}). Let $T_{\lambda}$ be the \textit{Young diagram} of $\lambda$: A diagram of left-justified rows, with row $i$ having $\lambda_i$ boxes. For example, if $\lambda=(3,2,2)\vdash 7$, then $T_{\lambda}$ is given by

\begin{center}
\ydiagram{3,2,2}
\end{center}
If $\dim(\J)=n$, then
\ba\label{eq:schur_dim}
\dim(S^{\lambda}(\J))=\prod_{x \in T_{\lambda}} \frac{n+c(x)}{h(x)},
\ea
where $x$ ranges over the boxes in $T_{\lambda}$. Here, $c(x)$ is the \textit{content} of $x$, which is zero if $x$ is on the diagonal, $s$ if $x$ is $s$ steps above the diagonal, and $-s$ if $x$ is $s$ steps below the diagonal. $h(x)$ is the \textit{hook length} of $x$, which is the number of boxes to the right of $x$ plus the number of boxes below $x$ plus one.


\begin{example}[Bipartite product states]\label{ex:schmidtrankone}
Let $\X_1 \subseteq \P(\complex^{n_1}\otimes \complex^{n_2})$ be the product pure states. Then $R/I(\X_1)$ is Cohen-Macaulay and has regularity $\min\{n_1,n_2\}-1$ (see e.g.~\cite[Section 1]{morales2016veronese}). Thus, by Theorem~\ref{thm:worst_case}, for a linear subspace $\U \subseteq \complex^{n_1} \otimes \complex^{n_2}$ the following statements are equivalent:
\begin{enumerate}
\item $\U$ is $\X_1$-tangled.
\item $I(\X_1)_{k} + I(\U)_{k} = R_{k}$ for $k=\min\{n_1,n_2\}$.
\end{enumerate}
and this bound on $k$ is tight. Let $0 < \eps < 1$ be fixed, and let $s=(1-\eps)n_1 n_2$. We now show that a generically chosen subspace of dimension $s$ is certified to avoid $\X_1$ at a constant level $k$. Since $\dim I(\X_1)_k^{\perp} = \binom{n_1+k-1}{k}\binom{n_2+k-1}{k}$, by Theorem~\ref{thm:generic} it suffices to prove that
\ba
\binom{n_1+k-1}{k}\binom{n_2+k-1}{k}<\binom{\eps n_1 n_2 +k}{k}
\ea
for $k=O(1)$. This is clear, since for all $n_1, n_2 \geq 2$ it holds that
\ba\label{eq:biseg}
\binom{n_1+k-1}{k}\binom{n_2+k-1}{k}&< \frac{e^{2k} (n_1+k-1)^k (n_2+k-1)^k}{k^{2k}}\\
&\leq \frac{(\eps n_1 n_2+k)^k}{k^k}\\
&\leq \binom{\eps n_1 n_2 +k}{k},
\ea
where the second line holds for a suitably large constant $k$, and the other lines follow from the inequalities
\ba\label{eq:binom_bounds}
\frac{n^k}{k^k}\leq \binom{n}{k} < \frac{e^k n^k}{k^k}.
\ea
\end{example}

\begin{example}[Schmidt rank]\label{ex:sub_schmidt}
Let $\X_r \subseteq \P(\complex^{n_1}\otimes \complex^{n_2})$ be the set of pure states of Schmidt rank at most $r$. Then $R/I(\X_r)$ is Cohen-Macaulay and has regularity $r(\min\{n_1,n_2\}-r)$ (see~\cite{LASCOUX1978202},~\cite[Chapter 6]{Weyman_2003}, or~\cite[Example 4.3]{rajchgot2021degrees}). Thus, by Theorem~\ref{thm:worst_case}, for a linear subspace $\U \subseteq \complex^{n_1} \otimes \complex^{n_2}$ the following statements are equivalent:
\begin{enumerate}
\item $\U$ is $\X_r$-tangled.
\item $I(\X_r)_{k} + I(\U)_{k} = R_{k}$ for $k=r(\min\{n_1,n_2\}-r)+1$,
\end{enumerate}
and this bound is tight.

Let $0 < \eps < 1$ be arbitrary but fixed, and let $s=(1-\eps)n_1 n_2$. We now show that a generically chosen subspace of dimension $s$ is certified to avoid $\X_r$ at a constant level $k$. This follows from Theorem~\ref{thm:generic} and the fact that for all $n_1,n_2 > r$ we have
\ba
\dim I(\X_r)_k^{\perp}&=\sum_{\substack{\lambda \vdash k\\ \ell(\lambda)\leq r}} \dim(S^{\lambda}(\complex^{n_1})) \dim(S^{\lambda}(\complex^{n_2}))\\
				&\leq \sum_{\substack{\lambda \vdash k\\ \ell(\lambda)\leq r}} \frac{(n_1+k)^k(n_2+k)^k}{(\lambda_1! \cdots \lambda_r!)^2}\\
				&\leq \binom{k+r-1}{r-1} \frac{(n_1+k)^k(n_2+k)^k}{(\floor{\frac{k}{r}}!)^{2r}}\\
				&< \frac{e^r(k+r-1)^r}{(r-1)^{r-1}} \frac{(n_1+k)^k(n_2+k)^k}{(\floor{\frac{k}{r}}!)^{2r}}\\
				&\leq \frac{(\eps n_1 n_2+k)^k}{k^k}\\
				&\leq \binom{\eps n_1 n_2 +k}{k},\label{eq:schmidt}
\ea
for a suitably chosen constant $k=O_r(1)$ which may depend on $r$. The first line follows from~\eqref{eq:schmidt_rep}. The second line follows from
\ba\label{eq:schurdimbound}
\dim(S^{\lambda}(\complex^{n}))&=\prod_{x\in T_{\lambda}} \frac{n+c(x)}{h(x)}\\
						&\leq \frac{(n+k)^k}{\prod_{x\in T_{\lambda}} h(x)}\\
						&\leq \frac{(n+k)^k}{\lambda_1! \cdots \lambda_r!}.
\ea
The third line in~\eqref{eq:schmidt} follows from the fact that there are at most $\binom{k+r-1}{r-1}$ terms in the sum, and each is upper bounded by $\frac{(n_1+k)^k(n_2+k)^k}{(\floor{\frac{k}{r}}!)^{2r}}$. The fourth and sixth line follow from the inequalities~\eqref{eq:binom_bounds}. The fifth line holds for $k=O_r(1)$ large enough.
\end{example}

\begin{example}[Multipartite product states]\label{ex:multipartite_seg}
Let $\X_{\setft{Sep}} \subseteq \P(\complex^{n_1}\otimes \dots \otimes \complex^{n_m})$ be the set of pure product states. Then $R/I(\X_{\setft{Sep}})$ is Cohen-Macaulay and has regularity $\sum_{j=1}^{m} (n_j)-\max_j n_j-m+1$ (see e.g.~\cite[Section 1]{morales2016veronese}). By Theorem~\ref{thm:worst_case}, for a linear subspace $\U \subseteq \complex^{n_1} \otimes \dots \otimes \complex^{n_m},$ the following statements are equivalent:
\begin{enumerate}
\item $\U$ is $\X_{\setft{Sep}}$-tangled.
\item $I(\X_{\setft{Sep}})_{k} + I(\U)_{k} = R_{k}$ for $\sum_{j=1}^{m} (n_j)-\max_j n_j-m+2$,
\end{enumerate}
and this bound is tight.

Let $0 < \eps < 1$ be arbitrary but fixed, and let $s=(1-\eps)n_1 n_2 \cdots n_m$. We now show that a generically chosen subspace of dimension $s$ is certified to avoid $\X_{\setft{Sep}}$ at a constant level $k$. Let $\X_1\subseteq \P(\complex^{n_1}\otimes \dots \otimes \complex^{n_m})$ be the set of pure states of the form $\varppsi \otimes \varpphi$, where $\psi \in \complex^{n_1}$ and $\phi \in \complex^{n_2}\otimes \dots \otimes \complex^{n_m}$. It follows that
\ba
\dim(I(\X_{\setft{Sep}})_k^{\perp})&\leq \dim(I(\X_1)_k^{\perp})\\
&< \binom{\eps n_1 \cdots n_m+k}{k},
\ea
where the first line follows from $\X_{\setft{Sep}}\subseteq \X_{1}$ and the second line follows from~\eqref{eq:biseg}. This completes the proof by Theorem~\ref{thm:generic}.
\end{example}

\begin{example}[Biseparable states] Let $\X_{\setft{Bisep}}\subseteq \P(\complex^{n_1}\otimes \dots \otimes \complex^{n_m})$ be the set of biseparable states, those that have Schmidt rank one with respect to some bipartite cut. Given a subspace $\U \subseteq \complex^{n_1}\otimes \dots \otimes \complex^{n_m}$, rather than running the Nullstellensatz hierarchy directly for $\X_{\setft{Bisep}}$ it seems more economical to run it on the set of Schmidt rank one states $\X_1^{(S,T)}$
with respect to each non-trivial bipartition $S \sqcup T = [m]$, and conclude that $\U \cap \X_{\setft{Bisep}} = \{0\}$ if each of these hierarchies certifies that $\U \cap \X_1^{(S,T)} = \{0\}$. By Example~\ref{ex:schmidtrankone}, the following statements are equivalent:
\begin{enumerate}
\item $\U$ is $\X_{\setft{Bisep}}$-tangled.
\item For every non-trivial bipartition $S \sqcup T=[m]$, it holds that $I(\X_1^{(S,T)})_k+I(\U)_k=R_k$ for $k=\min\{\prod_{j \in S} n_j, \prod_{j \in [m]\setminus S} n_j\}$.
\end{enumerate}
By Example~\ref{ex:schmidtrankone} again, for any constant $0 < \eps < 1$, a generically chosen subspace of dimension $(1-\eps)n_1 \cdots n_m$ is certified to avoid $\X_{\setft{Bisep}}$ by running each hierarchy up to a constant level $k=O(1)$.

More generally, this analysis can be easily extended to $\ell$-separable states and $t$-producible states, with similar results that we record in Table~\ref{table:nullstellensatz}.
\end{example}

\begin{example}[Bosonic product states]
Let $\X_{\setft{Sep}}^{\vee} \subseteq \P(S^m(\complex^{n}))$ be the set of pure symmetric product states. Then $R/I(\X_{\setft{Sep}}^{\vee})$ is Cohen-Macaulay and has regularity $n-\lceil{\frac{n}{m}}\rceil$ (see e.g.~\cite[Section 1]{morales2016veronese}). By Theorem~\ref{thm:worst_case}, for a linear subspace $\U \subseteq S^m(\complex^n),$ the following statements are equivalent:
\begin{enumerate}
\item $\U$ is $\X_{\setft{Sep}}^{\vee}$-tangled.
\item $I_{k} + I(\U)_{k} = R_{k}$ for $k=n-\ceil{\frac{n}{m}}+1$,
\end{enumerate}
and this bound is tight.

Let $0 < \eps < 1$ be arbitrary but fixed, and let $s=(1-\eps) \binom{n+m-1}{m}$. We now show that a generically chosen subspace of dimension $s$ is certified to avoid $\X_{\setft{Sep}}^{\vee}$ at a constant level $k=O_m(1)$ (which may depend on $m$). This follows from Theorem~\ref{thm:generic} and
\ba
\dim(I(\X_{\setft{Sep}}^{\vee})_k^{\perp}) &\leq \dim(I(\X_{\setft{Sep}})_k^{\perp})\\
								&\leq \binom{\eps n^m/m^m + k}{k}\\
								&\leq \binom{\eps \binom{n+m-1}{m}+k}{k},
\ea
where the first line follows from $\X_{\setft{Sep}}^{\vee} \subseteq \X_{\setft{Sep}}$, the second line follows from Example~\ref{ex:multipartite_seg} with $\eps \rightarrow \eps/m^m$, and the third line follows from~\eqref{eq:binom_bounds}.
\end{example}

\begin{example}[Fermionic product states]
Let $\X_{\setft{Sep}}^{\w} \subseteq \P(\Lambda^m(\complex^{n}))$ be the set of unentangled Fermionic pure states. Then $R/I(\X_{\setft{Sep}}^{\w})$ is Cohen-Macaulay~\cite{HOCHSTER197340}. $I(\X_{\Sep}^{\w})$ is generated by the \textit{Plücker relations}, a set of degree-2 homogeneous polynomials described explicitly in~\cite[Eq. 3.4.10]{jacobson2009finite}. By Theorem~\ref{thm:worst_case_bound}, for a linear subspace $\U \subseteq \Lambda^m(\complex^n),$ the following statements are equivalent:
\begin{enumerate}
\item $\U \cap \X_{\setft{Sep}}^{\w} =\{0\}$.
\item $I(\X_{\setft{Sep}}^{\w})_{k} + I(\U)_{k} = R_{k}$ for $k=\binom{n}{m}+1$.
\end{enumerate}
Let $0 < \eps < 1$ be arbitrary but fixed, and let $s=(1-\eps) \binom{n}{m}$. We now show that a generically chosen subspace of dimension $s$ is certified to avoid $\X_{\setft{Sep}}^{\w}$ at a constant level $k=O_m(1)$ (which may depend on $m$).  This follows from Theorem~\ref{thm:generic} and the fact that for any fixed $m$ and any $n > m$ it holds that
\ba
\dim(I(\X_{\setft{Sep}}^{\w})_k^{\perp})&=\dim(S^{k^{(m)}}(\complex^n))\\
&\leq \frac{(n+mk)^{mk}}{(k!)^m}\\
&\leq \frac{\eps^k n^{mk}}{m^{mk}k^k}\\
&\leq \binom{\eps \binom{n}{m}+k}{k},
\ea
where the second line follows similarly as in~\eqref{eq:schurdimbound}, the second line holds for $k=O_m(1)$ large enough, and the third line follows from the inequalities~\eqref{eq:binom_bounds}.
\end{example}

\begin{example}[Matrix product states]
Let $\X_{\setft{MPS},r} \subseteq \P(\complex^{n_1}\otimes \dots \otimes \complex^{n_m})$ be the variety of matrix product states (MPS) of bond dimension $r$, which is cut out set-theoretically by a collection of homogeneous polynomials of degree $r+1$ (see Example~\ref{ex:mps}). Let $I$ be the ideal generated by these minors. By Theorem~\ref{thm:worst_case_bound}, for a linear subspace $\U \subseteq \complex^{n_1}\otimes \dots \otimes \complex^{n_m}$ the following statements are equivalent:
\begin{enumerate}
\item $\U$ is $\X_{\setft{MPS},r}$-tangled.
\item $I_{k} + I(\U)_{k} = R_{k}$ for $k=r n_1 \cdots n_m+1$.
\end{enumerate}
Let $0 < \eps < 1$ be arbitrary but fixed, and let $s=(1-\eps)n_1 n_2 \cdots n_m$. We now show that a generically chosen subspace of dimension $s$ is certified to avoid $\X_{\setft{MPS},r}$ at a constant level $k=O_r(1)$. Let $\X_r \subseteq \P(\complex^{n_1}\otimes \dots \otimes \complex^{n_m})$ be the set of pure states of Schmidt rank at most $r$ with respect to the bipartition $(\complex^{n_1})\otimes (\otimes_{j \neq 1} \complex^{n_j})$. It follows that
\ba
\dim(I_k^{\perp})&\leq \dim(I(\X_r)_k^{\perp})\\
&< \binom{\eps n_1 \cdots n_m+k}{k}
\ea
for $k=O_r(1)$, where the first line follows from $I_k \supseteq I(\X_{r})_k$, and the second line follows from~\eqref{eq:schmidt}. This completes the proof by Theorem~\ref{thm:generic}. Analogous results hold more generally for tree tensor network states (TTNS). Yet more generally, the following examples demonstrate that we can even say something for tensor network states with an arbitrary underlying graph.
\end{example}

\begin{example}[PEPS]
Let $\X_{\setft{PEPS}, r}^{\circ} \subseteq \P((\complex^n)^{\otimes m_1m_2})$ be the set of~\textit{Projected Entangled Pair States} (PEPS) of bond dimension $r$ on an $m_1 \times m_2$ rectangular lattice. This is the set of pure states that can be prepared by applying linear maps to the vertices of an $m_1 \times m_2$ lattice of pure states of Schmidt-rank $r$ (see~\cite{cirac2021matrix} for more details). Let $\X_{\setft{PEPS}, r}$ be the closure of $\X_{\setft{PEPS}, r}^{\circ}$ (Euclidean or Zariski, they are equivalent).

In contrast to most of the other examples, we do not know a set of equations cutting out $\X_{\setft{PEPS}, r}$ (see~\cite{de2022linear,bernardi2023dimension} for partial progress). However, we can still certify $\X_{\setft{PEPS}, r}$ of a subspace $\U$ by certifying $\Y$-tanglement of $\U$ for some variety $\Y\supseteq \X_{\setft{PEPS}, r}$ for which we do know the equations. Let $\Y_{r^2} \subseteq \P((\complex^n)^{\otimes m})$ be the set of pure states of Schmidt rank at most $r^2$ with respect to the bipartite cut between a corner vertex and the rest of the graph. Then $\Y_{r^2} \supseteq \X_{\setft{PEPS}, r}$ and we can certify $\Y_{r^2}$-tanglement of a generically chosen subspace of dimension $(1-\eps)n^{m_1 m_2}$ at level $k=O_{r}(1)$ of the Nullstellensatz hierarchy, by Example~\ref{ex:sub_schmidt}.

%

\end{example}

\begin{example}[Tensor network states]
For a graph $G$ on $m$ vertices and a positive integer $r$, let $\X_{G,r} \subseteq \P((\complex^{n})^{\otimes m})$ be the closure (Zariski or Euclidean; they are equivalent) of the set of tensor network states (TNS) of bond dimension $r$ with underlying graph $G$. This is a natural generalization of PEPS to an arbitrary underlying graph, with Schmidt rank $r$ states placed on the edges. See e.g.~\cite{bernardi2023dimension} for details.


As for PEPS, we do not know a set of equations cutting out $\X_{G,r}$. Let $\delta(G)$ be the minimum vertex degree of $G$, and let $\Y_{ r^{\delta(G)}} \subseteq \P((\complex^n)^{\otimes m})$ of pure states of Schmidt rank at most $r^{\delta(G)}$ with respect to the bipartite cut between a minimum degree vertex and the other vertices. Then $\Y_{ r^{\delta(G)}} \supseteq \X_{G,r}$, and 
we can certify $\Y_{ r^{\delta(G)}}$-tanglement of a generically chosen subspace of dimension ${(1-\eps)n^{|G|}}$ at level $k=O_{r}(1)$ of the Nullstellensatz hierarchy, by Example~\ref{ex:sub_schmidt}.

\end{example}

\subsection{An equivalence of hierarchies}\label{sec:equiv}

An alternative hierarchy for determining if $\U$ is $\X$-tangled can be obtained from Corollary~\ref{cor:subspace}, which in particular says that $\U$ is $\X$-tangled if and only $\nu_k <1$ for $k \gg 0$. Satisfyingly, this is equivalent to the Nullstellensatz hierarchy:

%


\begin{prop}
It holds that $\nu_k<1 \iff I_k + I(\U)_k= S^k(\H^*)$.
\end{prop}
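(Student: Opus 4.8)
The plan is to prove that \emph{both} conditions are equivalent to the single symmetric condition
\[
I_k^{\perp} \cap S^k(\U) = \{0\},
\]
where $S^k(\U) \subseteq S^k(\H)$ denotes the symmetric power of $\U$. The only substantive inputs are two elementary facts about the symmetric subspace: \textbf{(i)} $S^k(\H) \cap \bigl(\U \otimes \H^{\otimes k-1}\bigr) = S^k(\U)$, and \textbf{(ii)} $I(\U)_k^{\perp} = S^k(\U)$, the orthogonal complement being taken inside $S^k(\H)$ with respect to the bilinear pairing.

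\emph{The operator side.} Since $\Pi_{\U} \otimes \I_{\H}^{\otimes k-1}$ is the orthogonal projection onto $\U \otimes \H^{\otimes k-1}$ and $\Pi_{I,k}$ is the orthogonal projection onto $I_k^{\perp}$, the operator $\Pi_{I,k}(\Pi_{\U} \otimes \I_{\H}^{\otimes k-1})\Pi_{I,k}$ is positive semidefinite with $\nu_k \leq 1$; hence $\nu_k < 1 \iff \nu_k \neq 1$. For a unit vector $w \in \H^{\otimes k}$ one has
\[
\ip{w}{\Pi_{I,k}(\Pi_{\U}\otimes\I_{\H}^{\otimes k-1})\Pi_{I,k}\, w} = \norm{(\Pi_{\U}\otimes\I_{\H}^{\otimes k-1})\Pi_{I,k}w}^2 \leq \norm{\Pi_{I,k}w}^2 \leq 1,
\]
with equality throughout iff $\Pi_{I,k}w = w$ and $(\Pi_{\U}\otimes\I_{\H}^{\otimes k-1})w = w$, i.e. iff $0 \neq w \in I_k^{\perp}\cap(\U\otimes\H^{\otimes k-1})$. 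Thus $\nu_k = 1 \iff I_k^{\perp}\cap(\U\otimes\H^{\otimes k-1}) \neq \{0\}$, and since $I_k^{\perp}\subseteq S^k(\H)$, fact (i) rewrites this as $I_k^{\perp}\cap S^k(\U)\neq\{0\}$. So $\nu_k < 1 \iff I_k^{\perp}\cap S^k(\U) = \{0\}$.

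\emph{The algebraic side.} Because the pairing $S^k(\H^*)\times S^k(\H)\to\complex$ is nondegenerate, $I_k + I(\U)_k = S^k(\H^*)$ iff $(I_k+I(\U)_k)^{\perp} = I_k^{\perp}\cap I(\U)_k^{\perp} = \{0\}$, so by fact (ii) this is again $I_k^{\perp}\cap S^k(\U) = \{0\}$. To prove (ii): $I(\U)$ is generated by $\U^{\perp}\subseteq\H^*$, hence $I(\U)_k = \U^{\perp}\cdot S^{k-1}(\H^*)$; for $\ell\in\H^*$, $g\in S^{k-1}(\H^*)$, $v\in S^k(\H)$, unwinding the definition of the symmetric product and the pairing gives $(\ell\cdot g, v) = (g, \iota_{\ell}v)$, where $\iota_{\ell}v\in S^{k-1}(\H)$ is the contraction of $\ell$ into one tensor factor of $v$ (well-defined since $v$ is symmetric). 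Thus $v\perp I(\U)_k$ iff $\iota_{\ell}v = 0$ for all $\ell\in\U^{\perp}$, i.e. iff $v\in\bigcap_{\ell\in\U^{\perp}}(\ker\ell\otimes\H^{\otimes k-1}) = \U\otimes\H^{\otimes k-1}$; combined with $v\in S^k(\H)$ and fact (i), this says exactly $v\in S^k(\U)$, while the reverse inclusion $S^k(\U)\subseteq I(\U)_k^\perp$ is immediate. Fact (i) itself follows by picking a complement $\H = \U\oplus\U'$, decomposing $\H^{\otimes k}$ into the $2^k$ blocks $\bigotimes_i(\U\text{ or }\U')$, observing $\bigcap_{i=1}^k\H^{\otimes i-1}\otimes\U\otimes\H^{\otimes k-i} = \U^{\otimes k}$, and intersecting with $S^k(\H)$ (equivalently: a symmetric tensor lying in $\U\otimes\H^{\otimes k-1}$ lies, by permuting factors, in every factor copy of that space).

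Concatenating the two sides gives $\nu_k < 1 \iff I_k^{\perp}\cap S^k(\U) = \{0\} \iff I_k + I(\U)_k = S^k(\H^*)$, which is the claim. Everything is bookkeeping once facts (i) and (ii) are in place; the only step requiring genuine care is the verification of (ii) — specifically the contraction identity $(\ell\cdot g, v) = (g, \iota_{\ell}v)$ and the factor-independence of $\iota_\ell$ on symmetric tensors — but this is immediate from the definitions in Section~\ref{sec:background}, so I do not anticipate a real obstacle.
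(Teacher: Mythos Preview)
Your proof is correct and follows essentially the same route as the paper: both arguments reduce each side to the condition $I_k^{\perp}\cap S^k(\U)=\{0\}$ via the two facts $(\U\otimes\H^{\otimes k-1})\cap S^k(\H)=S^k(\U)$ and $I(\U)_k^{\perp}=S^k(\U)$, using $I_k^{\perp}\subseteq S^k(\H)$ to pass from the tensor-product intersection to the symmetric one. The only difference is that you supply detailed justifications for these two facts (the contraction identity, the factor-by-factor decomposition), whereas the paper simply asserts them in its chain of equivalences.
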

\begin{proof}
We have
\ba
\nu_k<1\iff &\im(\Pi_{\U} \otimes \I_{\H}^{\otimes k-1}) \cap \im(\Pi_{I,k}) = \{0\}\\
\iff & (\U \otimes \H^{\otimes k-1}) \cap I_k^{\perp} = \{0\}\\
\iff & (\U \otimes \H^{\otimes k-1}) \cap S^k(\H) \cap I_k^{\perp} = \{0\}\\
\iff & S^k(\U) \cap I_k^{\perp}=\{0\}\\
\iff & I(\U)_k^\perp \cap I_k^{\perp} =\{0\}\\
\iff & I(\U)_k+I_k=S^k(\H^*),
\ea
where the first two lines are obvious, the third line follows from $I_k^{\perp} \subseteq S^k(\H)$, the fourth line follows from $(\U \otimes (\H)^{\otimes k-1}) \cap S^k(\H)= S^k(\U)$, the fifth line follows from $S^k(\U)=I(\U)_k^{\perp}$, and the last line follows from taking the orthogonal complement with respect to the bilinear pairing.
\end{proof}

We can also relate the Nullstellensatz hierarchy to the hierarchy of linear systems introduced in~\cite{JLV22a}. There, in the examples they considered, the authors defined a map $\Phi\in \End(\H^{\otimes k})$ for which $\ker(\Phi) \cap S^k(\H)= I_k^{\perp}$ (see also Example~\ref{ex:schmidt}). They then took a basis $u_1,\dots, u_{\ell}$ for $S^k(\U)$, and checked if $\{\Phi(u_1),\dots, \Phi(u_{\ell})\}$ is linearly independent. This holds if and only if $I_k^{\perp} \cap \U^{\otimes k}=\{0\}$, which in turn holds if and only if $I_k + I(\U)_k=R_k$. Note that this amounts to solving a linear system of size $\dim(I_k) \times \binom{\dim(\U)+k-1}{k}$.

\section{Hermitian polynomial optimization}\label{sec:poly}

In this section we use Theorem~\ref{thm:robust_hier} to obtain a hierarchy of eigencomputations for optimizing a Hermitian form over a variety. We then prove that this hierarchy is equivalent to the Hermitian sum-of squares (HSOS) hierarchy of~\cite[Theorem 2.1]{d2009polynomial} (see also~\cite[Section 2.1]{wang2023real}) applied to this setting. While HSOS is in general a hierarchy of semidefinite programs, our results show that it is equivalent to a hierarchy of eigencomputations in our setting. This can be advantageous, as eigencomputations can be computed much faster in practice than full semidefinite programs.

\subsection{Background on Hermitian polynomials}
We begin with some background on Hermitian polynomials. See e.g.~\cite{d2019hermitian} for more details. For a homomorphism $R \in \Hom(S^c(\H),S^d(\H))$, let $R(z,w^*):=\ip{w^{\otimes d}}{R z^{\otimes c}}$. More generally, for $R \in \Hom(S^{\uc}(\H),S^{\ud}(\H))$ with degree $(a,b)$ block $R_{(a,b)} \in \Hom(S^a(\H),S^b(\H))$, let $R(z,w^*)=\sum_{a,b} R_{(a,b)}(z,w^*)$.
The following proposition is standard, and shows that $R \in \Hom(S^{\ud}(\H),S^{\uc}(\H))$ is uniquely determined by $R(z,z^*)$. See also \cite{d2011hermitian,d2019hermitian}.

\begin{prop}\label{prop:polarization}
Let $R \in \Hom(S^{\uc}(\H),S^{\ud}(\H))$. If $R(z,z^*)=0$ for all $z \in \H$, then $R=0$.
\end{prop}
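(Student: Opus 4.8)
The strategy is a two–variable polarization that reduces everything to the elementary principle that a complex polynomial in the coordinates of a vector and their conjugates which vanishes identically must be the zero polynomial. First I would decompose $R=\sum_{a,b}R_{(a,b)}$ into its bidegree blocks $R_{(a,b)}\in\Hom(S^a(\H),S^b(\H))$, with $0\le a\le c$ and $0\le b\le d$. Since $S^a(\H)=\spn\{z^{\otimes a}:z\in\H\}$ and likewise for $S^b(\H)$, it suffices to show that $R_{(a,b)}(z,w^*)=\langle w^{\otimes b},R_{(a,b)}z^{\otimes a}\rangle=0$ for all $z,w\in\H$ and all $a,b$.

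To do this, I would fix $z,w\in\H$ and introduce the function $P\colon\complex^2\to\complex$, $P(s,t):=R\bigl(sz+tw,(sz+tw)^*\bigr)$, which is identically zero by hypothesis because $sz+tw\in\H$. Expanding $(sz+tw)^{\otimes a}=\sum_{k=0}^{a}s^{k}t^{a-k}T_{a,k}$ for suitable $T_{a,k}\in S^a(\H)$ with $T_{a,a}=z^{\otimes a}$ and $T_{a,0}=w^{\otimes a}$, and using that the inner product is antilinear in its first argument, the expression becomes
\ba
P(s,t)=\sum_{a,b}\ \sum_{k=0}^{a}\ \sum_{l=0}^{b}\ s^{k}\bar{s}^{\,l}\,t^{a-k}\bar{t}^{\,b-l}\,\bigl\langle T_{b,l},\,R_{(a,b)}T_{a,k}\bigr\rangle .
\ea
Writing $s=x_1+iy_1$ and $t=x_2+iy_2$ turns the right-hand side into a genuine polynomial in $(x_1,y_1,x_2,y_2)$ that vanishes on all of $\real^4$, hence is identically zero; since that change of variables is invertible over $\complex$, the right-hand side is the zero polynomial in the symbols $s,\bar s,t,\bar t$, so every coefficient vanishes.

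To finish, I would fix $(a_0,b_0)$ and read off the coefficient of the monomial $s^{a_0}t^{0}\bar s^{\,0}\bar t^{\,b_0}$: the only contributing term in the triple sum is the one with $a=a_0$, $k=a_0$, $b=b_0$, $l=0$, so this coefficient equals $\langle T_{b_0,0},R_{(a_0,b_0)}T_{a_0,a_0}\rangle=\langle w^{\otimes b_0},R_{(a_0,b_0)}z^{\otimes a_0}\rangle$. Thus $R_{(a_0,b_0)}(z,w^*)=0$ for all $z,w$, and since the vectors $z^{\otimes a_0}$ and $w^{\otimes b_0}$ span $S^{a_0}(\H)$ and $S^{b_0}(\H)$ respectively, this forces $R_{(a_0,b_0)}=0$; as $(a_0,b_0)$ was arbitrary, $R=0$.

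The only step that is not pure bookkeeping is the passage from ``$P$ vanishes on $\complex^2$'' to ``$P$ is the zero polynomial in $s,\bar s,t,\bar t$'' — i.e.\ the standard fact (often phrased via the identification of $\complex$-polynomials in coordinates and their conjugates with real polynomials) that such a function vanishing identically has all coefficients zero. Everything else, namely the spanning property of symmetric powers and the identification of the relevant monomial coefficient, is routine; one could alternatively avoid the two auxiliary variables by working directly with a monomial basis of $S^a(\H)$, but the polarization above keeps the argument coordinate-free.
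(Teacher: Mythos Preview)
Your argument is correct. It differs from the paper's proof in its mechanism for separating the bidegrees: the paper writes $R$ in a monomial basis with coefficients $C_{\bfalpha,\bfbeta}$, substitutes $z=(e^{i\theta_1}x_1,\dots,e^{i\theta_N}x_N)$ with $x_j$ real, and then integrates over the torus $(\theta_1,\dots,\theta_N)\in[0,2\pi)^N$ to isolate the terms with a fixed $\bfbeta-\bfalpha$, after which a degree count in $x$ pins down each $C_{\bfalpha,\bfbeta}$. Your approach instead polarizes with two auxiliary vectors $z,w$ and scalars $s,t$, and reads off the coefficient of a single extremal monomial $s^{a_0}\bar t^{\,b_0}$. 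Both rest on the same underlying principle (a polynomial in variables and their conjugates that vanishes identically is zero), but your route avoids choosing a basis for $\H$ and replaces the Fourier integration by a direct monomial extraction; the paper's route, on the other hand, yields all matrix entries $C_{\bfalpha,\bfbeta}=0$ at once rather than block by block.
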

\begin{proof}
Let $(C_{\bfalpha,\bfbeta})_{|\bfalpha|\leq c, |\bfbeta|\leq d}$ be the matrix of $R$ in the monomial basis. Then $R(z,w^*)=\sum_{|\bfalpha|\leq c, |\bfbeta|\leq d} C_{\bfalpha, \bfbeta} z^{\bfalpha} w^{*\bfbeta}$. For $z=(e^{i \theta_1} x_1,\dots, e^{i\theta_N} x_N) \in \H$ with $x_i$ real we have
\ba
R(z,z^*)=\sum_{\bfalpha,\bfbeta} C_{\bfalpha,\bfbeta} e^{i (\alpha_1-\beta_1)\theta_1} \cdots e^{i (\alpha_N-\beta_N)\theta_N} x_1^{\alpha_1+\beta_1}\cdots x_N^{\alpha_N+\beta_N}=0.
\ea
So for any $\bfgamma \in \integer^N$ it holds that
\ba
0&=\int_{\theta_N=0}^{2\pi}\cdots \int_{\theta_1=0}^{2\pi} R(z,z^*) e^{i \gamma_1 \theta_1} \cdots e^{i \gamma_N \theta_N} d\theta_1 \dots d\theta_N\\
&=\sum_{\bfbeta-\bfalpha=\bfgamma} C_{\bfalpha,\bfbeta} x^{\bfalpha+\bfbeta}.
\ea
Since $C_{\bfalpha,\bfbeta} x^{\bfalpha+\bfbeta}$ is the only term of degree $\bfalpha+\bfbeta$ in the sum, it follows that $C_{\bfalpha,\bfbeta}=0$, so $R=0$.
%
%
\end{proof}

A \textit{Hermitian polynomial} is an element $H \in \End(S^{\ud}(\H))$ for which $H(z,w^*)=\overline{H(w,z^*)}$ for all $z,w \in \H$. Let $\Herm(S^{\ud}(\H))\subseteq \End(S^{\ud}(\H))$ be the set of Hermitian polynomials, and let $\Herm(S^{d}(\H))\subseteq \End(S^{d}(\H))$ be the set of Hermitian forms of bi-degree $(d,d)$. An immediate consequence of Proposition~\ref{prop:polarization} is that $H(z,z^*)=\sum_{\bfalpha,\bfbeta} C_{\bfalpha,\bfbeta} z^{\bfalpha} z^{*\bfbeta}$ is Hermitian if and only if it is Hermitian as an endomorphism, i.e. the matrix $(C_{\bfalpha,\bfbeta})_{\bfalpha,\bfbeta}$ is Hermitian (see also~\cite[Proposition 1.1]{d2011hermitian}). Let $\Pos(S^{\ud}(\H))$ be the set of \textit{Hermitian sums-of-squares}: Hermitian polynomials of the form $P(z,z^*)=\sum_{i} |q_i(z)|^2$ for some $q_i \in S^{\ud}(\H^*)$. Similarly, these are the polynomials that are positive semidefinite as endomorphisms, i.e. the matrix $(C_{\bfalpha,\bfbeta})_{\bfalpha,\bfbeta}$ is positive semidefinite.

Note that the Hermitian form $H_k=H(z,z^*) \norm{z}^{2(k-d)}$ as an operator is given by $H_k=\Pi_k(H \otimes \I_{\H}^{\otimes{(k-d)}})\Pi_k$, where $\Pi_k$ is the orthogonal projection onto $S^k(\H)$, since $\ip{z^{\otimes k}}{H_k  z^{\otimes k}}=H(z,z^*) \norm{z}^{2(k-d)}$ under this choice. As before, for a homogeneous ideal $I \subseteq S^{\bullet}(\H^*)$, let $I_k^{\perp} \subseteq S^k(\H)$ be the orthogonal complement to $I_k$ with respect to the bilinear pairing $\H \times \H^* \rightarrow \complex$ and let $\Pi_{I,k} \in \Pos(S^k(\H))$ be the orthogonal projection onto $I_k^{\perp}$. Recall that we define $V(I)=\{vv^* : f(v)=0 \;\; \forall f \in I\}\subseteq \P(\H)$.


\subsection{A hierarchy of eigencomputations}
Theorem~\ref{thm:robust_hier} implies the following hierarchy for optimizing a Hermitian form subject to homogeneous equality constraints. In the following, let $\lambda_{\min}(\cdot)$ be the minimum eigenvalue.
\begin{theorem}\label{thm:hermitian}
Let $H \in \Herm(S^d(\H))$ be a Hermitian form and let $I \subseteq S^{\bullet}(\H^*)$ be a homogeneous ideal. For each $k\geq d$ let $H_k=H(z,z^*) \norm{z}^{2(k-d)} \in \Herm(S^k(\H))$ and $\nu_k=\lambda_{\min}(\Pi_{I,k} H_k \Pi_{I,k})$. Then $\nu_d,\nu_{d+1},\dots $ forms a non-decreasing sequence for which
\ba
\lim_{k\rightarrow \infty} \nu_k = \min_{\substack{zz^* \in V(I)}} H(z,z^*).
\ea
\end{theorem}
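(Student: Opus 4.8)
The plan is to mimic the proof of Theorem~\ref{thm:robust_hier}, replacing ``maximum'' by ``minimum'' and accommodating the degree-$d$ form $H$. Write $\mu := \min_{zz^* \in V(I)} H(z,z^*)$, let $d'$ be the maximal degree of a homogeneous generator in some finite generating set for $I$ (finite by Hilbert's basis theorem), and set $D := \max\{d,d'\}$.

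First I would reformulate $\nu_k$ as a semidefinite program. Using the identity $H_k = \Pi_k(H \otimes \I_{\H}^{\otimes(k-d)})\Pi_k$ recorded just above the theorem, the elementary fact that the minimum eigenvalue of a Hermitian operator $M$ restricted to a subspace $W$ equals $\min\{\tr(M\sigma) : \sigma \in \Density,\ \im(\sigma)\subseteq W\}$, and the observation that any $\sigma_k$ with $\im(\sigma_k)\subseteq I_k^\perp\subseteq S^k(\H)$ satisfies $\tr(H_k\sigma_k) = \tr((H\otimes\I_{\H}^{\otimes(k-d)})\sigma_k) = \tr(H\,\tr_{k-d}(\sigma_k))$, I would conclude that $\nu_k$ equals the optimal value of: minimize $\tr(H\,\tr_{k-d}(\sigma_k))$ over $\sigma_k\in\Density(\H^{\otimes k})$ with $\im(\sigma_k)\subseteq I_k^\perp$. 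From this, two things follow quickly: (a) for any unit $z$ with $zz^*\in V(I)$ the state $(z^{\otimes k})(z^{\otimes k})^*$ is feasible at level $k$ (since $f(z)=0$ for all $f\in I$ forces $z^{\otimes k}\in I_k^\perp$), with objective $H(z,z^*)$, so $\nu_k\le\mu$ for all $k\ge d$; and (b) if $\sigma_{k+1}$ is feasible at level $k+1$ then $\tr_1(\sigma_{k+1})$ is feasible at level $k$ with the same objective, because $\tr_{k-d}(\tr_1(\sigma_{k+1})) = \tr_{k+1-d}(\sigma_{k+1})$ and $\im(\tr_1(\sigma_{k+1}))\subseteq I_k^\perp$ --- the latter since $I$ is an ideal, e.g.\ for $v\in I_{k+1}^\perp\subseteq S^{k+1}(\H)$, $g\in I_k$ and any index $i$ one has $(g,(\langle e_i|\otimes\I)v) = (x_i\cdot g,v) = 0$ as $x_i\cdot g\in I_{k+1}$. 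Thus $\nu_d\le\nu_{d+1}\le\cdots$, and the sequence being bounded (below by $\lambda_{\min}(H)$, above by $\mu$), $\nu_\infty := \lim_k\nu_k$ exists and is $\le\mu$.

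The remaining work is to show $\nu_\infty\ge\mu$, which I would do with a de Finetti / quantum-exchangeability argument just as in the proofs of Theorems~\ref{thm:xtension} and~\ref{thm:robust_hier}. For each $k\ge D$ pick an optimizer $\sigma_k$ and set $\tau_k := \tr_{k-D}(\sigma_k)\in\Density(S^D(\H))$; iterating the ideal argument from (b) gives $\im(\tau_k)\subseteq I_D^\perp$, and $\nu_k = \tr((H\otimes\I_{\H}^{\otimes(D-d)})\tau_k)$. Pass to a convergent subsequence $\tau_{k_j}\to\tau$. Taking further subsequential limits of $\tr_{k_j-\ell}(\sigma_{k_j})$ shows that $\tau$ has a symmetric extension in $\Density(S^\ell(\H))$ for every $\ell\ge D$; under the inclusion $S^{Dm}(\H)\subseteq(S^D(\H))^{\otimes m}$ this exhibits $\tau$ as $(m,\X_{\Sep}^\vee)$-tendable for all $m$, where $\X_{\Sep}^\vee$ denotes bosonic separability over $D$ copies of $\H$, so by Theorem~\ref{thm:xtension} one gets $\tau = \sum_i p_i(z_iz_i^*)^{\otimes D}$ with $p_i>0$ and $\|z_i\| = 1$. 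Because $\im(\tau)\subseteq I_D^\perp$ (a closed condition, preserved under limits), each $z_i^{\otimes D}\in I_D^\perp$, i.e.\ $f(z_i)=0$ for all $f\in I_D$; since $D\ge d'$, multiplying any generator $g$ of $I$ by a monomial of complementary degree that is nonzero at $z_i$ gives $g(z_i)=0$, hence $z_iz_i^*\in V(I)$. Then $\nu_\infty = \lim_j\nu_{k_j} = \tr((H\otimes\I_{\H}^{\otimes(D-d)})\tau) = \sum_i p_i H(z_i,z_i^*)\ge\mu$, which combined with the previous paragraph gives $\nu_\infty=\mu$.

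I expect the main obstacle to be this last step: one has to apply the de Finetti theorem to $\tau$ viewed as a bosonic state over the \emph{right} number of copies of $\H$ --- namely $D$ copies, rather than $d$ copies or $\ell/D$ copies of $S^D(\H)$ --- and the passage from $d$ to $D=\max\{d,d'\}$ is exactly what upgrades ``$z_i^{\otimes D}\in I_D^\perp$'' to ``$z_iz_i^*\in V(I)$.'' The SDP reformulation, the monotonicity, and the ideal identity are all routine.
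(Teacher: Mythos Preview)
Your proof is correct. It differs from the paper's in how the limit $\lim_k \nu_k = \mu$ is established. The paper proceeds by \emph{reduction} to Theorem~\ref{thm:robust_hier}: it views $H$ as a linear Hermitian operator on the new Hilbert space $S^d(\H)$, defines an auxiliary ideal $J\subseteq S^{\bullet}(S^d(\H^*))$ by $J_\ell = I_{\ell d} + S^{\ell d}(\H)^{\perp}$ so that $V(J)=\{z^{\otimes d}: zz^*\in V(I)\}$, and then invokes Theorem~\ref{thm:robust_hier} verbatim. This yields the limit along the subsequence $(\nu_{dk})_k$, and monotonicity of the full sequence (which the paper only sketches) finishes the argument. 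You instead run the de~Finetti argument from scratch, at the cost of having to introduce $D=\max\{d,d'\}$ so that membership in $I_D^{\perp}$ actually forces $z_iz_i^*\in V(I)$; this is precisely the obstacle the paper's construction of $J$ sidesteps. The paper's route is shorter and more modular, while yours is more self-contained, avoids the somewhat delicate definition of $J$, and handles the full sequence $(\nu_k)_k$ directly rather than passing through a subsequence. The ``routine'' pieces you flag (the SDP reformulation, the ideal identity $\im(\tr_1\sigma_{k+1})\subseteq I_k^{\perp}$, and monotonicity) are indeed routine and match what the paper leaves to the reader.
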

\begin{proof}
The fact that the sequence $\nu_d,\nu_{d+1},\dots$ is non-decreasing can be easily shown using similar techniques as in the proof of Theorem~\ref{thm:robust_hier}.

Let $J \subseteq S^{\bullet}(S^d(\H^*))$ be the homogeneous ideal with $J_{\ell}=I_{\ell d} + S^{\ell d}(\H)^{\perp} \subseteq S^{\ell}(S^d(\H^*))$ for each $\ell$. Then $V(J)=\{z^{\otimes d} : zz^* \in V(I)\}$, and we have
%
\ba
\min_{\substack{ zz^* \in V(I)}} H(z,z^*)&=\min_{\substack{ zz^* \in V(I)}} \ip{z^{\otimes d}}{H z^{\otimes d}}\\
&=\min_{\substack{ \psi\psi^* \in V(J)}} \ip{\psi}{H \psi}.
\ea
By Theorem~\ref{thm:robust_hier}, this is equal to $\lim_{k\rightarrow \infty} \mu_{dk}$, where we define $\mu_{dk}=\lambda_{\min}(\Pi_{J,k} H_{dk} \Pi_{J,k})$. Note that $J_k^{\perp}=I_{dk}^{\perp} \cap S^{dk}(\H)$, so $\Pi_{J,k}$ is the projection onto the image of $\Pi_{I,dk}$. It follows that $\mu_{dk}=\nu_{dk}=\lambda_{\min}(\Pi_{I,dk} H_{dk} \Pi_{I,dk})$. So the sequence $\nu_d,\nu_{d+1},\dots $ has the same limit as the sequence $\mu_{d},\mu_{2d},\dots$ and the statement holds.
\end{proof}
\subsection{Relationship to HSOS}
Theorem~\ref{thm:hermitian} is closely related to the HSOS hierarchy for Hermitian polynomial optimization~\cite[Theorem 2.1]{d2009polynomial} (see also~\cite[Section 2.1]{wang2023real}). For a subset $S \subseteq S^{\ud}(\H^*)$, let $|S|^2 \subseteq \Pos(S^{\leq 2d}(\H))$ be the set of polynomials that can be written as sums of Hermitian squares $|f(z)|^2$ of elements $f\in S$. Note that $f \in S^{\ud}(\H^*)$ satisfies $f(z)=0$ if and only if $|f(z)|^2\leq 0$. Using this to translate equality constraints into inequality constraints, and invoking~\cite[Lemma 1]{lyubich2005polynomial}, the result~\cite[Theorem 2.1]{d2009polynomial} gives the following hierarchy for Hermitian optimization under equality constraints:
\begin{theorem}\label{thm:hsos}
Let $H \in \Herm(S^{\ud}(\H))$ and let $I \subseteq S^{\bullet}(\H^*)$ be a homogeneous ideal. If $H(z,z^*)>0$ for all $zz^* \in V(I)$, then there exists $k \in \natural$ for which
\ba\label{eq:hsos}
H(z,z^*)\in \Pos(S^{\underline{k}}(\H))-|I_{\underline{k}}|^2+(1-\norm{z}^2) \Herm(S^{\underline{k}}(\H)).
\ea
\end{theorem}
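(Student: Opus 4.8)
The plan is to deduce Theorem~\ref{thm:hsos} from the Hermitian Positivstellensatz on the unit sphere, exactly along the lines sketched just before its statement. First I would fix generators: by Hilbert's basis theorem write $I=\langle f_1,\dots,f_p\rangle$ with each $f_i$ homogeneous. Since the generators determine the ideal, a vector $z\in\H$ satisfies $f(z)=0$ for all $f\in I$ if and only if $f_i(z)=0$ for all $i$, if and only if $-\abs{f_i(z)}^2\geq 0$ for all $i$. Hence the set
\[
K:=\bigl\{\,z\in\H:\ \norm{z}^2=1,\ \ -\abs{f_i(z)}^2\geq 0\ \text{for }i=1,\dots,p\,\bigr\}
\]
is exactly the unit-norm slice of the affine cone over $V(I)$; it is compact (it sits on the sphere), and by hypothesis $H(z,z^*)>0$ on $K$. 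This is the ``translation of equality constraints into inequality constraints'' referred to in the text.

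Next I would invoke the Hermitian Positivstellensatz. Since $H$ is a Hermitian polynomial strictly positive on the compact sphere-section $K$ carved out by the Hermitian inequalities $-\abs{f_i}^2\geq 0$, \cite[Theorem~2.1]{d2009polynomial} --- together with \cite[Lemma~1]{lyubich2005polynomial}, which supplies the base case on the sphere and handles the constraint $\norm{z}^2=1$ --- yields an integer $N$ and Hermitian sums of squares $\sigma_0,\sigma_1,\dots,\sigma_p$ for which the polynomial identity
\[
\norm{z}^{2N}\,H(z,z^*)\ =\ \sigma_0(z,z^*)\ -\ \sum_{i=1}^{p}\sigma_i(z,z^*)\,\abs{f_i(z)}^2
\]
holds identically on $\H$. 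If the cited statement is instead quoted in a ``preordering'' form involving products $\prod_{i\in S}\abs{f_i}^2=\abs{\prod_{i\in S}f_i}^2$, those products are again squared norms of elements of $I$ and can be absorbed into the $\sigma_0$ term or the $\sum_i\sigma_i\abs{f_i}^2$ term according to the parity of $\abs{S}$, so this causes no difficulty.

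Finally I would dehomogenize, which is where the $(1-\norm{z}^2)$ multiplier enters (it is genuinely needed because $H$ is allowed to have mixed bidegree, hence is only controlled on the sphere). Writing $\sigma_i=\sum_j\abs{q_{ij}}^2$ with holomorphic $q_{ij}$, each $q_{ij}f_i$ lies in $I$ (as $I$ is an ideal) and has bounded degree, so $\sum_i\sigma_i\abs{f_i}^2=\sum_{i,j}\abs{q_{ij}f_i}^2\in\abs{I_{\underline{k}}}^2$ once $k$ is large enough. Since $1-\norm{z}^{2N}=(1-\norm{z}^2)(1+\norm{z}^2+\dots+\norm{z}^{2(N-1)})$, adding $(1-\norm{z}^{2N})H$ to both sides of the displayed identity gives
\[
H\ =\ \Bigl(\sigma_0-\sum_{i=1}^{p}\sigma_i\abs{f_i}^2\Bigr)\ +\ (1-\norm{z}^2)\Bigl[(1+\norm{z}^2+\dots+\norm{z}^{2(N-1)})\,H\Bigr],
\]
and choosing $k$ to dominate the degrees of all three bracketed pieces places this in $\Pos(S^{\underline{k}}(\H))-\abs{I_{\underline{k}}}^2+(1-\norm{z}^2)\Herm(S^{\underline{k}}(\H))$, as required.

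I expect the only genuinely delicate points to be bookkeeping rather than conceptual: checking that the Hermitian inequalities $-\abs{f_i}^2\geq 0$ on the sphere really re-express membership in the cone over $V(I)$ (immediate, since generators cut out the ideal), and the degree accounting in the dehomogenization --- in particular that all three summands can be taken in $S^{\underline{k}}(\H)$ for a common $k$, and that $\sum_i\sigma_i\abs{f_i}^2\in\abs{I_{\underline{k}}}^2$. All the real substance is carried by the cited Hermitian Positivstellensatz \cite[Theorem~2.1]{d2009polynomial}; if that result is stated directly on the sphere, even the passage from the $\norm{z}^{2N}$-weighted identity to the $(1-\norm{z}^2)$-localized form is routine.
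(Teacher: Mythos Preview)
Your proposal is correct and follows exactly the route the paper indicates: the paper does not give an independent proof of Theorem~\ref{thm:hsos} but simply states it as the specialization of \cite[Theorem~2.1]{d2009polynomial} (with \cite[Lemma~1]{lyubich2005polynomial} handling the sphere constraint) obtained by rewriting each equality $f_i(z)=0$ as the Hermitian inequality $-\abs{f_i(z)}^2\geq 0$. You have spelled out precisely this derivation, including the degree bookkeeping and the dehomogenization via $(1-\norm{z}^{2N})=(1-\norm{z}^2)(1+\cdots+\norm{z}^{2(N-1)})$, so your write-up is a faithful expansion of the one-sentence justification given in the paper.
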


%

We now prove that the hierarchies described in Theorems~\ref{thm:hermitian} and~\ref{thm:hsos} are equivalent when $H$ is bihomogeneous, in the sense that $\nu_k >0$ if and only if~\eqref{eq:hsos} holds. The proof is similar in spirit to \cite[Proposition 2]{de2005equivalence}.

\begin{theorem}
Let $I \subseteq S^{\bullet}(\H^*)$ be a homogeneous ideal, let $H \in \Herm(S^{d}(\H)),$ and let $H_k=H(z,z^*) \norm{z}^{2(k-d)} \in \Herm(S^k(\H))$. The following statements are equivalent:
\begin{enumerate}
\item $H(z,z^*)\in \Pos(S^{\underline{k}}(\H))-|I_{\underline{k}}|^2+(1-\norm{z}^2) \Herm(S^{\underline{k}}(\H)).$
\item $H_k \in \Pos(S^{{k}}(\H)) - |I_{{k}}|^2$.
\item $\Pi_{I,k} H_k \Pi_{I,k} \in \Pos(S^k(\H))$.
\end{enumerate}
\end{theorem}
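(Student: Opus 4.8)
The plan is to derive the equivalence from the four implications $(2)\Rightarrow(3)$, $(2)\Rightarrow(1)$, $(1)\Rightarrow(2)$, $(3)\Rightarrow(2)$. First I would record the dictionary that puts the three conditions on common footing: $\Pi_{I,k}$ is the orthogonal projection of $S^k(\H)$ onto $I_k^{\perp}$; the orthogonal complement of $I_k^{\perp}$ inside $S^k(\H)$ is the image $\overline{I_k}\subseteq S^k(\H)$ of $I_k$ under the antilinear isomorphism $\H^*\cong\H$; and, identifying Hermitian polynomials with operators via Proposition~\ref{prop:polarization} and writing $|f(z)|^2$ as the rank-one operator $\ketbra{\bar f}{\bar f}$, the cone $|I_k|^2$ is exactly the set of $\sigma\in\Pos(S^k(\H))$ with $\im(\sigma)\subseteq\overline{I_k}$. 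Given this, $(2)\Rightarrow(3)$ is immediate: if $H_k=P-Q$ with $P\in\Pos(S^k(\H))$ and $Q\in|I_k|^2$, then $\Pi_{I,k}Q\Pi_{I,k}=0$ since $Q$ is supported on $(I_k^{\perp})^{\perp}$, so $\Pi_{I,k}H_k\Pi_{I,k}=\Pi_{I,k}P\Pi_{I,k}\in\Pos(S^k(\H))$. And $(2)\Rightarrow(1)$ is a short calculation: writing $H(z,z^*)=H_k+H(z,z^*)(1-\norm{z}^{2(k-d)})$ and using $1-\norm{z}^{2(k-d)}=(1-\norm{z}^2)\sum_{j=0}^{k-d-1}\norm{z}^{2j}$, the first summand is $P-Q\in\Pos(S^{\underline{k}}(\H))-|I_{\underline{k}}|^2$ and the second is $(1-\norm{z}^2)G_0$ with $G_0=H(z,z^*)\sum_{j}\norm{z}^{2j}\in\Herm(S^{\underline{k}}(\H))$.

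The interesting direction $(1)\Rightarrow(2)$ is a homogenization argument. Starting from an identity $H=P_0-Q_0+(1-\norm{z}^2)G$ witnessing $(1)$, I would first observe that comparing top bi-degrees on both sides forces $G\in\Herm(S^{\underline{k-1}}(\H))$. Next, introduce an auxiliary variable $z_0$ and apply the additive map sending a monomial $z^{\bfalpha}\bar z^{\bfbeta}\mapsto z_0^{k-|\bfalpha|}\bar z_0^{k-|\bfbeta|}z^{\bfalpha}\bar z^{\bfbeta}$; this turns the identity into $|z_0|^{2(k-d)}H=\widehat{P_0}-\widehat{Q_0}+(|z_0|^2-\norm{z}^2)\widehat{G}$, where $\widehat{P_0}=\sum_i|\widehat{p_i}|^2$ with $\widehat{p_i}=\sum_c z_0^{k-c}p_i^{(c)}$ is a Hermitian sum of squares of bi-degree $(k,k)$, $\widehat{Q_0}$ is the analogous sum of squares of degree-$k$ elements of the homogenization of $I$, and $\widehat{G}$ is Hermitian. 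Finally, restrict to $z_0=\norm{z}e^{i\theta}$ and average over $\theta\in[0,2\pi)$: the multiplier term drops out because $|z_0|^2=\norm{z}^2$ there, the left-hand side becomes $\norm{z}^{2(k-d)}H=H_k$, the cross terms $e^{i(c'-c)\theta}$ in $\widehat{P_0}$ and $\widehat{Q_0}$ integrate to $0$, and one is left with $H_k=\big(\sum_{i,c}\norm{z}^{2(k-c)}|p_i^{(c)}|^2\big)-\big(\sum_{j,c}\norm{z}^{2(k-c)}|f_j^{(c)}|^2\big)$, where the first bracket lies in $\Pos(S^k(\H))$ and the second in $|I_k|^2$ (each $f_j^{(c)}\in I_c$, and $\norm{z}^{2(k-c)}$ is itself a sum of squares of degree-$(k-c)$ monomials). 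That is exactly $(2)$.

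The last implication, $(3)\Rightarrow(2)$, is the one I expect to require the most care. Decomposing $S^k(\H)=I_k^{\perp}\oplus\overline{I_k}$, write $H_k$ in block form $\left(\begin{smallmatrix}A&B\\B^{\dagger}&C\end{smallmatrix}\right)$, so $(3)$ reads $A=\Pi_{I,k}H_k\Pi_{I,k}\succeq0$. To produce a decomposition as in $(2)$ I would add to $H_k$ an operator supported on $\overline{I_k}$ — necessarily $\left(\begin{smallmatrix}0&0\\0&Q_0\end{smallmatrix}\right)$ — and choose $Q_0\succeq0$ large enough that $P:=\left(\begin{smallmatrix}A&B\\B^{\dagger}&C+Q_0\end{smallmatrix}\right)\succeq0$; by the Schur complement this requires $\im(B)\subseteq\im(A)$ together with $Q_0\succeq B^{\dagger}A^{+}B-C$. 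The containment $\im(B)\subseteq\im(A)$ is automatic once $A$ is positive definite on $I_k^{\perp}$ — the regime corresponding to $\nu_k>0$ in the notation of Theorem~\ref{thm:hermitian} — and arguing it in general is the delicate point. Equivalently, this is where one invokes the semidefinite-programming duality identifying $\Pos(S^k(\H))-|I_k|^2$ (up to closure) with $\{\,N:\Pi_{I,k}N\Pi_{I,k}\succeq0\,\}$, whose dual cone $\{\sigma\in\Pos(S^k(\H)):\im(\sigma)\subseteq I_k^{\perp}\}$ is precisely the feasible set of the eigenvalue SDP behind $\nu_k$. Pinning this down cleanly is the main obstacle.
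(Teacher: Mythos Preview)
Your treatment of $(2)\Rightarrow(3)$ and $(2)\Rightarrow(1)$ is correct and matches the paper's. For $(1)\Rightarrow(2)$ you take a different route: you homogenize with an auxiliary variable $z_0$, specialize to $z_0=\norm{z}e^{i\theta}$, and average over $\theta$. The paper instead substitutes $\hat z=z/\norm{z}$ directly (killing the $(1-\norm{z}^2)S$ term), multiplies the resulting identity $H(\hat z,\hat z^*)=P(\hat z,\hat z^*)-Q(\hat z,\hat z^*)$ through by $\norm{z}^{2k}$, and then simply extracts the bi-degree $(k,k)$ component, observing that the $(k,k)$ part of $P\cdot\norm{z}^{2k}$ stays in $\Pos(S^k(\H))$ and that of $Q\cdot\norm{z}^{2k}$ stays in $|I_k|^2$ by homogeneity of $I$. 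Your averaging accomplishes the same extraction; the paper's version is just shorter.

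Where your proposal departs substantively from the paper is $(3)\Rightarrow(2)$. The paper declares $(2)\iff(3)$ ``straightforward'' in one sentence, but your caution is warranted---indeed, the implication $(3)\Rightarrow(2)$ as written does not hold. Your block-matrix analysis pinpoints exactly why: one needs $\im(B)\subseteq\im(A)$, and nothing in the hypothesis forces this when $A=\Pi_{I,k}H_k\Pi_{I,k}$ is merely $\succeq 0$. A concrete failure: take $d=k=1$, $\H=\complex^2$, $I_1=\spn\{x_2\}$ (so $I_1^\perp=\spn\{e_1\}$), and $H=H_1=\bigl(\begin{smallmatrix}0&1\\1&0\end{smallmatrix}\bigr)$. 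Then $\Pi_{I,1}H_1\Pi_{I,1}=0\succeq 0$, yet $H_1+q\,e_2e_2^*=\bigl(\begin{smallmatrix}0&1\\1&q\end{smallmatrix}\bigr)$ is never positive semidefinite, so $(2)$ fails. (Equivalently, $\Pos(S^k(\H))-|I_k|^2$ is not closed, and $(3)$ characterizes its closure.) The paper's surrounding text makes clear that the intended content is $\nu_k>0\iff(1)$, and with the strict inequality your Schur-complement argument goes through cleanly; so your instinct that ``$A$ positive definite on $I_k^\perp$'' is the right regime is exactly correct, and there is no further obstacle to pin down.
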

\begin{proof}
The equivalence $(2\iff 3)$ is straightforward: Statement 3 holds if and only if $H_k =P-Q$ for some $P,Q \in \Pos(S^k(\H))$ with $\im(Q)\subseteq I_k^{\perp}$. But the set of such $Q$ is precisely $|I_k|^2$.

$(2 \Rightarrow 1)$:  Statement $2$ implies that
\ba
H_k&=H\cdot (1-1+\norm{z}^2)^{k-d} \\
&=H+H\sum_{s=1}^{k-d} \binom{k-d}{s} (\norm{z}^2-1)^s \in \Pos(S^{{k}}(\H)) - |I_{{k}}|^2.
\ea
Subtracting terms containing nonzero powers of $(\norm{z}^2-1)$ from both sides, we obtain
\ba
H\in \Pos(S^{{k}}(\H)) - |I_{{k}}|^2 + (1-\norm{z}^2)\Herm(S^k(\H)).
\ea

$(1 \Rightarrow 2)$: Let $P \in \Pos(S^{\underline{k}}(\H)), Q \in |I_{\underline{k}}|^2, S \in \Herm(S^{\underline{k}}(\H))$ be such that
\ba
H=P-Q+(1-\norm{z}^2)S.
\ea
Setting $\hat{z}=z/\norm{z}$, we have
\ba
H(\hat{z},\hat{z}^*)=P(\hat{z},\hat{z}^*)-Q(\hat{z},\hat{z}^*).
\ea
%
Thus,
\ba
H_k=H(\hat{z},\hat{z}^*)\norm{z}^{2k} = P(\hat{z},\hat{z}^*)\norm{z}^{2k} - Q(\hat{z},\hat{z}^*) \norm{z}^{2k}.
\ea
Since $H_k$ is bihomogeneous of degree $(k,k)$, every other component must cancel, so
\ba
H_k(z,z^*)=P_{(d,d)}(z,z^*) \norm{z}^{2(k-d)}-Q_{(d,d)}(z,z^*)\norm{z}^{2(k-d)}
\ea
where $P_{(d,d)}$ is the component of $P$ of bi-degree $(d,d)$, and similarly for $Q$. Clearly $P_{(d,d)} \norm{z}^{2(k-d)} \in \Pos(S^k(\H))$, and since $I$ is homogeneous, $Q_{(d,d)}\in  |I_{ d}|^2$, so ${Q}_{(d,d)} \norm{z}^{2(k-d)} \in |I_{ k}|^2$. This completes the proof.
\end{proof}

\section{Acknowledgments}
We thank Aravindan Vijayaraghavan for valuable insights in the early stages of this work. We thank Sujit Rao for helpful discussions. H.D. was supported by the National Science Foundation under Grant No. DMS-2147769. N.J. was supported by NSERC Discovery Grant RGPIN2022-04098. B.L. acknowledges that this material is based upon work supported by the National Science Foundation under Award No. DMS-2202782.

%

\bibliographystyle{alpha}
\bibliography{references}

\newcommand{\etalchar}[1]{$^{#1}$}
\begin{thebibliography}{DLMS{\etalchar{+}}22}

\bibitem[AM94]{atiyah1994introduction}
Michael~F. Atiyah and Ian~G. MacDonald.
\newblock {\em Introduction To Commutative Algebra}.
\newblock Addison-Wesley series in mathematics. Avalon Publishing, 1994.

\bibitem[Art11]{artin2011algebra}
Michael Artin.
\newblock {\em Algebra}.
\newblock Pearson Education, 2011.

\bibitem[ASW11]{aubrun2011hastings}
Guillaume Aubrun, Stanis{\l}aw Szarek, and Elisabeth Werner.
\newblock Hastings's additivity counterexample via {D}voretzky's theorem.
\newblock {\em Communications in Mathematical Physics}, 305:85--97, 2011.

\bibitem[BDLG23]{bernardi2023dimension}
Alessandra Bernardi, Claudia De~Lazzari, and Fulvio Gesmundo.
\newblock Dimension of tensor network varieties.
\newblock {\em Communications in Contemporary Mathematics}, 25(10):2250059,
  2023.

\bibitem[BDM{\etalchar{+}}99]{BDMSST99}
Charles~H. Bennett, David~P. DiVincenzo, Tal Mor, Peter~W. Shor, John~A.
  Smolin, and Barbara~M. Terhal.
\newblock Unextendible product bases and bound entanglement.
\newblock {\em Physical Review Letters}, 82:5385--5388, 1999.

\bibitem[BFFM20]{benatti2020entanglement}
Fabio Benatti, Roberto Floreanini, Fabio Franchini, and Ugo Marzolino.
\newblock Entanglement in indistinguishable particle systems.
\newblock {\em Physics Reports}, 878:1--27, 2020.

\bibitem[Bha06]{Bha06}
B.~V.~Rajarama Bhat.
\newblock A completely entangled subspace of maximal dimension.
\newblock {\em International Journal of Quantum Information}, 4:325--330, 2006.

\bibitem[BKS17]{barak2017quantum}
Boaz Barak, Pravesh~K. Kothari, and David Steurer.
\newblock Quantum entanglement, sum of squares, and the log rank conjecture.
\newblock In {\em Proceedings of the 49th Annual ACM SIGACT Symposium on Theory
  of Computing}, 2017.

\bibitem[BLF22]{barthel2022closedness}
Thomas Barthel, Jianfeng Lu, and Gero Friesecke.
\newblock On the closedness and geometry of tensor network state sets.
\newblock {\em Letters in Mathematical Physics}, 112(4):72, 2022.

\bibitem[CD99]{catlin1999isometric}
David~W. Catlin and John~P. D'Angelo.
\newblock An isometric imbedding theorem for holomorphic bundles.
\newblock {\em Mathematical Research Letters}, 6(1):43--60, 1999.

\bibitem[CF12]{crismale2012finetti}
Vitonofrio Crismale and Francesco Fidaleo.
\newblock De {F}inetti theorem on the {CAR} algebra.
\newblock {\em Communications in Mathematical Physics}, 315:135--152, 2012.

\bibitem[CLO13]{cox2013ideals}
David Cox, John Little, and Donal O'Shea.
\newblock {\em Ideals, varieties, and algorithms: An introduction to
  computational algebraic geometry and commutative algebra}.
\newblock Springer Science \& Business Media, 2013.

\bibitem[CMW08]{CMW08}
Toby~S. Cubitt, Ashley Montanaro, and Andreas Winter.
\newblock On the dimension of subspaces with bounded {S}chmidt rank.
\newblock {\em Journal of Mathematical Physics}, 49:022107, 2008.

\bibitem[CPGSV21]{cirac2021matrix}
J.~Ignacio Cirac, David Perez-Garcia, Norbert Schuch, and Frank Verstraete.
\newblock Matrix product states and projected entangled pair states: Concepts,
  symmetries, theorems.
\newblock {\em Reviews of Modern Physics}, 93(4):045003, 2021.

\bibitem[D'A11]{d2011hermitian}
John~P. D'Angelo.
\newblock Hermitian analogues of {H}ilbert's 17-th problem.
\newblock {\em Advances in Mathematics}, 226(5):4607--4637, 2011.

\bibitem[D'A19a]{d2019hermitian}
John~P. D'Angelo.
\newblock {\em Hermitian analysis}.
\newblock Springer, 2019.

\bibitem[DA19b]{demianowicz2019entanglement}
Maciej Demianowicz and Remigiusz Augusiak.
\newblock Entanglement of genuinely entangled subspaces and states: Exact,
  approximate, and numerical results.
\newblock {\em Physical Review A}, 100(6):062318, 2019.

\bibitem[dKLP05]{de2005equivalence}
Etienne de~Klerk, Monique Laurent, and Pablo Parrilo.
\newblock On the equivalence of algebraic approaches to the minimization of
  forms on the simplex.
\newblock {\em Positive Polynomials in Control}, pages 121--132, 2005.

\bibitem[DLMS{\etalchar{+}}22]{de2022linear}
Claudia De~Lazzari, Harshit~J Motwani, Tim Seynnaeve, et~al.
\newblock The linear span of uniform matrix product states.
\newblock {\em SIGMA. Symmetry, Integrability and Geometry: Methods and
  Applications}, 18:099, 2022.

\bibitem[DP09]{d2009polynomial}
John~P. D'Angelo and Mihai Putinar.
\newblock Polynomial optimization on odd-dimensional spheres.
\newblock {\em Emerging applications of algebraic geometry}, pages 1--15, 2009.

\bibitem[DPS04]{DPS04}
Andrew~C. Doherty, Pablo~A. Parrilo, and Federico~M. Spedalieri.
\newblock A complete family of separability criteria.
\newblock {\em Physical Review A}, 69:022308, 2004.

\bibitem[DVA24]{demianowicz2024completely}
Maciej Demianowicz, Kajetan Vogtt, and Remigiusz Augusiak.
\newblock Completely entangled subspaces of entanglement depth k.
\newblock {\em Physical Review A}, 110(1):012403, 2024.

\bibitem[Eis05]{eisenbud2005geometry}
David Eisenbud.
\newblock {\em The geometry of syzygies}, volume 229.
\newblock Citeseer, 2005.

\bibitem[Eis13]{eisenbud2013commutative}
David Eisenbud.
\newblock {\em Commutative algebra: with a view toward algebraic geometry},
  volume 150.
\newblock Springer Science \& Business Media, 2013.

\bibitem[Faw21]{fawzi2021set}
Hamza Fawzi.
\newblock The set of separable states has no finite semidefinite representation
  except in dimension 3$\times$2.
\newblock {\em Communications in Mathematical Physics}, 386:1319--1335, 2021.

\bibitem[FH13]{fulton2013representation}
William Fulton and Joe Harris.
\newblock {\em Representation theory: a first course}, volume 129.
\newblock Springer Science \& Business Media, 2013.

\bibitem[Gal79]{galligo1979theoreme}
Andr{\'e} Galligo.
\newblock Th{\'e}oreme de division et stabilit{\'e} en g{\'e}om{\'e}trie
  analytique locale.
\newblock In {\em Annales de l'institut Fourier}, volume~29, pages 107--184,
  1979.

\bibitem[GKM11]{grabowski2011entanglement}
Janusz Grabowski, Marek Ku{\'s}, and Giuseppe Marmo.
\newblock Entanglement for multipartite systems of indistinguishable particles.
\newblock {\em Journal of Physics A: Mathematical and Theoretical},
  44(17):175302, 2011.

\bibitem[GT09]{GT09}
Otfried G\"{u}hne and G\'{e}za T\'{o}th.
\newblock Entanglement detection.
\newblock {\em Physics Reports}, 474:1--75, 2009.

\bibitem[GTB05]{guhne2005multipartite}
Otfried G{\"u}hne, G{\'e}za T{\'o}th, and Hans~J Briegel.
\newblock Multipartite entanglement in spin chains.
\newblock {\em New Journal of Physics}, 7(1):229, 2005.

\bibitem[GW07]{GW07}
Gilad Gour and Nolan~R. Wallach.
\newblock Entanglement of subspaces and error-correcting codes.
\newblock {\em Physical Review A}, 76:042309, 2007.

\bibitem[Hal17]{halmos2017finite}
Paul~R. Halmos.
\newblock {\em Finite-dimensional vector spaces}.
\newblock Courier Dover Publications, 2017.

\bibitem[Har13]{harris2013algebraic}
Joe Harris.
\newblock {\em Algebraic Geometry: A First Course}.
\newblock Graduate Texts in Mathematics. Springer New York, 2013.

\bibitem[Has09]{hastings2009superadditivity}
Matthew Hastings.
\newblock Superadditivity of communication capacity using entangled inputs.
\newblock {\em Nature Physics}, 5(4):255--257, 2009.

\bibitem[HG20]{HG20}
Felix Huber and Markus Grassl.
\newblock Quantum codes of maximal distance and highly entangled subspaces.
\newblock {\em Quantum}, 4:284, 2020.

\bibitem[HHH01]{horodecki2001separability}
Micha{\l} Horodecki, Pawe{\l} Horodecki, and Ryszard Horodecki.
\newblock Separability of n-particle mixed states: necessary and sufficient
  conditions in terms of linear maps.
\newblock {\em Physics Letters A}, 283(1-2):1--7, 2001.

\bibitem[HHHH09]{HHH09}
Ryszard Horodecki, Pawel Horodecki, Michal Horodecki, and Karol Horodecki.
\newblock Quantum entanglement.
\newblock {\em Reviews of Modern Physics}, 81:865--942, 2009.

\bibitem[HK71]{hoffmann1971linear}
Kenneth Hoffmann and Ray~Alden Kunze.
\newblock {\em Linear algebra}.
\newblock Prentice-Hall New Jersey, 1971.

\bibitem[HLW06]{hayden2006aspects}
Patrick Hayden, Debbie Leung, and Andreas Winter.
\newblock Aspects of generic entanglement.
\newblock {\em Communications in Mathematical Physics}, 265:95--117, 2006.

\bibitem[HM10]{HM10}
Aram Harrow and Ashley Montanaro.
\newblock An efficient test for product states with applications to quantum
  merlin-arthur games.
\newblock In {\em Proceedings of the 51st Annual IEEE Symposium on Foundations
  of Computer Science (FOCS)}, pages 633--642, 2010.

\bibitem[Hoc73]{HOCHSTER197340}
M.~Hochster.
\newblock Grassmannians and their {S}chubert subvarieties are arithmetically
  {C}ohen-{M}acaulay.
\newblock {\em Journal of Algebra}, 25(1):40--57, 1973.

\bibitem[Hor97]{Hor97}
Pawel Horodecki.
\newblock Separability criterion and inseparable mixed states with positive
  partial transposition.
\newblock {\em Physics Letters A}, 232:333--339, 1997.

\bibitem[IN66]{itzykson1966unitary}
Claude Itzykson and Michael Nauenberg.
\newblock Unitary groups: Representations and decompositions.
\newblock {\em Reviews of Modern Physics}, 38(1):95, 1966.

\bibitem[Jac09]{jacobson2009finite}
Nathan Jacobson.
\newblock {\em Finite-dimensional division algebras over fields}.
\newblock Springer Science \& Business Media, 2009.

\bibitem[JLV22]{JLV22a}
Nathaniel Johnston, Benjamin Lovitz, and Aravindan Vijayaraghavan.
\newblock Complete hierarchy of linear systems for certifying quantum
  entanglement of subspaces.
\newblock {\em Physical Review A}, 106:062443, 2022.

\bibitem[JLV23]{JLV23a}
Nathaniel Johnston, Benjamin Lovitz, and Aravindan Vijayaraghavan.
\newblock Computing linear sections of varieties: quantum entanglement, tensor
  decompositions and beyond.
\newblock {\em 2023 IEEE 64th Annual Symposium on Foundations of Computer
  Science (FOCS)}, 2023.

\bibitem[KLC13]{kraus2013ground}
Christina~V. Kraus, Maciej Lewenstein, and J.~Ignacio Cirac.
\newblock Ground states of fermionic lattice hamiltonians with permutation
  symmetry.
\newblock {\em Physical Review A}, 88(2):022335, 2013.

\bibitem[KM09]{koenig2009most}
Robert Koenig and Graeme Mitchison.
\newblock A most compendious and facile quantum de {F}inetti theorem.
\newblock {\em Journal of Mathematical Physics}, 50(1):012105, 2009.

\bibitem[KZE17]{krumnow2017fermionic}
Christian Krumnow, Zolt{\'a}n Zimbor{\'a}s, and Jens Eisert.
\newblock A fermionic de {F}inetti theorem.
\newblock {\em Journal of Mathematical Physics}, 58(12), 2017.

\bibitem[Lan12]{landsberg2012tensors}
Joseph~M. Landsberg.
\newblock {\em Tensors: {G}eometry and {A}pplications}.
\newblock Graduate studies in mathematics. American Mathematical Society, 2012.

\bibitem[Las78]{LASCOUX1978202}
Alain Lascoux.
\newblock Syzygies des vari{\'e}t{\'e}s d{\'e}terminantales.
\newblock {\em Advances in Mathematics}, 30(3):202--237, 1978.

\bibitem[LJ22]{LJ21}
Benjamin Lovitz and Nathaniel Johnston.
\newblock Entangled subspaces and generic local state discrimination with
  pre-shared entanglement.
\newblock {\em Quantum}, 6:760, 2022.

\bibitem[LPV{\etalchar{+}}14]{lucke2014detecting}
Bernd L{\"u}cke, Jan Peise, Giuseppe Vitagliano, Jan Arlt, Luis Santos,
  G{\'e}za T{\'o}th, and Carsten Klempt.
\newblock Detecting multiparticle entanglement of {D}icke states.
\newblock {\em Physical Review Letters}, 112(15):155304, 2014.

\bibitem[LS05]{lyubich2005polynomial}
Yu~Lyubich and O.~Shatalova.
\newblock Polynomial functions on the classical projective spaces.
\newblock {\em Studia Mathematica}, 1(170):77--87, 2005.

\bibitem[MD16]{morales2016veronese}
Marcel Morales and Nguyen~Thi Dung.
\newblock Veronese transform and {C}astelnuovo-{M}umford regularity of modules.
\newblock {\em Turkish Journal of Mathematics}, 40(4):838--849, 2016.

\bibitem[NC00]{NC00}
Michael.~A. Nielsen and Isaac~L. Chuang.
\newblock {\em Quantum computation and quantum information}.
\newblock Cambridge University Press, 2000.

\bibitem[Par04]{Par04}
K.~R. Parthasarathy.
\newblock On the maximal dimension of a completely entangled subspace for
  finite level quantum systems.
\newblock {\em Proceedings Mathematical Sciences}, 114:365--374, 2004.

\bibitem[Per96]{Per96}
Asher Peres.
\newblock Separability criterion for density matrices.
\newblock {\em Physical Review Letters}, 77:1413--1415, 1996.

\bibitem[RRR{\etalchar{+}}21]{rajchgot2021degrees}
Jenna Rajchgot, Yi~Ren, Colleen Robichaux, Avery St.~Dizier, and Anna Weigandt.
\newblock Degrees of symmetric {G}rothendieck polynomials and
  {C}astelnuovo-{M}umford regularity.
\newblock {\em Proceedings of the American Mathematical Society},
  149(4):1405--1416, 2021.

\bibitem[SBL01]{sanpera2001schmidt}
Anna Sanpera, Dagmar Bru{\ss}, and Maciej Lewenstein.
\newblock Schmidt-number witnesses and bound entanglement.
\newblock {\em Physical Review A}, 63(5):050301, 2001.

\bibitem[Sch03]{schenck2003computational}
Hal Schenck.
\newblock {\em Computational algebraic geometry}, volume~58.
\newblock Cambridge University Press, 2003.

\bibitem[Sha13]{shafarevich1988basic}
Igor Shafarevich.
\newblock {\em Basic Algebraic Geometry {I}: Varieties in Projective Space}.
\newblock Springer, 2013.

\bibitem[SS19]{SS19}
Akshata~H. Shenoy and R.~Srikanth.
\newblock Maximally nonlocal subspaces.
\newblock {\em Journal of Physics A: Mathematical and Theoretical},
  52(9):095302, 2019.

\bibitem[SU08]{seevinck2008partial}
Michael Seevinck and Jos Uffink.
\newblock Partial separability and entanglement criteria for multiqubit quantum
  states.
\newblock {\em Physical Review A}, 78(3):032101, 2008.

\bibitem[Wat18]{Wat18}
John Watrous.
\newblock {\em The Theory of Quantum Information}.
\newblock Cambridge University Press, 2018.

\bibitem[Wey03]{Weyman_2003}
Jerzy Weyman.
\newblock {\em Cohomology of Vector Bundles and Syzygies}.
\newblock Cambridge Tracts in Mathematics. Cambridge University Press, 2003.

\bibitem[WG03]{wei2003geometric}
Tzu-Chieh Wei and Paul~M Goldbart.
\newblock Geometric measure of entanglement and applications to bipartite and
  multipartite quantum states.
\newblock {\em Physical Review A}, 68(4):042307, 2003.

\bibitem[Wil13]{wilde2013quantum}
Mark~M. Wilde.
\newblock {\em Quantum information theory}.
\newblock Cambridge university press, 2013.

\bibitem[WM23]{wang2023real}
Jie Wang and Victor Magron.
\newblock A real moment-{HSOS} hierarchy for complex polynomial optimization
  with real coefficients.
\newblock {\em arXiv preprint}, arXiv:2308.14631, 2023.

\bibitem[ZZZ24]{zhu2024quantifying}
Xuanran Zhu, Chao Zhang, and Bei Zeng.
\newblock Quantifying subspace entanglement with geometric measures.
\newblock {\em Physical Review A}, 110(1):012452, 2024.

\end{thebibliography}

\appendix

\end{document}